\documentclass[a4paper,UKenglish,cleveref, autoref, thm-restate]{lipics-v2021}

\hideLIPIcs  

\title{Completely Independent Steiner Trees} 

 \author{Anil Maheshwari}{Carleton University, Ottawa, Canada}{anil@scs.carleton.ca}{https://orcid.org/0000-0002-1274-4598}{}

\author{Karthik Murali}{Simon Fraser University, Burnaby, Canada}{kmurali@sfu.ca}{https://orcid.org/0009-0003-3985-3609}{Part of the research work was done when the author was a student at Carleton University, Ottawa, Canada.}

\author{Michiel Smid}{Carleton University, Ottawa, Canada}{michiel@scs.carleton.ca}{https://orcid.org/0000-0003-3683-9054}{}

\authorrunning{A.Maheshwari, K.Murali and M.Smid} 

\Copyright{Anil Maheshwari, Karthik Murali and Michiel Smid} 

\ccsdesc[500]{Mathematics of computing~Graph Theory}

\keywords{CIST, Steiner Trees, Pendant Steiner Trees} 

\funding{This research was funded in part by NSERC.}

\nolinenumbers 

\usepackage{cleveref}
\usepackage{mdframed}
\usepackage{mathtools}
\DeclarePairedDelimiter{\floor}{\lfloor}{\rfloor}
\DeclarePairedDelimiter{\ceil}{\lceil}{\rceil}

\theoremstyle{plain}   

\usepackage{cite}
\usepackage{tcolorbox}

\begin{document}

\maketitle

\begin{abstract}
Spanning trees are fundamental for efficient communication in networks. For fault-tolerant communication, it is desirable to have multiple spanning trees to ensure resilience against failures of nodes and edges. To this end, various notions of disjoint or independent spanning trees have been studied, including edge-disjoint, node/edge-independent, and completely independent spanning trees. Alongside these, several Steiner variants have also been investigated, where the trees are required to span a designated subset of vertices called terminals. For instance, the study of edge-disjoint spanning trees has been extended to edge-disjoint Steiner trees; a stronger variant is the problem of internally disjoint Steiner trees, where any two Steiner trees intersect exactly in the terminals.

In this paper, we investigate the Steiner analogue of completely independent spanning trees, which we call \emph{completely independent Steiner trees}. A set of Steiner trees is completely independent if, for every pair of terminals $u,v$, the $(u,v)$-paths in all the Steiner trees are internally vertex-disjoint and edge-disjoint. This notion generalizes both completely independent spanning trees and internally disjoint Steiner trees. We provide a systematic study of completely independent Steiner trees from structural, algorithmic, and complexity-theoretic perspectives. In particular, we present several characterisations, connectivity bounds, algorithms, hardness results, and applications to special graph classes such as planar graphs and graphs of bounded treewidth. Along the way, we also introduce a directed variant of completely independent spanning trees via an equivalence with completely independent Steiner trees. 
\end{abstract}

\newpage
\setcounter{page}{1}

\section{Introduction}

Spanning trees are important for computer networks since they represent structures through which information can be efficiently broadcasted throughout the network. For fault-tolerant broadcast, it is ideal to have spanning trees that are independent or disjoint in some respect. There is a vast literature on the many different types of spanning trees that serve this purpose---examples include edge-disjoint spanning trees, node/edge-independent spanning trees, and completely independent spanning trees. Below, we give an overview of these different types of spanning trees.

A set of spanning trees is \textit{edge-disjoint} if no two trees share a common edge. Such trees are useful for network reliability, since any network having $k$ edge-disjoint spanning trees is resilient to the failure of up to $(k-1)$ edges. The problem of finding the maximum number of edge-disjoint spanning trees in a graph is called the \textit{spanning-tree packing problem}. A classical result due to Tutte and Nash-Williams states that if a graph is $2k$-edge-connected, then there exist $k$ edge-disjoint spanning trees \cite{Tutte1961,NashWilliams1961}. 

Another type of spanning trees, especially useful for problems related to single-source broadcasting, is independent spanning trees \cite{itai}. A set of spanning trees rooted at a vertex $r$ is \textit{node-independent} (resp. \textit{edge-independent}) if, for every vertex $u$, the $(u,r)$-paths in the all the trees are internally node-disjoint (resp. edge-disjoint). Every set of node-independent spanning trees is also edge-independent, but not conversely. And, it is not necessary that a set of node/edge-independent spanning trees also be edge-disjoint spanning trees. Notice that if a communication network has $k$ node-independent (resp. edge-independent) rooted spanning trees, then the failure of any $k-1$ nodes other than the root (resp. $k-1$ edges) does not disrupt communication with the root. It has been conjectured that if a graph is $k$-vertex-connected (resp. $k$-edge-connected), then there are $k$ node-independent (resp. edge-independent) spanning trees rooted at any arbitrary vertex \cite{chartrand,itai}. Despite almost four decades since these conjectures were posed, they remain unsolved, except for very small values of $k$ \cite{node_best,edge_best}. Independent spanning trees have been studied for various types of interconnection networks such as hypercubes, butterfly networks, pancake networks, bubble-sort networks, etc; see \cite{survey} for a detailed survey on this topic.

\textit{Completely independent spanning trees} was introduced by Hasunuma as a generalisation of node-independent spanning trees \cite{hasunuma}. A set of spanning trees is completely independent if, for \textit{every} pair of vertices $u,v$, the $(u,v)$-paths in the trees are both edge-disjoint and internally node-disjoint. In other words, a set of completely independent spanning trees is a set of node-independent spanning trees, where any vertex of the graph can play the role of the root. It was initially conjectured that every $2k$-vertex-connected graph has $k$ completely independent spanning trees \cite{maximal_planar}. But this was later shown to be false---for every $k \geq 2$, there is a $k$-vertex-connected graph for which there do not exist two completely independent spanning trees \cite{peterfalvi}.

A natural generalisation of all the above problems is to consider their Steiner versions, where we do not require trees to span all vertices of $G$, but only a select subset of them, called \textit{terminals}. A \textit{Steiner tree} in a graph is any tree that spans all the terminals. Apart from communication networks, Steiner trees have well-known applications in VLSI design and phylogeny \cite{Steiner_book}. The Steiner version of the spanning-tree packing problem is the \textit{Steiner-tree packing problem}, where the goal is to find a maximum set of edge-disjoint Steiner trees \cite{kriesell}. A stronger version of this problem requires that any two trees intersect exactly in the set of terminals; in other words, the edges and non-terminal vertices of any two trees are disjoint. Such a set of trees is called \textit{internally disjoint Steiner trees} \cite{rainbow_trees} or \textit{element disjoint Steiner trees} \cite{cheriyan_talg}. Let $R$ be a set of terminals in a graph $G$. Notice that when $|R| = 2$, a maximum set of internally disjoint (resp. edge-disjoint) Steiner trees is simply a maximum set of vertex-disjoint (resp. edge-disjoint) paths between the two vertices of $R$. When we consider minimising this among all possible terminals of size $2$ in the graph, then by Menger's theorem, this value is exactly the vertex connectivity (resp. edge connectivity) of the graph. Hence, the problem of finding the maximum number of internally disjoint (resp. edge-disjoint) Steiner trees, minimised over all possible choice of $k$ terminals, has been called the \textit{generalised $k$-vertex-connectivity of graphs} \cite{chartrand_general} or the \textit{$k$-tree-connectivity of graphs} \cite{Hager} (resp. \textit{generalised $k$-edge-connectivity of graphs} \cite{general_edge}). 
The interested reader may refer to the book \cite{tree_book} by Li and Mao for a detailed treatment on the generalised connectivity of graphs.

\subparagraph{Our Contribution.}

In this paper, we study the Steiner version of completely independent spanning trees. Given a graph $G$ and a set of terminals $R$, we call a set of Steiner trees completely independent if, for any pair of terminals $u,v \in R$, the $(u,v)$-paths in the trees are internally vertex-disjoint and edge-disjoint. Independently of our work, the concept of completely independent Steiner trees was also studied very recently by Yuan, Liu, Lin and Liu \cite{cisst}. While their focus is on the tree-connectivity version of the problem, mainly for complete graphs and complete bipartite graphs, our focus here is to provide a comprehensive study on the various structural characterisations, algorithms, hardness results, and applications to different graph classes. We start by formally introducing various definitions and notation in \Cref{sec: prelims}. Then, in \Cref{sec: characterisations}, we give three different ways of characterising completely independent Steiner trees---\Cref{subsec: structural characterisation} gives a characterisation based on structure, \Cref{subsec: connected r-dominating sets} gives a characterisation through connected dominating sets, and \Cref{subsec: cisa} gives another characterisation through directed completely independent spanning trees. As part of \Cref{subsec: cisa}, we also briefly study the concept of completely independent spanning trees in digraphs, a topic which, to the best of our knowledge, has not been studied before. Accompanying these characterisations, we also give applications to graph classes such as planar graphs and graphs of bounded treewidth. In \Cref{sec: connectivity}, we investigate the relation between vertex connectivity and completely independent Steiner trees. We show that if a graph $G$ is $10kr$-vertex-connected, then $G$ has $k$ completely independent Steiner trees for any choice of $r$ terminals in the graph. In \Cref{sec: algorithms}, we show that it is NP-complete to decide whether a graph $G$ has $k$ completely independent Steiner trees for a set of $r$ terminals, but polynomial-time solvable when both $r$ and $k$ are constants. We conclude in \Cref{sec: conclusion} by presenting potential avenues for future work.

\section{Preliminaries and Definitions}\label{sec: prelims}

All graphs considered in this paper are unweighted, connected and finite, but not necessarily simple---we allow parallel edges between the same pair of vertices, but no self-loops. Let $G = (V,E)$ be a graph. For any vertex $v \in V(G)$, we write $d_G(v)$ to denote the \emph{degree} of $v$ in $G$ (i.e., the number of edges incident with $v$), $N_G(v)$ to denote the set of vertices of $G$ adjacent to $v$, and $N_G[v]$ for the union $N_G(v) \cup \{v\}$. If $H$ is a subgraph of $G$, we write $H \subseteq G$. For any subset of vertices $W \subseteq V(G)$, the graph $G[W]$ \emph{induced by $W$} is the edge-maximal subgraph of $G$ on vertex set $W$. For any subset of edges $F \subseteq E(G)$, the graph $G[F]$ \emph{induced by $F$} is the subgraph of $G$ whose edge set is exactly $F$. If $P$ is a path in a graph $G$ with end vertices $u$ and $v$, we say that $P$ is a \emph{$(u,v)$-path}. Two paths $P_1 := P_1(u,v)$ and $P_2 := P_2(u,v)$ are \emph{internally vertex-disjoint} if $V(P_1) \cap V(P_2) = \{u,v\}$. If $T$ is a tree in $G$, and $\{u,v\} \subseteq V(T)$, then we use $T(u,v)$ to denote the unique $(u,v)$-path in $T$. A vertex $v \in V(T)$ is an \emph{internal node} of $T$ if $d_T(v) \geq 2$; else $v$ is a \emph{leaf} of $T$. We write $\mathrm{int}(T)$ to denote the set of all internal nodes of $T$ and $L(T)$ to denote its set of all leaves. A \emph{star} is a tree with exactly one internal node, and a \emph{double star} is a tree with exactly two internal nodes. We say that two graphs $G_1$ and $G_2$ are \emph{homeomorphic} if there exists a graph $G$ such that both $G_1$ and $G_2$ are subdivisions of $G$. For any positive integer $n$, we use the shorthand $[n]$ for the set $\{1,\dots,n\}$.

\begin{definition}[$R$-Steiner Tree]\label{def: steiner tree}
 Let $R \subseteq V(G)$ be a set of vertices, where $|R| \geq 2$. Any tree $T$ of $G$ where $R \subseteq V(T)$ and $L(T) \subseteq R$ is called an \emph{$R$-Steiner tree}.
\end{definition}

The vertices of $R$ in \Cref{def: steiner tree} are called \emph{terminals}, and the remaining vertices are called \emph{non-terminals}. Henceforth, whenever we speak of $R$-Steiner trees, we assume that a set $R$ of terminals is specified. An $R$-Steiner tree $T$ where $L(T) = R$ is called a \emph{pendant $R$-Steiner tree} \cite{Hager}. (The literature on minimum-weight Steiner trees in edge-weighted graphs uses the term \emph{full Steiner tree} \cite{LU200355,hwang1995steiner} instead of pendant Steiner tree.)  An $R$-Steiner tree where $\mathrm{int}(T) \cap R \neq \emptyset$ is called a \emph{non-pendant $R$-Steiner tree}.

\begin{definition}[$R$-Completely Independent Steiner Trees ($R$-CIST)]\label{def: independent steiner trees}
    An \textsf{$R$-CIST} of $G$ is a set of trees $\{T_1, \dots, T_k\}$ where, for every distinct $\{u,v\} \subseteq  R$, and all distinct $i,j \in [k]$, paths $T_i(u,v)$ and $T_j(u,v)$ are edge-disjoint and internally vertex-disjoint.
\end{definition}

An \textsf{$R$-CIST} whose each tree is a pendant $R$-Steiner tree is called a \emph{\textsf{Pendant} \textsf{$R$-CIST}}. On the other hand, an  \textsf{$R$-CIST} whose each tree is a non-pendant $R$-Steiner tree is called a \emph{\textsf{Non-pendant} \textsf{$R$-CIST}}. If $R = V(G)$, then an \textsf{$R$-CIST} of $G$ is a set of \emph{completely independent spanning trees}, or \textsf{CIST} for short \cite{hasunuma}.

\begin{definition}[$R$-Disjoint Steiner Trees (\textsf{$R$-DST})]\label{def: rdst}
    An  \textsf{$R$-DST} is a set of $R$-Steiner trees $\{T_1, \dots, T_k\}$ where $V(T_i) \cap V(T_j) = R$ and $E(T_i) \cap E(T_j) = \emptyset$ for all distinct $i,j \in [k]$.
\end{definition}

The concept of $R$-disjoint Steiner trees has also been studied under different names such as \emph{element-disjoint Steiner trees} \cite{cheriyan_talg} and \emph{internally vertex-disjoint Steiner trees} \cite{tree_book,Hager}. 
It is easy to construct a family of graphs where a maximum \textsf{$R$-CIST} has size 1, but there is no bound on a maximum \textsf{$R$-DST}. For instance, consider the infinite family of graphs $\{H_1, H_2, \dots,\}$, where $V(H_i) =\{u,x,v\}$, and $E(H_i)$ consists of two sets of $i$ parallel edges each, with one set of parallel edges between $u,x$, and the other between $x,v$. Clearly, if $R = \{u,x,v\}$, then every graph $H_i$ has a maximum \textsf{$R$-CIST} of size 1, but a maximum \textsf{$R$-DST} has size $i$. While an \textsf{$R$-DST} is not necessarily an \textsf{$R$-CIST}, every \textsf{$R$-CIST} is an \textsf{$R$-DST} (see \Cref{cor: dst} in \Cref{subsec: structural characterisation}). However, for graphs where every terminal vertex has degree at most 3, every \textsf{$R$-DST} is also an \textsf{$R$-CIST}, since no two $(u,v)$-paths, where $\{u,v\} \subseteq R$, can intersect at vertex $z \in R \setminus \{u,v\}$. 


\medskip
In \Cref{sec: characterisations}, we give various characterisations for \textsf{$R$-CIST} and their applications. For many of these, we will need to assume that $R$ is an independent set in the graph. In \Cref{proposition: subdivision}, we show that such an assumption is justifiable. 

\begin{proposition}\label{proposition: subdivision}
    Let $G$ be a graph, $R \subseteq V(G)$ be a set of terminals, and $G'$ be any subdivision of $G$. Then $G$ has an \textsf{$R$-CIST} $\{T_1, \dots, T_k\}$ if and only if $G'$ has an \textsf{$R$-CIST} $\{T_1', \dots, T_k'\}$, where $T_i'$ is the subdivision of $T_i$ in $G'$. 
\end{proposition}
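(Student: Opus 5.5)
It suffices to handle a single elementary subdivision, since a general subdivision $G'$ is obtained from $G$ by a finite sequence of them and both directions of the equivalence compose. So let $G'$ be obtained from $G$ by replacing one edge $e = xy$ with a path $x w y$, where $w$ is a new (non-terminal) vertex. For a subgraph $H \subseteq G$, define $H'$ by replacing $e$ with the path $xwy$ if $e \in E(H)$, and $H' = H$ otherwise. Every subgraph of $G'$ that, whenever it contains one of $xw, wy$, contains both and has $d(w)=2$, is of the form $H'$ for a unique $H \subseteq G$. This correspondence preserves being a tree, being a path, and the set of vertices of $V(G)$ covered. Since $w \notin R$, it also preserves the leaf set, so $T$ is an $R$-Steiner tree iff $T'$ is. Moreover $T'(u,v) = (T(u,v))'$ for $u,v \in R$.

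\textbf{Forward direction.} Given an \textsf{$R$-CIST} $\{T_1,\dots,T_k\}$ in $G$, I check that $\{T_1',\dots,T_k'\}$ is an \textsf{$R$-CIST} in $G'$. Fix terminals $u,v$ and $i \neq j$, and set $P_i = T_i(u,v)$, $P_j = T_j(u,v)$. Internal vertices of $P_i'$ are the internal vertices of $P_i$, plus $w$ if $e \in P_i$. Edges of $P_i'$ are those of $P_i$, with $e$ replaced by $xw, wy$. The only new possible conflict is that both $P_i'$ and $P_j'$ contain $w$ (or one of the edges $xw, wy$). That would force $e \in E(P_i) \cap E(P_j)$, contradicting the edge-disjointness of $P_i$ and $P_j$.

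\textbf{Backward direction.} Suppose $G'$ has an \textsf{$R$-CIST} $\{T_1',\dots,T_k'\}$. I must first argue that each $T_i'$ has the form $H'$. Since $w \notin R$ and leaves of a Steiner tree are terminals, $w$ is either absent from $T_i'$ or has degree $2$ in it. In the latter case $T_i'$ contains both $xw$ and $wy$, because $d_{G'}(w) = 2$. So $T_i' = (T_i)'$, where $T_i$ is obtained by contracting $w$ back to the edge $e$, and $T_i$ is an $R$-Steiner tree of $G$. Then $T_i(u,v)$ corresponds to $T_i'(u,v)$. Internal vertices of $T_i(u,v)$ are a subset of those of $T_i'(u,v)$, and edges other than $e$ are shared. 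If $e$ lay in both $T_i(u,v)$ and $T_j(u,v)$, then $w$ would be a common internal vertex of $T_i'(u,v)$ and $T_j'(u,v)$, a contradiction. So disjointness transfers back.

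\textbf{Main obstacle.} The only delicate point is the degree-$2$ argument showing that every tree of an \textsf{$R$-CIST} in $G'$ is a genuine subdivision of a tree in $G$. This needs $L(T) \subseteq R$, which the definition of $R$-Steiner tree guarantees. It also needs parallel edges to be handled correctly: when $e$ has parallel copies, contracting back yields the specific copy $e$, so no loops or collapsed edges arise.
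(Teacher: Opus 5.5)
Your proposal is correct and takes the same approach as the paper, which simply asserts that $T_i(u,v)$ and $T_j(u,v)$ are edge-disjoint and internally vertex-disjoint if and only if their subdivisions $T_i'(u,v)$ and $T_j'(u,v)$ are. Your reduction to a single subdivided edge, and your degree-$2$ argument (the subdivision vertex is a non-terminal, hence not a leaf, so every $R$-Steiner tree of $G'$ is the subdivision of one in $G$), make explicit details that the paper's one-line proof leaves implicit.
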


\begin{proof}
For any $\{u,v\} \subseteq R$, and distinct $i,j \in [k]$, the paths $T_i(u,v)$ and $T_j(u,v)$ are edge-disjoint and internally vertex-disjoint if and only if paths $T_i'(u,v)$ and $T_j'(u,v)$ are also edge-disjoint and internally vertex-disjoint. Therefore, $\{T_1, \dots, T_k\}$ is an \textsf{$R$-CIST} of $G$ if and only if $\{T_1', \dots, T_k'\}$ is an \textsf{$R$-CIST} of $G'$.  
\end{proof}

\section{Characterisations of Completely Independent Steiner Trees}\label{sec: characterisations}

In this section, we give three different characterisations for \textsf{$R$-CIST}, and give different applications for each of them. 

\subsection{Characterisation by Structure}\label{subsec: structural characterisation}

The main result of this section is \Cref{theorem: disjoint internal nodes}, where we show that a set of trees is an \textsf{$R$-CIST} if and only if they are pairwise edge-disjoint and disjoint on their internal nodes. Independently of our work, \Cref{theorem: disjoint internal nodes} was also proved in \cite{cisst}. Nevertheless, we repeat the proof of \Cref{theorem: disjoint internal nodes}, since this result is foundational to our paper, and makes the paper self-contained.  \Cref{theorem: disjoint internal nodes} draws its inspiration primarily from Hasunuma's structural characterisation theorem for \textsf{CIST}, which we state below. 

\begin{theorem}[Theorem 2.1 in \cite{hasunuma}]\label{theorem: cist characterisation}
    A set of trees $\{T_1, \dots, T_k\}$ is a \textsf{CIST} in a graph $G$ if and only if $E(T_i) \cap E(T_j) = \emptyset$, and for any vertex $v \in V(G)$, there is at most one tree $T_i$ such that $v \in \mathrm{int}(T_i)$.
\end{theorem}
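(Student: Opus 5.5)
The plan is to prove both directions directly from the definition of a \textsf{CIST}, i.e.\ the case $R=V(G)$ of \Cref{def: independent steiner trees}. In this case every $T_i$ is a spanning tree, so every pair of vertices of $G$ is a pair of terminals.

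\emph{Sufficiency.} Assume the trees are pairwise edge-disjoint and every vertex is internal in at most one tree. Fix distinct $u,v$ and distinct $i,j$. Edge-disjointness of $T_i(u,v)$ and $T_j(u,v)$ is immediate from $E(T_i)\cap E(T_j)=\emptyset$. Now suppose some $w\notin\{u,v\}$ lies on both paths. Then $w$ is an interior vertex of $T_i(u,v)$, so it has two neighbours on that path and $d_{T_i}(w)\ge 2$, i.e.\ $w\in\mathrm{int}(T_i)$. By the same reasoning $w\in\mathrm{int}(T_j)$. This contradicts the hypothesis, so the paths are internally vertex-disjoint.

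\emph{Necessity.} Assume $\{T_1,\dots,T_k\}$ is a \textsf{CIST}.
\begin{itemize}
\item \textbf{Edge-disjointness.} If an edge $e=uv$ lies in both $T_i$ and $T_j$, then $T_i(u,v)=T_j(u,v)=e$. With parallel edges allowed the edge is literally the same $e$, so the two paths share $e$, which is a contradiction.
\item \textbf{Internal nodes.} Suppose $w\in\mathrm{int}(T_i)\cap\mathrm{int}(T_j)$ with $i\ne j$. I want two vertices $u,v$ such that $w$ is an interior vertex of both $T_i(u,v)$ and $T_j(u,v)$. Removing $w$ from $T_i$ splits $V(G)\setminus\{w\}$ into at least two components $A_1,\dots,A_p$ with $p\ge 2$. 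Likewise $T_j-w$ gives components $B_1,\dots,B_q$ with $q\ge 2$. The vertex $w$ is interior on $T_i(u,v)$ exactly when $u,v$ lie in different $A$-components, and similarly for $T_j$.
\end{itemize}

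So the remaining task, which I expect to be the main step, is a purely combinatorial lemma: given two partitions of the same set $S=V(G)\setminus\{w\}$, each into at least two nonempty parts, there exist $u,v$ separated by both partitions.

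To prove it, pick any $u\in S$, and let $A$ and $B$ be the parts containing $u$ in the two partitions. The set $S\setminus(A\cup B)$ consists exactly of the vertices separated from $u$ by both partitions. If it is nonempty, take $v$ from it and we are done. Otherwise $S=A\cup B$. Since both partitions are nontrivial, $A\ne S$ and $B\ne S$, so we can pick $x\in S\setminus A$ and $y\in S\setminus B$. Because $S=A\cup B$, we have $x\in B\setminus A$ and $y\in A\setminus B$. Then $x$ and $y$ are separated by the first partition (since $y\in A$ and $x\notin A$) and by the second (since $x\in B$ and $y\notin B$). Take $\{u,v\}=\{x,y\}$.

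In either case $T_i(u,v)$ and $T_j(u,v)$ both pass through $w\notin\{u,v\}$, contradicting the \textsf{CIST} property. This completes both directions.
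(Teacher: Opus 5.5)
Your proof is correct and follows essentially the same route as the paper. The paper does not prove this statement separately; it recovers it as the case $R=V(G)$ of \Cref{theorem: disjoint internal nodes}. There too, sufficiency is immediate, and necessity comes from exhibiting a pair of vertices separated by the shared internal vertex in both trees. Your two-partition lemma is a cleaner packaging of the paper's component-chasing argument. The paper instead picks leaves $u,v$ of $T_i$ and $w,x$ of $T_j$ whose paths pass through $z$, so that the resulting pair consists of terminals; that precaution is unnecessary here, since every vertex is a terminal.
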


\begin{theorem}[See also \cite{cisst}]\label{theorem: disjoint internal nodes}
For any graph $G$, a set of trees $\{T_1, \dots, T_k\}$ is an \textsf{$R$-CIST} if and only if $E(T_i) \cap E(T_j) = \emptyset$ and $\mathrm{int}(T_i) \cap \mathrm{int}(T_j) = \emptyset$ for all distinct $i,j \in [k]$.
\end{theorem}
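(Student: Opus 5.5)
The plan is to prove the two directions separately. Throughout, I assume, as the notion of \textsf{$R$-CIST} intends, that each $T_i$ is an $R$-Steiner tree. In particular $R \subseteq V(T_i)$ and every leaf of $T_i$ is a terminal.

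\emph{Sufficiency.} Assume the trees are pairwise edge-disjoint and have pairwise disjoint sets of internal nodes. Fix distinct terminals $u,v$ and indices $i \neq j$. Every internal vertex of the path $T_i(u,v)$ has degree at least $2$ in $T_i$, so it lies in $\mathrm{int}(T_i)$. Likewise every internal vertex of $T_j(u,v)$ lies in $\mathrm{int}(T_j)$. Hence the two paths are internally vertex-disjoint. Since $E(T_i(u,v)) \subseteq E(T_i)$ and $E(T_j(u,v)) \subseteq E(T_j)$, the paths are also edge-disjoint.

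\emph{Necessity.} Both parts reduce to one combinatorial lemma.

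\textbf{Lemma.} Let $\mathcal{P}$ and $\mathcal{Q}$ be two partitions of a finite set $S$, each with at least two nonempty classes. Then there exist $u,v \in S$ that lie in different classes of $\mathcal{P}$ and in different classes of $\mathcal{Q}$.

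\emph{Proof of the lemma.} Suppose not. Pick $u$ and $v$ in different $\mathcal{P}$-classes; by assumption they share a $\mathcal{Q}$-class. Now take any $w \in S$. Then $w$ lies in a different $\mathcal{P}$-class from $u$ or from $v$, so $w$ shares a $\mathcal{Q}$-class with that vertex, and hence with both. So all of $S$ lies in a single $\mathcal{Q}$-class, contradicting that $\mathcal{Q}$ has at least two classes.

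\emph{Shared internal node.} Suppose $x \in \mathrm{int}(T_i) \cap \mathrm{int}(T_j)$ with $i \neq j$. Deleting $x$ from $T_i$ leaves at least two components. Each component contains a leaf of $T_i$ other than $x$, and that leaf is a terminal. So the components induce a partition $\mathcal{P}$ of $S = R \setminus \{x\}$ with at least two nonempty classes. The same construction in $T_j$ gives a partition $\mathcal{Q}$ of $S$. By the Lemma there are terminals $u,v \neq x$ separated by $x$ in both trees. Then $x$ is an internal vertex of both $T_i(u,v)$ and $T_j(u,v)$, a contradiction.

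\emph{Shared edge.} Suppose $e = xy \in E(T_i) \cap E(T_j)$ with $i \neq j$. Deleting $e$ from $T_i$ splits $T_i$ into two subtrees. Each subtree contains a terminal: either its endpoint of $e$ is a leaf of $T_i$ (hence a terminal), or it contains some other leaf of $T_i$. This gives a partition of $R$ into two nonempty classes, and similarly for $T_j$. By the Lemma there are terminals $u,v$ separated by $e$ in both trees. Then $e$ lies on both $T_i(u,v)$ and $T_j(u,v)$, a contradiction.

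The main obstacle is the necessity direction: a shared vertex or edge does not obviously lie on a common terminal-to-terminal path in both trees. The Lemma resolves this. The fact that leaves are terminals is what guarantees the classes in each partition are nonempty, and for a shared internal node $x$ that is itself a terminal, we use only terminals other than $x$.
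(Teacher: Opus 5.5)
Your proof is correct. For a shared internal node it rests on the same idea as the paper's proof: the shared element must separate some pair of terminals in both trees at once.

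The paper handles only the internal-node case, and it does so concretely. It picks leaves $u,v$ of $T_i$ separated by $z$ and leaves $w,x$ of $T_j$ separated by $z$. From the fact that neither pair is separated in the other tree, it extracts the pair $\{v,x\}$, which is separated in both. Your partition lemma is an abstract form of this step.

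The genuine difference is the edge part. The paper asserts that $E(T_i)\cap E(T_j)=\emptyset$ holds ``by definition''. However, the definition only makes the terminal-to-terminal paths edge-disjoint. So one must show that a shared edge lies on both $T_i(u,v)$ and $T_j(u,v)$ for a single pair $\{u,v\}$. Your second application of the lemma, to the two sides of $T_i-e$ and $T_j-e$, supplies exactly this. Your argument therefore treats vertices and edges uniformly and closes a step the paper leaves implicit. The paper's version is more concrete and matches its figure.
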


\begin{proof}
The reverse direction is easy to show: For any $\{u,v\} \subseteq R$, and any distinct $i,j \in [k]$, the paths $T_i(u,v)$ and $T_j(u,v)$ are edge disjoint (since $E(T_i) \cap E(T_j) = \emptyset$) and internally vertex-disjoint (since $\mathrm{int}(T_i) \cap \mathrm{int}(T_j) = \emptyset$). Therefore, $\{T_1, \dots, T_k\}$ is an \textsf{$R$-CIST}. 

To prove the forward direction, we use contradiction. By definition of \textsf{$R$-CIST}, we already have that $E(T_i) \cap E(T_j) = \emptyset$ for all distinct $i,j \in [k]$. Hence, we may assume that $\mathrm{int}(T_i) \cap \mathrm{int}(T_j) \neq \emptyset$, for some $i,j \in [k]$, and that there is a vertex $z \in \mathrm{int}(T_i) \cap \mathrm{int}(T_j)$. Let $\{u,v\} \subseteq L(T_i)$ and $\{w,x\} \subseteq L(T_j)$ be vertices such that $z$ is internal to paths $T_i(u,v)$ and $T_j(w,x)$. Now, root both trees $T_i$ and $T_j$ at the vertex $z$ (\Cref{fig: structure}). Vertices $w$ and $x$ belong to the same component of $T_i - \{z\}$; otherwise, since $z$ is an internal node of $T_j(w,x)$, paths $T_i(w,x)$ and $T_j(w,x)$ would both intersect at $z$, leading to a contradiction. As $z \in T_i(u,v)$ implies that $u$ and $v$ are in different components of $T_i - \{z\}$, we may assume, up to symmetry between $u$ and $v$, that $v$ is in a component different from $\{w,x\}$ (\Cref{fig:first}). By a symmetric reasoning for $T_j$, vertices $\{u,v\}$ belong to the same component of $T_j - \{z\}$, and up to symmetry between $w$ and $x$, we may assume that this component is different from the one containing $x$ (\Cref{fig:second}). One may now observe that $z \in T_i(v,x) \cap T_j(v,x)$---this contradicts the fact that $T_i$ and $T_j$ belong to an \textsf{$R$-CIST}.   
\end{proof}

\begin{figure}
\centering
\begin{subfigure}{0.48\textwidth}
    \centering
    \includegraphics[scale = 0.6, page = 1]{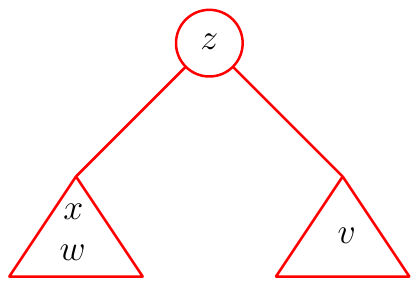}
    \caption{Tree $T_i$}
    \label{fig:first}
\end{subfigure}
\hfill
\begin{subfigure}{0.48\textwidth}
    \centering
    \includegraphics[scale = 0.6, page = 2]{Figures/structure.pdf}
    \caption{Tree $T_j$}
    \label{fig:second}
\end{subfigure}
\caption{An illustration for the proof of \Cref{theorem: disjoint internal nodes}}
\label{fig: structure}
\end{figure}

\cref{theorem: disjoint internal nodes} implies that any terminal vertex that is an internal node of some tree $T_i$ is a leaf of all other trees $T_j$. And, any non-terminal vertex that is an internal node of $T_i$ does not belong to any other $T_j$. In the special case $R = V(G)$, where an \textsf{$R$-CIST} is identical to a \textsf{CIST}, every vertex of $G$ is a leaf in all but except possibly one of the trees---this is precisely \Cref{theorem: cist characterisation}. 

\begin{corollary}\label{cor: dst}
Every \textsf{$R$-CIST} of a graph $G$ is also an \textsf{$R$-DST}.
\end{corollary}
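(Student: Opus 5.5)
The plan is to derive this directly from \Cref{theorem: disjoint internal nodes} together with the definition of an $R$-Steiner tree. Let $\{T_1,\dots,T_k\}$ be an \textsf{$R$-CIST} of $G$, where each $T_i$ is an $R$-Steiner tree. We must check the two conditions of \Cref{def: rdst} for all distinct $i,j\in[k]$: that $E(T_i)\cap E(T_j)=\emptyset$, and that $V(T_i)\cap V(T_j)=R$.

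The edge condition is immediate from \Cref{theorem: disjoint internal nodes}; it is also already part of \Cref{def: independent steiner trees}.

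For the vertex condition, first note that $R\subseteq V(T_i)$ for every $i$, because each $T_i$ is an $R$-Steiner tree. This gives $R\subseteq V(T_i)\cap V(T_j)$. For the reverse inclusion, suppose some $z\in V(T_i)\cap V(T_j)$ satisfies $z\notin R$. We argue as follows.
\begin{itemize}
\item Since $|R|\ge 2$, each tree has at least two vertices, so every vertex is either a leaf or an internal node.
\item Since $L(T_i)\subseteq R$ and $L(T_j)\subseteq R$, the non-terminal $z$ is not a leaf of either tree.
\item Hence $z\in \mathrm{int}(T_i)\cap\mathrm{int}(T_j)$.
\end{itemize}
This contradicts \Cref{theorem: disjoint internal nodes}. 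Therefore $V(T_i)\cap V(T_j)=R$, and the set is an \textsf{$R$-DST}.

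There is no real obstacle here. The only point needing care is to use the fact that the trees of an \textsf{$R$-CIST} are $R$-Steiner trees. This is what guarantees that they contain all of $R$ and have all their leaves in $R$, so that every non-terminal vertex of a tree is internal to it.
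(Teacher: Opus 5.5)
Your proof is correct and follows the same route as the paper: both apply \Cref{theorem: disjoint internal nodes} to get edge-disjointness and disjointness of internal nodes, and then conclude $V(T_i)\cap V(T_j)=R$. You also spell out the step the paper leaves implicit: since $R\subseteq V(T_i)$ and $L(T_i)\subseteq R$, every non-terminal vertex of an $R$-Steiner tree is internal to it.
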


\begin{proof}
Let $\{T_1, \dots, T_k\}$ be an \textsf{$R$-CIST} of $G$. By \Cref{theorem: disjoint internal nodes}, we have $E(T_i) \cap E(T_j) = \emptyset$ and $V(T_i) \cap V(T_j) = R$, for all distinct $i,j \in [k]$. Therefore, $\{T_1, \dots, T_k\}$ is also an \textsf{$R$-DST}.
\end{proof}

\begin{corollary}\label{obs: number of RCIST}
For any graph $G$, a maximum \textsf{Non-pendant $R$-CIST} of $G$ has size at most $|R|$. 
\end{corollary}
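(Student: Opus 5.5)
The plan is a pigeonhole argument built on \Cref{theorem: disjoint internal nodes}. Let $\{T_1, \dots, T_k\}$ be a \textsf{Non-pendant $R$-CIST} of $G$. Since each $T_i$ is a non-pendant $R$-Steiner tree, by definition $\mathrm{int}(T_i) \cap R \neq \emptyset$, so for each $i \in [k]$ I will fix a terminal $r_i \in \mathrm{int}(T_i) \cap R$. The goal is to show that the map $i \mapsto r_i$ is injective from $[k]$ into $R$, which immediately gives $k \leq |R|$.

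For injectivity, suppose $r_i = r_j$ for some distinct $i,j \in [k]$. Then this common vertex lies in $\mathrm{int}(T_i) \cap \mathrm{int}(T_j)$. This contradicts the forward direction of \Cref{theorem: disjoint internal nodes}, which says that trees in an \textsf{$R$-CIST} have pairwise disjoint sets of internal nodes. Hence the $r_i$ are pairwise distinct terminals, so $k \leq |R|$. Since this holds for every \textsf{Non-pendant $R$-CIST}, it holds in particular for a maximum one.

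There is no real obstacle here. The only point needing care is to apply the internal-node disjointness from \Cref{theorem: disjoint internal nodes}, rather than the weaker \Cref{cor: dst}. The \textsf{$R$-DST} property allows trees to share terminals, so on its own it would not prevent two trees from using the same terminal as an internal node. If desired, I would also note that the bound is attained, for example by stars centred at distinct terminals in a complete graph on $R$ when $|R| \geq 3$, but this is not needed for the statement.
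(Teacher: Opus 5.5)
Your argument is correct and matches the paper's proof: each non-pendant tree has a terminal among its internal nodes, and by \Cref{theorem: disjoint internal nodes} the internal nodes of distinct trees are pairwise disjoint, so $k \leq |R|$. One correction to your optional tightness remark: in a simple complete graph on $R$, the stars centred at $r_1$ and $r_2$ both use the edge $r_1r_2$, so they are not edge-disjoint, and attaining the bound this way needs parallel edges between terminals (as the paper notes).
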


\begin{proof}
Any non-pendant $R$-Steiner tree has at least one terminal as an internal node. Since the internal nodes of any two trees in an \textsf{$R$-CIST} are pairwise disjoint, the size of any \textsf{Non-pendant} \textsf{$R$-CIST} in $G$ is at most $|R|$. 
\end{proof}

In \Cref{theorem: disjoint internal nodes}, although we do not assume $G$ to be a simple graph, the disjointness property of internal nodes implies that one can ignore all parallel edges in the graph, except those in induced subgraph $G[R]$. We prove this formally in \Cref{prop: simplification} after first defining a special type of simplified graph, which we denote by $\mathrm{simp}(G,R)$. 

\begin{definition}[Simplification of $G$ with respect to $R$ ($\mathrm{simp}(G,R)$)]
   For any graph $G$ with a set of terminals $R$, let $\mathrm{simp}(G,R)$ be the subgraph of $G$ obtained by the following process: For any pair of vertices $\{u,v\}$ in $G$ connected by more than a single edge:
   \begin{enumerate}
       \item If $\{u,v\} \subseteq R$ and $|R| \geq 3$, then delete all edges in $G[\{u,v\}]$ except two edges;
       \item Else if $u \notin R$ or $v \notin R$, then delete all edges in $G[\{u,v\}]$ except one edge 
   \end{enumerate}
\end{definition}

Note that if $R = \{u,v\}$, and $u,v$ are adjacent in $G$, we keep all the parallel edges between $u$ and $v$. Also, note that $\mathrm{simp}(G,R)$ is not a unique graph if parallel edges between vertices are labelled distinctly.

\begin{proposition}\label{prop: simplification}
A set of trees is an \textsf{$R$-CIST} of $G$ if and only if it is an \textsf{$R$-CIST} in some graph $\mathrm{simp}(G,R)$.
\end{proposition}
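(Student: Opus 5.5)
The plan is to prove both directions using the structural characterisation of \Cref{theorem: disjoint internal nodes}. The key fact needed is a bound on how many trees of an \textsf{$R$-CIST} can use edges joining a fixed pair $\{u,v\}$. By that theorem, an \textsf{$R$-CIST} is exactly a family of pairwise edge-disjoint trees, each an $R$-Steiner tree, with pairwise disjoint sets of internal nodes. Also, a single tree never contains two parallel edges, since that would be a cycle.

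For the reverse direction, let $\{T_1,\dots,T_k\}$ be an \textsf{$R$-CIST} in some $\mathrm{simp}(G,R)$. Since $\mathrm{simp}(G,R)$ is a subgraph of $G$, each $T_i$ is still a tree in $G$ with $R \subseteq V(T_i)$ and $L(T_i)\subseteq R$. Edge-disjointness and disjointness of internal nodes are properties of the trees themselves, not of the host graph. So \Cref{theorem: disjoint internal nodes} gives that the family is an \textsf{$R$-CIST} of $G$.

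For the forward direction, let $\{T_1,\dots,T_k\}$ be an \textsf{$R$-CIST} of $G$ and fix a pair $\{u,v\}$ joined by multiple edges. Let $m$ be the number of trees using some edge between $u$ and $v$; each such tree uses exactly one of them. I will show that $m$ never exceeds the number of edges that $\mathrm{simp}(G,R)$ keeps for this pair. Then $\mathrm{simp}(G,R)$ can be chosen, pair by pair, to keep precisely the edges used by the trees, padded with arbitrary extra edges where needed. The trees then live in this graph, and \Cref{theorem: disjoint internal nodes} applies again.

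The counting is the main step.
\begin{itemize}
\item \textbf{Case $u\notin R$ (or symmetrically $v \notin R$).} Any tree $T_i$ containing an edge $uv$ contains $u$. Since $u$ is a non-terminal and $L(T_i)\subseteq R$, the vertex $u$ is internal in $T_i$. Disjointness of internal nodes gives $m\le 1$.
\item \textbf{Case $\{u,v\}\subseteq R$ and $|R|\ge 3$.} Take a third terminal $w$. A tree $T_i$ containing the edge $uv$ also contains a path to $w$, so at least one of $u,v$ has degree at least $2$ in $T_i$, i.e.\ is internal. Each of $u,v$ is internal in at most one tree, so $m\le 2$.
\item \textbf{Case $R=\{u,v\}$.} Nothing is deleted, so there is nothing to check.
\end{itemize}

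I expect the only subtle point to be the $|R|\ge3$ case. There one must use the third terminal to force internality, because a tree consisting of the single edge $uv$ has no internal nodes. This is exactly why the definition of $\mathrm{simp}(G,R)$ treats $|R|=2$ separately.
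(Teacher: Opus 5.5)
Your proof is correct and follows essentially the same route as the paper. Both arguments rest on \Cref{theorem: disjoint internal nodes} together with one observation: when $R \neq \{u,v\}$, any tree using a $uv$-edge must have $u$ or $v$ as an internal node. This caps the number of trees using $uv$-edges at one if an endpoint is a non-terminal, and at two if both endpoints are terminals.

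The difference is only organisational. You split by the cases in the definition of $\mathrm{simp}(G,R)$. The paper instead starts from two trees using distinct parallel edges, and deduces that $\{u,v\}\subseteq R$ and that no third tree can use a $uv$-edge. Your version also makes explicit the third-terminal argument for internality, which the paper leaves implicit.
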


\begin{proof}
   The reverse direction is trivially true, since $\mathrm{simp}(G,R) \subseteq G$, so we only show the forward direction. Let $\{T_1, \dots, T_k\}$ be an \textsf{$R$-CIST} of $G$. We will show that there exists a graph $\mathrm{simp}(G,R)$ for which $\{T_1, \dots, T_k\}$ is also an \textsf{$R$-CIST}. Let $u,v$ be a pair of adjacent vertices of $G$. Suppose that there exist trees $T_i,T_j$ such that $e^1_{uv} \in E(T_i)$ and $e^2_{uv} \in E(T_j)$, where $e^1_{uv}$ and $e^2_{uv}$ are two distinct parallel edges with ends $u,v$. If $R = \{u,v\}$, then all parallel edges between $u,v$ belong to $\mathrm{simp}(G,R)$, so we may assume that $R \neq \{u,v\}$. In this case, we will show that $\{u,v\} \subseteq R$, and that no other edge of $G[\{u,v\}]$ belongs to the \textsf{$R$-CIST}. Since $\mathrm{int}(T_i) \cap \mathrm{int}(T_j) = \emptyset$, up to symmetry between $u$ and $v$, it must be that $u \in \mathrm{int}(T_i) \cap L(T_j)$ and $v \in \mathrm{int}(T_j) \cap L(T_i)$. Since $u$ and $v$ are both leaves of Steiner trees, it follows that $\{u,v\} \subseteq R$. Furthermore, since $u$ and $v$ are internal nodes of two trees, no other edge $e^3_{uv}$ between $u$ and $v$ can be an edge of a tree in $\{T_1, \dots, T_k\}$. Therefore, in the construction of $\mathrm{simp}(G,R)$, one can delete all edges in $G[\{u,v\}]$ except $e^1_{uv}$ and $e^2_{uv}$. By repeating this for all adjacent pairs of vertices $u,v$, we can construct a graph $\mathrm{simp}(G,R)$ for which $\{T_1, \dots, T_k\}$ is also an \textsf{$R$-CIST}.
\end{proof}

\begin{corollary}
If $R$ is an independent set of terminals in a graph $G$, then a set of trees is an \textsf{$R$-CIST} of $G$ if and only if it is also an \textsf{$R$-CIST} of a simple graph underlying $G$.
\end{corollary}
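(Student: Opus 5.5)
This follows directly from \Cref{prop: simplification}. When $R$ is an independent set of $G$, no two terminals are adjacent, so no pair $\{u,v\}$ with $\{u,v\}\subseteq R$ is joined by any edge, let alone by parallel edges. The first rule in the definition of $\mathrm{simp}(G,R)$ therefore never applies. Every pair of vertices joined by multiple edges has at least one non-terminal end, so the second rule applies and keeps exactly one of those edges.

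The first step is to show that $\mathrm{simp}(G,R)$ is simple in this case. It has no self-loops, since $G$ has none. For each multiply-joined pair exactly one edge survives. Hence $\mathrm{simp}(G,R)$ is a simple graph underlying $G$. Conversely, every simple graph underlying $G$ is obtained by keeping exactly one edge from each parallel class, so it is one of the possible graphs $\mathrm{simp}(G,R)$. (The latter graphs are non-unique only through the choice of edge labels.)

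Now apply \Cref{prop: simplification}. A set of trees is an \textsf{$R$-CIST} of $G$ if and only if it is an \textsf{$R$-CIST} of some $\mathrm{simp}(G,R)$, and by the first step the graphs $\mathrm{simp}(G,R)$ are exactly the simple graphs underlying $G$. This gives the forward direction. For the reverse direction, any simple underlying graph is a subgraph of $G$, so an \textsf{$R$-CIST} there is trivially an \textsf{$R$-CIST} of $G$.

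The only subtle point is the phrase ``a simple graph underlying $G$''. If the parallel edges are distinctly labelled, the statement must be read as ``some choice of representatives'', which is exactly the one produced in the proof of \Cref{prop: simplification}. Up to relabelling of parallel edges, the underlying simple graph is unique, and the statement holds for it.
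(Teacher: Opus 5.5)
Your proof is correct and follows the paper's route. The paper states this corollary without a separate proof, as an immediate consequence of \Cref{prop: simplification}. Your observation fills in exactly that step: independence of $R$ stops the first rule of $\mathrm{simp}(G,R)$ from ever applying, so the graphs $\mathrm{simp}(G,R)$ are precisely the simple graphs underlying $G$.
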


Since we do not restrict ourselves to simple graphs in this paper, it will be useful to define \emph{$R$-parallel trees}, which are trees with the same vertex set and same edge set, modulo the choice of parallel edges between vertices in $G[R]$.

\begin{definition}[$R$-parallel trees]
In any graph $G$, two trees $T_1$ and $T_2$ of $G$ are \emph{$R$-parallel} if the following condition holds: For every edge $e \in E(G)$:
  \begin{enumerate}
      \item If $e \notin E(G[R])$, then $e \in E(T_1)$ if and only if $e \in E(T_2)$

      \item If $e \in E(G[R])$, then $e \in E(T_1)$ if and only if $e' \in E(T_2)$, where $e' = e$, or $e'$ is an edge parallel to $e$.
  \end{enumerate}
\end{definition}

The presence of parallel edges in $G[R]$ is beneficial in at least two ways. Firstly, the size of a maximum \textsf{$R$-CIST} in a graph with parallel edges may potentially be larger than one without parallel edges. For instance, the size of a maximum \textsf{$R$-CIST} in a simple complete graph $G$ is $|V(G)| - \ceil{|R|/2}$ \cite{cisst}. However, if $G$ is a complete graph with parallel edges between every pair of terminals, then the size of a maximum \textsf{$R$-CIST} is equal to $|V(G)|$---this is realised by the set of stars, each having a distinct vertex of $G$ as an internal node. The second way in which parallel edges in $G[R]$ may be beneficial is that it can make computing the size of a maximum \textsf{$R$-CIST} easier. This is by virtue of the fact that one can forgo the requirement of edge-disjointness for \textsf{$R$-CIST}. For this purpose, we introduce the following definition, and elaborate further through \Cref{prop: idst}.

\begin{definition}[$R$-Interior Disjoint Steiner Trees ($R$-IDST)]
    A set of trees $\{T_1,\dots,T_k\}$ in a simple graph is an \textsf{$R$-IDST} if $\mathrm{int}(T_i) \cap \mathrm{int}(T_j) = \emptyset$, for all distinct $i,j \in [k]$.
\end{definition}

\begin{proposition}\label{prop: idst}
  Let $G$ be a graph with a set of terminals $R$, where $|R| \geq 3$, and every edge in $G[R]$ has exactly one parallel edge. A set of trees $\{T_1, \dots, T_k\}$ is an \textsf{$R$-IDST} of a simple graph underlying $G$ if and only if $\{T_1', \dots, T_k'\}$ is an \textsf{$R$-CIST} of $G$, where $\{T_i,T_i'\}$ are $R$-parallel in $G$ for all $i \in [k]$.
\end{proposition}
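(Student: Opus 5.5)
The plan is to reduce both directions to \Cref{theorem: disjoint internal nodes}. $R$-parallel trees share vertex sets, the same underlying simple tree, and the same degrees (counted in $G$ versus the simple graph). So $\mathrm{int}(T_i)=\mathrm{int}(T_i')$, and interior-disjointness transfers verbatim between $\{T_i\}$ and $\{T_i'\}$. The whole content therefore lies in edge-disjointness: the trees $T_i'$ can be chosen pairwise edge-disjoint in $G$ exactly when the interiors are disjoint.

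For the reverse direction, suppose $\{T_1',\dots,T_k'\}$ is an \textsf{$R$-CIST} of $G$. By \Cref{theorem: disjoint internal nodes}, $\mathrm{int}(T_i')\cap\mathrm{int}(T_j')=\emptyset$. Let $T_i$ be the simple tree obtained by replacing each edge of $T_i'$ by its class representative in the simple underlying graph. Then $T_i$ and $T_i'$ are $R$-parallel with the same interiors, so $\{T_1,\dots,T_k\}$ is an \textsf{$R$-IDST}.

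For the forward direction, let $\{T_1,\dots,T_k\}$ be an \textsf{$R$-IDST} of the simple underlying graph. First I count how many trees can contain a given edge $uv$. If $u\notin R$ (say), then any tree containing $uv$ has $u$ as a vertex. Since $u\notin R$ and leaves of Steiner trees are terminals, $u$ is internal in that tree. Disjointness of interiors then means at most one tree contains $uv$, and in $G$ that edge is just the single edge.

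If $\{u,v\}\subseteq R$, a tree containing $uv$ has $u$ or $v$ internal, because $|R|\ge 3$: otherwise both are leaves and the tree is the single edge $uv$, which cannot span a third terminal. The same holds for $v$ symmetrically in the sense that each such tree has at least one endpoint internal. Each endpoint is internal in at most one tree, so at most two trees contain $uv$. Since $G$ has exactly two parallel copies of $uv$, I assign the two trees distinct copies. This produces $R$-parallel trees $T_i'$ that are pairwise edge-disjoint and have the same interiors, and \Cref{theorem: disjoint internal nodes} then gives an \textsf{$R$-CIST}.

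The step I expect to need the most care is the counting argument for terminal–terminal edges. It needs $|R|\ge 3$ to exclude the single-edge tree, and in the case where two trees use $uv$ it needs the observation that the two trees make different endpoints internal. The assignment of parallel copies must also be done consistently per edge. The statement is read as asserting the existence of such $R$-parallel $T_i'$ in the forward direction, and this construction provides them.
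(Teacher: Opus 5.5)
Your proof is correct and follows essentially the same route as the paper. Interiors are unchanged under $R$-parallelism, so the reverse direction is immediate from \Cref{theorem: disjoint internal nodes}. In the forward direction, a shared edge must join two terminals, and each of the (at most two) sharing trees has a different endpoint internal, so one of them can be moved onto the parallel copy; your per-edge counting with a simultaneous assignment of copies is a cleaner packaging of the paper's iterative swap $T_j - e + e'$.
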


\begin{proof}
 The reverse direction is trivially true since every \textsf{$R$-CIST} is also an \textsf{$R$-IDST}, so we only prove the forward direction. Let $\{T_1, \dots, T_k\}$ be an \textsf{$R$-IDST} in a simple graph underlying $G$. If $\{T_1, \dots, T_k\}$ is not an \textsf{$R$-CIST} of $G$, then there exists an edge $e \in E(T_i) \cap E(T_j)$, for some distinct $i,j \in [k]$. Let $u,v$ be the endpoints of the edge $e$. As $|R| \geq 3$, we have $\mathrm{int}(T_i) \neq \emptyset$ for all $i \in [k]$. Since $\mathrm{int}(T_i) \cap \mathrm{int}(T_j) = \emptyset$, it must be that $u \in \mathrm{int}(T_i) \cap L(T_j)$ and $v \in \mathrm{int}(T_j) \cap L(T_i)$. This implies that $\{u,v\} \subseteq R$, and $e \notin E(T_h)$, for any $h \notin \{i,j\}$. Since, $\{u,v\} \subseteq R$, there exists an edge $e'$ parallel to $e$ in $G$, and we can replace $T_j$ by the tree $T_j - e + e'$.  Repeating this process for all edges $e$, we can make the trees edge-disjoint and get an \textsf{$R$-CIST} $\{T_1', \dots, T_k'\}$ in $G$, where $\{T_i,T_i'\}$ are $R$-parallel in $G$.
\end{proof}

\begin{corollary}\label{Cor: ridst to rcsit}
    If $R$ is an independent set of terminals in a graph $G$, then $\{T_1, \dots, T_k\}$ is an \textsf{$R$-CIST} of $G$ if and only if it is also an \textsf{$R$-IDST} of $G$.
\end{corollary}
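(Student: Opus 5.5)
The plan is to reduce everything to \Cref{theorem: disjoint internal nodes}. An \textsf{$R$-CIST} is characterised there as a family of trees that are pairwise edge-disjoint and pairwise disjoint on their internal nodes. An \textsf{$R$-IDST} is, by definition, a family that only satisfies the second condition. So it suffices to show that when $R$ is independent, interior-disjointness already forces edge-disjointness.

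The reverse direction needs no argument. If $\{T_1,\dots,T_k\}$ is an \textsf{$R$-CIST}, then \Cref{theorem: disjoint internal nodes} gives $\mathrm{int}(T_i)\cap\mathrm{int}(T_j)=\emptyset$ for all distinct $i,j$, which is exactly the \textsf{$R$-IDST} condition.

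For the forward direction, let $\{T_1,\dots,T_k\}$ be an \textsf{$R$-IDST}, and suppose some edge $e=uv$ lies in $E(T_i)\cap E(T_j)$ with $i\neq j$.
\begin{itemize}
\item Since $R$ is independent, $|R|\ge 2$ and each Steiner tree contains all of $R$, each tree has at least one non-terminal vertex. Hence each tree has at least three vertices and $\mathrm{int}(T_i)\neq\emptyset$.
\item An endpoint of $e$ that is a leaf of a tree is the only neighbour-side of that edge in the tree, so the other endpoint must be internal there. Thus in $T_i$ at least one of $u,v$ is internal, and likewise in $T_j$.
\item Interior-disjointness says no vertex is internal in both trees. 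So, up to symmetry, $u\in\mathrm{int}(T_i)\cap L(T_j)$ and $v\in\mathrm{int}(T_j)\cap L(T_i)$.
\item Leaves of Steiner trees are terminals, so $\{u,v\}\subseteq R$. This contradicts the independence of $R$.
\end{itemize}
Therefore the trees are pairwise edge-disjoint, and \Cref{theorem: disjoint internal nodes} shows they form an \textsf{$R$-CIST}. This is the same argument as in \Cref{prop: idst}, except that no parallel edge is needed, because the offending edge cannot exist.

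One technicality is that \textsf{$R$-IDST} was defined for simple graphs. So I would first note that, by \Cref{prop: simplification}, when $R$ is independent every \textsf{$R$-CIST} of $G$ lives in a simple graph underlying $G$, and I would interpret the statement there. The same contradiction also rules out two trees using distinct parallel copies of an edge $uv$.

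The main point requiring care is the nonemptiness of interiors. The case where a tree is a single edge between two terminals is excluded precisely because $R$ is independent. Without independence, the example of $R=\{u,v\}$ with parallel edges is where the argument would break.
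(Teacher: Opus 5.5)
Your proof is correct and follows the paper's route: the paper reads this corollary off \Cref{prop: idst}, and you reproduce that argument, noting that because $G[R]$ has no edges the shared edge cannot exist at all (so no parallel edge is needed, and $R$-parallel trees coincide). Your observation that independence alone forces $\mathrm{int}(T_i)\neq\emptyset$ is a small but genuine addition, since it also covers $|R|=2$, which the hypothesis $|R|\ge 3$ of \Cref{prop: idst} would otherwise leave out.
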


To demonstrate that the presence of parallel edges in $G[R]$ can simplify computing the size of a maximum \textsf{$R$-CIST}, we take complete bipartite graphs as an example. In \cite{cisst}, the authors consider the problem of computing a maximum \textsf{$R$-CIST} for simple complete-bipartite graphs. This is more challenging than computing a maximum \textsf{$R$-IDST}, which we do in \Cref{theorem: complete bipartite graphs}. Together with \Cref{prop: idst}, this will imply simple bounds for a maximum \textsf{$R$-CIST} in complete bipartite multi-graphs.

\begin{theorem}\label{theorem: complete bipartite graphs}
  Let $G$ be a simple complete bipartite graph with partite sets $(A,B)$, and $R \subseteq V(G)$ be a set of terminals.
  \begin{enumerate}
      \item If $R \subseteq A$, then a maximum \textsf{$R$-IDST} of $G$ has size exactly equal to $|B|$; (The case where $R \subseteq B$ is symmetric) 
      \item Else, if $|R \cap A| = 1$ and $|R \cap B| > 0$, then a maximum \textsf{$R$-IDST} of $G$ has size exactly equal to $\min(|A|, |B|+1)$; (The case where $|R \cap B| = 1$ is symmetric)
      \item Else, if $|R \cap A| > 1$ and $|R \cap B| > 1$, then a maximum \textsf{$R$-IDST} of $G$ has size exactly equal to $\min(|A|, |B|)$
  \end{enumerate}
\end{theorem}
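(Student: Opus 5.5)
The plan is to prove each of the three cases separately. In each case we first derive an upper bound by showing that every $R$-Steiner tree must have internal nodes on certain sides of the bipartition, and then use pairwise disjointness of interiors. We then match this bound with an explicit family of stars or double stars. Throughout, the key observation is that a non-terminal vertex of an $R$-Steiner tree $T$ is never a leaf (since $L(T) \subseteq R$), so every non-terminal in $V(T)$ lies in $\mathrm{int}(T)$. A second observation is that any path in $G$ between two vertices on the same side passes through an internal vertex on the other side.

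\emph{Case 1} ($R \subseteq A$). Each tree $T_i$ contains two terminals of $A$, so the path between them contains a vertex of $B$. That vertex is a non-terminal, hence lies in $\mathrm{int}(T_i)$. Since the interiors are pairwise disjoint, the number of trees is at most $|B|$. Conversely, for each $b \in B$ the star centred at $b$ with leaf set $R$ is an $R$-Steiner tree, and these $|B|$ stars have pairwise disjoint interiors $\{b\}$.

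\emph{Case 3} ($|R\cap A|,|R\cap B| \geq 2$). By the same argument applied to both sides, each tree has an internal node in $A$ (on a path between two terminals of $B$) and an internal node in $B$. This gives the upper bound $\min(|A|,|B|)$. For the construction, set $m=\min(|A|,|B|)$ and choose distinct $a_1,\dots,a_m \in A$ and distinct $b_1,\dots,b_m \in B$. Let $T_i$ be the double star consisting of the edge $a_ib_i$, edges from $a_i$ to every vertex of $(R\cap B)\setminus\{b_i\}$, and edges from $b_i$ to every vertex of $(R\cap A)\setminus\{a_i\}$. Its leaves lie in $R$, and $\mathrm{int}(T_i)=\{a_i,b_i\}$. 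These interiors are pairwise disjoint, so we obtain an $R$-IDST of size $m$.

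\emph{Case 2} ($R\cap A=\{a\}$). This is the delicate case, and I expect it to be the main obstacle; the upper bound $|B|+1$ is where the work lies.

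\textbf{Upper bound.} Assume $|R\cap B|\geq 2$. Then, as before, every tree has an internal node in $A$, which gives the bound $|A|$. For the bound $|B|+1$, I would show that a tree $T$ with $\mathrm{int}(T)\subseteq A$ must be the star centred at $a$. All $B$-vertices of such a $T$ are leaves. So any two internal $A$-vertices would have to be joined through a $B$-vertex, which would then be internal; hence $T$ has exactly one internal vertex $c \in A$, and $T$ is a star centred at $c$. Since $a\in V(T)$ and $a$ is not adjacent to any other vertex of $A$, we must have $c=a$. As $a$ is internal in at most one tree, at most one tree avoids $B$ in its interior, and the bound is $|B|+1$.

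\textbf{Construction.} Let $m=\min(|A|,|B|+1)$. Take $T_0$ to be the star centred at $a$ with leaf set $R\cap B$. Pick distinct $a_1,\dots,a_{m-1}\in A\setminus\{a\}$ and distinct $b_1,\dots,b_{m-1}\in B$; this is possible since $m-1\le |A|-1$ and $m-1\le |B|$. Let $T_i$ be the double star with centres $a_i$ and $b_i$, where $a_i$ is joined to all of $(R\cap B)\cup\{b_i\}$ and $b_i$ is joined to $a$. Its leaves lie in $R$, and $\mathrm{int}(T_i)=\{a_i,b_i\}$, which is disjoint from $\mathrm{int}(T_0)=\{a\}$ and from the other interiors.

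\textbf{Degenerate subcase.} I would check separately the subcase $|R\cap B|=1$, i.e.\ $|R|=2$ with $R=\{a,b\}$. Here the trees are $(a,b)$-paths and the single edge $ab$ has empty interior, so the counting changes. I would verify directly what the maximum is, and if necessary state that the theorem implicitly assumes $|R|\geq 3$ (as in \Cref{prop: idst}). In this subcase I expect the count to come out as $\min(|A|,|B|)$ rather than $\min(|A|,|B|+1)$.
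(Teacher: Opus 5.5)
Your treatment of Cases 1 and 3, and of Case 2 when $|R\cap B|\ge 2$, is correct and is essentially the paper's argument. The upper bounds come from forcing an interior vertex on the appropriate side(s) and using disjointness of interiors. Your double stars are exactly the paper's trees: edges $a_ib_i$ of a matching (in $G$, resp.\ in $G-a$) extended by attaching the remaining terminals as leaves, plus the star at $a$ in Case 2. Your Case 2 upper bound, that a tree with no interior vertex in $B$ must be the star centred at $a$, is a cleaner version of the paper's sub-case (ii).

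Your suspicion about the degenerate subcase is right, and you should commit to it rather than hedge: item 2 is false as stated when $|R\cap B|=1$ and $|A|>|B|$.

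\emph{Why.} With $R=\{a,b\}$, every $R$-Steiner tree is an $(a,b)$-path. The edge $ab$ has empty interior. Any other $(a,b)$-path has odd length at least $3$, so its interior meets both $A\setminus\{a\}$ and $B\setminus\{b\}$. Hence the maximum is $1+\min(|A|-1,|B|-1)=\min(|A|,|B|)$. It is attained by $ab$ together with paths $a\,b_i\,a_i\,b$ for distinct $a_i\in A\setminus\{a\}$ and $b_i\in B\setminus\{b\}$.

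\emph{Smallest counterexample.} Take $G=K_{2,1}$ with $A=\{a,a'\}$, $B=\{b\}$ and $R=\{a,b\}$. The only $R$-Steiner tree is the edge $ab$, so the maximum is $1$, whereas $\min(|A|,|B|+1)=2$.

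\emph{The paper's proof has the same flaw.} Its sub-case (i) actually derives the bound $\min(|A|-1,|B|-1)+1=\min(|A|,|B|)$. Its matching construction also breaks here. When $|A|>|B|$, a maximum matching of $G-a$ saturates $B$, so it uses an edge $ub$. Extending $ub$ by $a$ gives the path $a\,b\,u$, whose leaf $u$ is a non-terminal; pruning it collapses the tree to the edge $ab$, which is already used.

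So item 2 should require $|R\cap B|\ge 2$, equivalently $|R|\ge 3$, matching the assumption of the proposition on $R$-IDSTs in multigraphs. When $|R|=2$, the correct value is $\min(|A|,|B|)$.
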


\begin{proof}
We prove Case (1) first, then Case (3), and prove Case (2) last. If $R \subseteq A$, then any \textsf{$R$-IDST} of $G$ contains one internal node from $B$. Therefore, the size of a maximum \textsf{$R$-IDST} is at most $|B|$. Indeed, one can construct such a set of trees by considering $|B|$ stars, each having a distinct vertex of $B$ as an internal node. 

For Case (3), we first show that there exists an \textsf{$R$-IDST} of size $\min(|A|, |B|)$. For this, we simply construct a maximum matching in $G$, and extend each edge $uv$ of the matching to an $R$-Steiner tree by attaching each terminal in $R \setminus \{u,v\}$ as a leaf. This gives us \textsf{$R$-IDST} of size $\min(|A|, |B|)$. Next, we show that any \textsf{$R$-IDST} $\mathcal{T}$ has size at most $\min\{|A|,|B|\}$. Since $|R \cap A| > 1$ and $|R \cap B| > 1$, every tree of $\mathcal{T}$ must have at least one internal node from $A$ and one from $B$. Since the interior nodes of any two trees in $\mathcal{T}$ are disjoint, $|\mathcal{T}| \leq \min(|A|, |B|)$.

For Case (2), we can construct an \textsf{$R$-IDST} of size $\min(|A|, |B|+1)$ as follows. We start with a maximum matching in the graph $G- (R\cap A)$, and extend each edge $uv$ of the matching to an $R$-Steiner tree by attaching each terminal in $R \setminus \{u,v\}$ as a leaf. This gives us a collection of trees of size $\min\{|A|-1,|B|\}$. To this collection, we add the star with $R \cap A$ as the interior node and $R \cap B$ as its leaves. (If $|R \cap B| = 1$, then the tree is not a star, but simply an edge.) With this, the size of the collection of trees becomes $\min(|A|, |B|+1)$, and since the interior nodes of any two trees are disjoint, this is an \textsf{$R$-IDST}.

We now show that every \textsf{$R$-IDST} $\mathcal{T}$ has size at most $\min(|A|, |B|+1)$. As with Case (3), if every tree of $\mathcal{T}$ has one internal node from $A$ and one from $B$, then $|\mathcal{T}| \leq \min(|A|, |B|) \leq \min(|A|, |B|+1)$. So, we assume that there exists a tree $T$ where all vertices of $A \cap R$ or $B \cap R$ occur as leaves. 
\begin{itemize}
    \item Case (i): If $|B \cap R| = 1$, then $T$ is the edge connecting $A \cap R$ and $B \cap R$. This implies that every tree in $\mathcal{T}\setminus \{T\}$ has one internal node from $A \setminus R$ and one internal node from $B \setminus R$. Therefore, $|\mathcal{T}| \leq  \min\{|A|-1,|B|-1\}+1 \leq \min(|A|,|B|+1)$.

    \item Case (ii): If $|B \cap R| > 1$, then $T$ is a star with the vertex of $A \cap R$ as its internal node. This implies that every tree in $\mathcal{T}\setminus \{T\}$ has one internal node from $A \setminus R$ and one internal node from $B$. Therefore, $|\mathcal{T}| \leq  \min\{|A|-1,|B|\}+1 = \min(|A|,|B|+1)$.\qedhere
\end{itemize}
\end{proof}

\subsection{Characterisation by Connected \texorpdfstring{$R$}{R}-Dominating Sets}\label{subsec: connected r-dominating sets}

In this section, we give a characterisation of \textsf{$R$-CIST} that is inspired by the \textsf{CIST}-partition theorem given by Araki \cite{araki_dirac}, stated below.  

\begin{theorem}[Theorem 2.3 in \cite{araki_dirac}]\label{theorem: araki-cist-partition}
   An undirected graph $G$ has a \textsf{CIST} of size $k$ if and only if there exists a partition $\{V_1, \dots, V_k\}$ of $V(G)$ such that: 
    \begin{enumerate}
        \item $G[V_i]$ is connected for all $i \in [k]$; and
        \item For all distinct $i,j \in [k]$, no component of the bipartite graph $B(V_i,V_j,G)$ is a tree  
    \end{enumerate}
\end{theorem}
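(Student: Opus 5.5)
The plan is to prove both directions of Araki's partition theorem using Hasunuma's characterisation (\Cref{theorem: cist characterisation}): a set of spanning trees is a \textsf{CIST} iff the trees are pairwise edge-disjoint and every vertex is internal in at most one tree.

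\emph{Forward direction.} Let $\{T_1,\dots,T_k\}$ be a \textsf{CIST}. Define $V_i$ as the set of vertices internal in $T_i$. Vertices that are leaves in every tree are assigned arbitrarily, say to the part containing a neighbour in $T_1$. For $k \geq 2$, every $T_i$ has internal vertices, and the sets $\mathrm{int}(T_i)$ are pairwise disjoint, so we obtain a partition.

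\begin{enumerate}
\item \emph{Connectivity.} $T_i[\mathrm{int}(T_i)]$ is connected, since deleting leaves from a tree leaves a tree. Adding a vertex $v$ that is a leaf in all trees to the part $V_i$ of its $T_1$-neighbour keeps $G[V_i]$ connected: that neighbour is internal in $T_1$, hence lies in $V_1$. So the natural rule is to put every such vertex into $V_1$.
\item \emph{No tree components.} Fix $i\ne j$ and a component $C$ of $B(V_i,V_j,G)$. Every vertex $v\in V_j\cap C$ that is internal in $T_j$ is a leaf of $T_i$, so its unique $T_i$-edge goes to a vertex in $\mathrm{int}(T_i)$. That edge lies in $B(V_i,V_j,G)$, and hence in $C$. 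Symmetrically, each vertex of $V_i\cap C$ has a $T_j$-edge inside $C$. These edges are distinct, because the trees are edge-disjoint and an edge cannot serve as the $T_i$-edge of one endpoint and the $T_j$-edge of the other. Hence $C$ has at least $|V(C)|$ edges and so contains a cycle.
\end{enumerate}

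The vertices that are leaves in all trees and were placed in $V_1$ need care in step 2. Such a vertex in $V_1$ has a $T_j$-edge, but its endpoint might not lie in $V_j$. I expect the fix is to assign these vertices more cleverly, or to check that the counting argument still goes through. This is the main obstacle on this side.

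\emph{Reverse direction.} Given the partition, the plan is to build $T_i$ as follows:
\begin{enumerate}
\item take a spanning tree of $G[V_i]$;
\item for each $j\ne i$ and each component $C$ of $B(V_i,V_j,G)$, attach every vertex of $V_j\cap C$ as a leaf via one edge of $C$ to $V_i$.
\end{enumerate}
Then $\mathrm{int}(T_i)\subseteq V_i$, so the internal sets are disjoint. What remains is edge-disjointness between $T_i$ and $T_j$. Both trees use only edges of $B(V_i,V_j,G)$ between these two parts, and within each component $C$ we need an orientation in which every vertex picks a distinct out-edge:
\begin{itemize}
\item the $V_j$-vertices of $C$ pick their edge for $T_i$;
\item the $V_i$-vertices of $C$ pick their edge for $T_j$.
\end{itemize}
Such a choice exists exactly when $C$ has an orientation with every outdegree at least $1$. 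This holds iff $C$ is not a tree, since a connected graph with at least as many edges as vertices contains a cycle: orient the cycle cyclically and orient all remaining edges towards it. An isolated vertex would be a tree, so it is excluded by the hypothesis, and every vertex of $V_j$ indeed has a neighbour in $V_i$.

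Applying Hasunuma's theorem then finishes the argument. The crux in both directions is the equivalence ``component not a tree $\iff$ it admits an orientation with all outdegrees $\geq 1$''. Besides that, I expect the main difficulty to be the bookkeeping for vertices that are leaves in all trees in the forward direction. My first attempt there is to modify the trees: make each such vertex internal in $T_1$ by rerouting.
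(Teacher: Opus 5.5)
Your plan is the standard argument. The paper quotes this statement from Araki without proving it, but your forward construction is exactly the one the paper uses for its directed analogue, \Cref{thm: CISA partition}: take $V_i=\mathrm{int}(T_i)$ and distribute the leftover vertices arbitrarily. Your reverse direction is complete as written, including the orientation argument and the exclusion of isolated vertices.

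In the forward direction, however, the obstacle you leave open does not exist, so no rerouting is needed. Your claim that the $T_j$-neighbour of an all-leaf vertex $v\in V_1$ ``might not lie in $V_j$'' is false. The vertex $v$ is a leaf of $T_j$, and the unique $T_j$-neighbour of a leaf of $T_j$ is an internal vertex of $T_j$. Hence it lies in $\mathrm{int}(T_j)\subseteq V_j$, however the all-leaf vertices were distributed. This is the same fact you already used for vertices internal in $T_i$.

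Your counting argument only needs two things, both of which hold:
\begin{itemize}
\item every vertex of $V_i$ is a leaf of $T_j$, and every vertex of $V_j$ is a leaf of $T_i$ (for internal vertices by Hasunuma's condition, and trivially for all-leaf vertices);
\item the neighbour of such a leaf is internal in the other tree, and so lies in the other part.
\end{itemize}
So every vertex of a component $C$ of $B(V_i,V_j,G)$ gets its designated edge inside $C$. Your injectivity argument then applies verbatim and gives $|E(C)|\ge|V(C)|$.

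The all-leaf vertices can in fact go into any part. If $v$ is placed in $V_i$, its $T_i$-neighbour lies in $\mathrm{int}(T_i)\subseteq V_i$, so $G[V_i]$ stays connected.

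You silently use $|V(G)|\ge 3$, so that each $T_j$ has an internal vertex and every leaf's neighbour is internal. For simple graphs this follows from $k\ge2$. Since this paper allows parallel edges, note that two vertices joined by $k\ge3$ parallel edges have a \textsf{CIST} of size $k$ but no partition into $k$ parts. The statement should therefore be read for simple graphs, as in Araki's paper, or under $|V(G)|\ge3$.
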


Here, $B(V_i,V_j,G)$ is the bipartite subgraph of $G$ consisting of all edges with one end in $V_i$ and the other end in $V_j$.

\medskip
Our characterisation of \textsf{$R$-CIST} is given in \Cref{theorem: CIST partition}. This theorem requires that $R$ be an independent set of terminals; this assumption is justified by \Cref{proposition: subdivision}. Before we proceed further, we will require the definition of connected $R$-dominating sets.  

\begin{definition}[Connected $R$-Dominating Set]
 Let $R$ be a set of terminals. A set of vertices $Q$ is called a connected $R$-dominating set if $G[Q]$ is connected and every vertex of $R$ either belongs to $Q$ or is adjacent to some vertex of $Q$.
\end{definition}

\begin{theorem}\label{theorem: CIST partition}
    Let $R$ be an independent set of terminals in a graph $G$. There exists an \textsf{$R$-CIST} of size $k$ in $G$ if and only if there exists disjoint subsets $\{V_1, \dots, V_k\}$ of $V(G)$ such that $V_i$ is a connected $R$-dominating set for all $i \in [k]$. 
\end{theorem}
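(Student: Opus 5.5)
The plan is to prove both directions through \Cref{theorem: disjoint internal nodes}, or equivalently \Cref{Cor: ridst to rcsit}. Since $R$ is independent, an \textsf{$R$-CIST} is the same thing as a family of $R$-Steiner trees with pairwise disjoint interiors. So we only need to pass between such trees and their interiors.

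\emph{Forward direction.} Let $\{T_1,\dots,T_k\}$ be an \textsf{$R$-CIST}, and set $V_i := \mathrm{int}(T_i)$.
\begin{enumerate}
\item Each $V_i$ is nonempty. A tree with no internal node is a single edge, and both its ends would be leaves. Leaves lie in $R$, so this would be an edge inside $R$, contradicting independence.
\item $G[V_i]$ is connected. Deleting all leaves of a tree with at least one internal node leaves a tree, namely $T_i[\mathrm{int}(T_i)]$, which is a subgraph of $G[V_i]$.
\item $V_i$ is $R$-dominating. Every terminal of $T_i$ is either in $\mathrm{int}(T_i)$, or it is a leaf. A leaf's unique neighbour in $T_i$ is internal, again because $T_i$ is not a single edge.
\item The sets are pairwise disjoint by \Cref{theorem: disjoint internal nodes}.
\end{enumerate}

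\emph{Reverse direction.} Let $V_1,\dots,V_k$ be disjoint connected $R$-dominating sets; each is nonempty since $|R|\ge 2$.
\begin{enumerate}
\item Take a spanning tree $S_i$ of $G[V_i]$.
\item For each terminal $r\in R\setminus V_i$, add one edge from $r$ to a neighbour of $r$ in $V_i$; such a neighbour exists by domination. The result is a tree containing $R$.
\item Repeatedly delete non-terminal leaves. This gives an $R$-Steiner tree $T_i$, because what remains is connected and still contains all of $R$.
\item Every vertex of $T_i$ outside $V_i$ is an attached terminal of degree $1$. Hence $\mathrm{int}(T_i)\subseteq V_i$, and the interiors are pairwise disjoint.
\end{enumerate}

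\emph{Edge-disjointness.} This is the step I expect to need the most care, although it is short. Rather than invoking \Cref{Cor: ridst to rcsit}, I would verify it directly. Suppose $e=uv\in E(T_i)\cap E(T_j)$.
\begin{enumerate}
\item $T_i$ is not a single edge, for the same reason as in the forward direction. So some endpoint, say $u$, is in $\mathrm{int}(T_i)$.
\item Interiors are disjoint, so $u$ is a leaf of $T_j$. Then $v$ is internal in $T_j$, and hence $v$ is a leaf of $T_i$.
\item Leaves lie in $R$, so $u,v\in R$ are adjacent. This contradicts the independence of $R$.
\end{enumerate}

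Thus the trees are edge-disjoint with disjoint interiors, and \Cref{theorem: disjoint internal nodes} shows that $\{T_1,\dots,T_k\}$ is an \textsf{$R$-CIST}.
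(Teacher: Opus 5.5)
Your proof is correct and follows the paper's route. Interiors of the trees serve as the disjoint connected $R$-dominating sets, and conversely spanning trees of the $G[V_i]$, with the remaining terminals attached as leaves, give the \textsf{$R$-CIST}; you inline the edge-disjointness argument of \Cref{prop: idst} instead of citing \Cref{Cor: ridst to rcsit}. Your pruning of non-terminal leaves is a small but real improvement: the paper's construction omits it, so a spanning tree of $G[V_i]$ could leave non-terminal leaves and the resulting trees need not be $R$-Steiner trees.
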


\begin{proof}
    Suppose that $G$ has an \textsf{$R$-CIST} $\{T_1, \dots, T_k\}$. For all $i \in [k]$, let $V_i := \mathrm{int}(T_i)$. Then, by \cref{theorem: disjoint internal nodes}, $V_i \cap V_j = \emptyset$ for all $i \neq j$, and since $T_i$ is an $R$-Steiner tree, $\mathrm{int}(T_i) = V_i$ is a connected $R$-dominating set. This proves one direction. For the other direction, suppose that there exist disjoint subsets $\{V_1, \dots, V_k\}$ such that $V_i$ is a connected $R$-dominating set for all $i \in [k]$. Then, we build a spanning tree for each $G[V_i]$ and add every vertex of $R \setminus V_i$ as a leaf of the spanning tree; let this tree be $T_i$. By construction, $\{T_1, \dots, T_k\}$ is an \textsf{$R$-IDST} of $G$, and therefore, also an \textsf{$R$-CIST} of $G$ (\Cref{Cor: ridst to rcsit}).
\end{proof}

\begin{corollary}\label{cor: connected r-dominating sets}
 Let $R$ be an independent set of terminals in a graph $G$. If $G$ has an \textsf{$R$-CIST} consisting of $p$ pendant $R$-Steiner trees and $q$ non-pendant $R$-Steiner trees, then $G$ has disjoint connected $R$-dominating sets $\{V_1, \dots, V_p, W_1, \dots, W_q\}$ such that $V_i \cap R = \emptyset$ for all $i \in [p]$ and $W_i \cap R \neq \emptyset$ for all $i \in [q]$.    
\end{corollary}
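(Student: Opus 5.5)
We follow the forward direction of the proof of \Cref{theorem: CIST partition} and keep track of which trees are pendant. Let $\{T_1,\dots,T_p,T_1',\dots,T_q'\}$ be the given \textsf{$R$-CIST}, where $T_1,\dots,T_p$ are pendant $R$-Steiner trees and $T_1',\dots,T_q'$ are non-pendant $R$-Steiner trees. We set $V_i := \mathrm{int}(T_i)$ for $i \in [p]$ and $W_i := \mathrm{int}(T_i')$ for $i \in [q]$. By \Cref{theorem: disjoint internal nodes}, the internal node sets of any two trees in an \textsf{$R$-CIST} are disjoint. Hence the $p+q$ sets are pairwise disjoint.

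Next we check that each of these sets is a connected $R$-dominating set. For any tree $T$, removing all leaves leaves a connected subtree, so $G[\mathrm{int}(T)]$ is connected. One caveat: if $T$ is a single edge, then $\mathrm{int}(T)=\emptyset$. This case cannot arise. Since $R$ is independent, two terminals are never adjacent, so every $R$-Steiner tree has a non-terminal or an internal terminal, and hence at least one internal node. Now take a terminal $r \in R$. Either $r \in \mathrm{int}(T)$, or $r$ is a leaf of $T$. In the latter case, its unique neighbour in $T$ is not a leaf, because otherwise $T$ would be a single edge. So that neighbour lies in $\mathrm{int}(T)$, and $\mathrm{int}(T)$ dominates $r$.

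Finally, we verify the conditions involving $R$. If $T_i$ is pendant, then $L(T_i)=R$, so $\mathrm{int}(T_i)\cap R=\emptyset$, which gives $V_i\cap R=\emptyset$. If $T_i'$ is non-pendant, then by definition $\mathrm{int}(T_i')\cap R\neq\emptyset$, which gives $W_i\cap R\neq\emptyset$.

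The only delicate point is the degenerate single-edge tree in the domination argument, and the independence of $R$ rules it out.
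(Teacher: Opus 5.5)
Your proof is correct and is essentially the argument the paper relies on: the corollary is read off from the forward direction of \Cref{theorem: CIST partition}, taking each set to be the internal nodes of one tree, using \Cref{theorem: disjoint internal nodes} for disjointness, and noting that pendant trees have no internal terminals while non-pendant ones do. Your explicit exclusion of the single-edge tree via the independence of $R$ (which is what guarantees each $\mathrm{int}(T)$ is nonempty and dominates $R$) is a detail the paper leaves implicit, and you handle it correctly.
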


In \Cref{thm: minimal vertex cut}, we give a simple application of \Cref{theorem: CIST partition}.

\begin{proposition}\label{thm: minimal vertex cut}
    If $R$ is a minimal vertex cut of $G$ and $G - R$ has $k$ components, then $G$ has a \textsf{Pendant} \textsf{$R$-CIST} of size $k$.
\end{proposition}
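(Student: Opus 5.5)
The plan is to take the vertex sets of the $k$ components of $G - R$ as $V_1, \dots, V_k$ and check that each is a connected $R$-dominating set disjoint from $R$. The pendant trees are then built directly from these sets, which also avoids any need for $R$ to be independent.

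Let $C_1, \dots, C_k$ be the components of $G - R$ and set $V_i := V(C_i)$. These sets are pairwise disjoint, disjoint from $R$, and each induces a connected subgraph. The key step is domination: we claim every terminal $r \in R$ has a neighbour in every $V_i$. Suppose instead that some $r \in R$ has no neighbour in $V_j$. Then $R \setminus \{r\}$ already separates $V_j$ from the remaining vertices of $G - (R\setminus\{r\})$. Indeed, $V_j$ is separated from the other components by $R$, and $r$ gives no connection into $V_j$. Since $k \geq 2$, there is another component, so $R \setminus \{r\}$ is still a vertex cut, contradicting minimality. (If $k=1$, the statement is trivial: take the single tree described below.) This minimality argument is the only real step, and I expect it to be the main point to verify carefully, including the edge case where $r$'s removal from the cut might leave $G - (R \setminus \{r\})$ connected only through $r$.

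Now build the trees. For each $i$, take a spanning tree of $C_i$ and attach every terminal $r \in R$ as a leaf through one edge to a neighbour in $V_i$; call the result $T_i$. Each $T_i$ is a tree containing $R$. If $|V_i| \geq 2$, every vertex of $V_i$ either lies on the spanning tree with degree at least $1$ or carries attached terminals, so we must check that the leaves are exactly $R$. A leaf of the spanning tree of $C_i$ that has no terminal attached could be a non-terminal leaf. To handle this, we use that every vertex of $V_i$ is adjacent to a terminal. Otherwise, take $R$-minimality again: the cut structure gives this? That argument does not work, so instead we prune such leaves repeatedly. Pruning a non-terminal leaf preserves connectivity and coverage of $R$, since terminals are attached elsewhere. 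The result is a tree with $L(T_i) \subseteq R$ and $\mathrm{int}(T_i) \subseteq V_i$.

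Since $|R| \geq 2$ and every internal node lies in $V_i$, which is disjoint from $R$, no terminal is internal. Hence $L(T_i) = R$, and each $T_i$ is a pendant $R$-Steiner tree. The sets $\mathrm{int}(T_i) \subseteq V_i$ are pairwise disjoint. Every edge of $T_i$ has an end in $V_i$, so edges of different trees are distinct, because no edge joins $V_i$ to $V_j$. By \Cref{theorem: disjoint internal nodes}, $\{T_1,\dots,T_k\}$ is therefore a \textsf{Pendant} \textsf{$R$-CIST} of size $k$.
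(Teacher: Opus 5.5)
Your proof is correct and is essentially the paper's. In both, minimality of $R$ forces every terminal to have a neighbour in every component of $G-R$, so the components are disjoint connected $R$-dominating sets. The paper then simply invokes \Cref{theorem: CIST partition}, whereas you inline that construction, prune non-terminal leaves explicitly, and check \Cref{theorem: disjoint internal nodes} directly; this also sidesteps that theorem's hypothesis that $R$ is independent, which the proposition does not assume.

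Do delete the abandoned claim that every vertex of $V_i$ is adjacent to a terminal. It is false in general, and the pruning step already makes it unnecessary.
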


\begin{proof}
Since $R$ is a minimal vertex cut, every vertex of $R$ has a neighbour in each component of $G-R$. Therefore, each component of $G-R$ is a connected $R$-dominating set.
\end{proof}

Since \Cref{theorem: CIST partition} is stated for graphs with an independent set of terminals, we will frequently consider auxiliary graphs obtained by subdividing edges between terminals to make it an independent set. In view of this, it will be useful to define the graph $\tilde{G}_R$.

\begin{definition}[$\tilde{G}_R$]
    For any graph $G$ and a set of terminals $R \subseteq V(G)$, we define $\tilde{G}_R$ to be the graph homeomorphic to $G$ obtained by subdividing every edge of $G[R]$ exactly once.  
\end{definition}

\subsubsection{Planar Graphs}

\Cref{theorem: CIST partition} can be used to show that planar graphs do not have \textsf{$R$-CIST} of large sizes. 

\begin{theorem}\label{theorem: planar graphs}
    For any planar graph $G$ and a set of terminals $R$:
    \begin{enumerate}       
        \item If $|R| = 3$, then any \textsf{$R$-CIST} of $G$ has size at most 5. 

        \item If $|R| \geq 4$, then any \textsf{$R$-CIST} of $G$ has size at most 4.
    \end{enumerate}
\end{theorem}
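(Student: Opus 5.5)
Pass to $\tilde{G}_R$, which is planar, so that $R$ is independent. By \Cref{proposition: subdivision}, $\tilde{G}_R$ has an \textsf{$R$-CIST} of the same size as $G$. By \Cref{theorem: CIST partition}, an \textsf{$R$-CIST} $\{T_1,\dots,T_k\}$ then yields pairwise disjoint connected $R$-dominating sets $V_1,\dots,V_k$.

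The idea is to contract each $V_i$ to a single vertex $x_i$. Since the $V_i$ are disjoint and each $G[V_i]$ is connected, this is a minor operation and planarity is preserved. In the resulting minor, each $x_i$ is either a terminal itself (if $V_i\cap R\neq\emptyset$) or adjacent to every terminal.

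First handle the pendant trees, i.e.\ those with $V_i\cap R=\emptyset$. Let $p$ be their number. The contracted vertices $x_1,\dots,x_p$ are non-terminals, each adjacent to all terminals in $R$. If $|R|\ge 3$ and $p\ge 3$, then any three terminals together with three such $x_i$ give a $K_{3,3}$ minor, which is impossible. Hence $p\le 2$.

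Next handle the non-pendant trees. Each $W_j$ contains at least one terminal, and distinct $W_j$ contain distinct terminals, so $q\le |R|$ by \Cref{obs: number of RCIST}. Each $W_j$ also dominates all of $R$. After contraction, the $q$ vertices $y_j$ are pairwise adjacent: $y_j$ contains a terminal $r_j$, and $W_{j'}$ dominates $r_j$ while being disjoint from $W_j$, so some vertex of $W_{j'}$ is adjacent to $r_j\in W_j$. Thus the $y_j$ form a $K_q$ minor, giving $q\le 4$.

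Combining the two types: every $x_i$ is adjacent to every $y_j$, since $y_j$ contains a terminal dominated by $V_i$, and the $x_i$ are non-terminals. So $x$'s and $y$'s together give $K_{p+q}$ minus the edges among the $x$'s, i.e.\ a complete multipartite-like minor.
\begin{itemize}
\item If $|R|\ge 4$, the $x$'s are also adjacent to the terminals outside the $W_j$, so I would aim to rule out $p+q\ge 5$ case by case.
\item $p=0$ forces $q\le 4$.
\item $p=1$ with $q=4$ gives $K_5$.
\item $p=2$ with $q=3$ gives $K_5$ minus an edge plus two more terminals; this needs extra terminals, namely terminals not in the $W_j$, or the ability to split a $W_j$ containing two terminals, to build a $K_{3,3}$ or $K_5$.
\end{itemize}
For $p=2$, $q=3$ and $|R|\ge 4$: take the three $y$'s, and as the third "$x$" use a fourth terminal $r$, which is adjacent to both $x$'s and lies outside or inside some $W_j$. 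Then find $K_{3,3}$ with sides $\{x_1,x_2,y_1\}$ and $\{y_2,y_3,r'\}$, adjusting the choice. When $|R|=3$ the bound is 5, which is consistent with $p=2$, $q=3$ being achievable.

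The main obstacle will be this fine case analysis, particularly $|R|\ge4$ with $p=2$, $q=3$, and also $p=2,q\le 2$ vs.\ needing $p+q\le 4$. There, terminals may lie inside the $W_j$, so contracted vertices are not cleanly separated from terminals. I would first try to keep one terminal $r_j\in W_j$ as a representative and contract $W_j\setminus$ suitable pieces. Alternatively, I would count edges: the minor on $\{x_i\}\cup\{y_j\}\cup(R\setminus\bigcup W_j)$ must satisfy $|E|\le 3n-6$, or $\le 2n-4$ for its bipartite part. I would use Euler's bound to finish the remaining configurations.
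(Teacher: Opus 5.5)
Everything up to the final case analysis is sound. The reduction to $\tilde G_R$ works, $p\le 2$ follows via $K_{3,3}$, $q\le\min(|R|,4)$ holds, and every $x_i$ is adjacent to every $y_j$. This already proves part (1), since $p+q\le 2+3$ when $|R|=3$; you should state that explicitly. For $|R|\ge 4$ it also rules out $(p,q)=(1,4)$ and $(2,4)$ via $K_5$. The case $p=2$, $q\le 2$ that you flag is not an issue, since then $p+q\le 4$.

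The genuine gap is the one case that matters, $(p,q)=(2,3)$ with $|R|\ge 4$, which you leave open. Your proposed $K_{3,3}$ with sides $\{x_1,x_2,y_1\}$ and $\{y_2,y_3,r'\}$ needs a terminal $r'$ outside $W_1\cup W_2\cup W_3$. If such a terminal exists, $\{x_1,y_1,y_2,y_3,r'\}$ is already a $K_5$, so that subcase is easy. If every terminal lies inside the $W_j$ (e.g.\ $|R|=4$ with two terminals in $W_1$), there is no such $r'$. Contracting all five sets then gives in general only $K_5$ minus the edge $x_1x_2$, which is planar and has exactly $3\cdot 5-6=9$ edges. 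So neither your $K_{3,3}$ nor an Euler count on the contracted graph yields a contradiction.

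The missing idea is to leave the terminal-carrying sets uncontracted and use the terminals themselves as one side of a $K_{3,3}$. Suppose all terminals lie in $W_1\cup W_2\cup W_3$ and $|R|\ge 4$. Then some $W_j$, say $W_1$, contains two terminals $r_1,r_2$, and $W_2$ contains a terminal $r_3$. The two terminal-free sets (your $x_1,x_2$) and $W_3$ are pairwise disjoint and connected. Each avoids $\{r_1,r_2,r_3\}$ and dominates it. Contracting these three sets therefore gives a $K_{3,3}$ whose other side is $\{r_1,r_2,r_3\}$. This is exactly the paper's second case: two sets jointly containing three terminals, dominated by three further disjoint sets. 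With this step added, your case split by $(p,q)$ becomes a complete proof.
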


\begin{proof}
    We first prove (1). For the sake of contradiction, suppose that $G$ an \textsf{$R$-CIST} of size at least 6. From \Cref{proposition: subdivision}, $\tilde{G}_R$ also has an \textsf{$R$-CIST} of size $k \geq 6$. From \Cref{theorem: CIST partition}, we know that there exist $k$ disjoint connected $R$-dominating sets $\{V_1,\dots,V_k\}$. Of these sets, there exist 3 sets, say $\{V_1,V_2,V_3\}$, such that $R \subseteq V_1 \cup V_2 \cup V_3$. Since $\{V_4,V_5,V_6\}$ dominate $R$, and $|R| = 3$, we get a $K_{3,3}$-minor in $\tilde{G}_R$---this contradicts the planarity of $\tilde{G}_R$.

    Next, we prove (2). For the sake of contradiction, suppose that there exists an \textsf{$R$-CIST} of size $k \geq 5$. From \Cref{proposition: subdivision}, $\tilde{G}_R$ also has an \textsf{$R$-CIST} of size $k \geq 5$, and from \Cref{theorem: CIST partition}, we know that $\tilde{G}_R$ has $k$ disjoint connected $R$-dominating sets $\{V_1,\dots,V_k\}$. First, consider the case where each $V_i$ contains at most one terminal vertex. Let $\{V_1, V_2, V_3, V_4, V_5\}$ be sets such that $|V_i \cap R| = 1$ for $i \in [4]$, and $V_5 \cap R = \emptyset$. By contracting each $V_i$ into a single vertex, we get a $K_5$-minor. Next, consider the case where there exist two sets, say $V_1$ and $V_2$, that together contain three terminal vertices, say $\{r_1,r_2,r_3\}$. Since all of $\{V_3,V_4,V_5\}$ all dominate $\{r_1,r_2,r_3\}$, we get a $K_{3,3}$-minor. In either case, this contradicts the planarity of $\tilde{G}_R$.
\end{proof}

It is easy to construct examples which demonstrate that the bounds in \Cref{theorem: planar graphs} cannot be improved; see \cref{fig: planar graphs} for example.

\begin{figure}
\centering

\begin{subfigure}[b]{0.48\textwidth}
    \centering
    \includegraphics[scale = 0.75, page = 1]{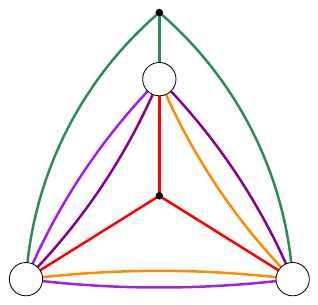}
\end{subfigure}
\hfill
\begin{subfigure}[b]{0.48\textwidth}
    \centering
    \includegraphics[scale = 0.75, page = 2]{Figures/planar_example.pdf}
\end{subfigure}

\caption{Examples for \Cref{theorem: planar graphs} that demonstrate its tightness. The terminals are shown with white disks. On the left is a planar graph with $|R| = 3$ and an \textsf{$R$-CIST} of size 5, where each tree is a star (shown in distinct colors). On the right is another planar graph with $|R|=4$ and an \textsf{$R$-CIST} of size 3, where each tree is again a star (shown in distinct colors).}
\label{fig: planar graphs}
\end{figure}

\subsubsection{Graphs of Bounded Treewidth}

First, we review the definitions of tree decomposition and treewidth. A \emph{tree decomposition} of a graph $G$ is a tree $\mathcal{T}$ whose nodes are associated with subsets of $V(G)$ called \emph{bags} such that: (1) every vertex $v$ of $G$ belongs to at least one bag of $\mathcal{T}$; (2) for every edge $uv$ of $G$, there exists a bag $Y\in V(\mathcal{T})$ with $u,v\in Y$; and (3) for every vertex $v$ of $G$, the set of bags containing $v$ forms a connected subtree of $\mathcal{T}$. 
The \emph{width} of a tree decomposition is $\max_{Y\in V(\mathcal{T})} |Y|-1$,
and the \emph{treewidth} of a graph $G$, denoted by $tw(G)$, is the smallest width of a tree
decomposition of $G$. For example, the treewidth of the complete graph $K_n$ is $n-1$, and the treewidth of the complete bipartite graph $K_{p,q}$ is $\min\{p,q\}$. Let $G$ and $H$ be two graphs. If $G$ is a minor of $H$, then $tw(G) \leq tw(H)$. As a consequence, if $G$ contains $K_n$ or $K_{p,q}$ as a minor, then $tw(G) \geq n-1$ or $tw(G) \geq \min\{p,q\}$, respectively. If a graph $H$ is a subdivision of a graph $G$, then $tw(H) \leq tw(G)$; since we also have $tw(G) \leq tw(H)$, due to $G$ being a minor of $H$, we have $tw(H) = tw(G)$. In \Cref{theorem: bdd tw pendant,theorem: bounded treewidth,cor: bounded tw bounded rcist}, we use these facts to show that graphs with bounded treewidth have only bounded-sized \textsf{$R$-CIST}.

\begin{theorem}\label{theorem: bdd tw pendant}
If $G$ is a graph of treewidth at most $w$ and $|R| \geq w+1$, then any \textsf{Pendant} \textsf{$R$-CIST} of $G$ has size at most $w$.
\end{theorem}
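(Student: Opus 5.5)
We argue by contradiction via a $K_{p,q}$-minor, as in the proof of \Cref{theorem: planar graphs}. Suppose $G$ has a \textsf{Pendant} \textsf{$R$-CIST} $\{T_1,\dots,T_k\}$ with $k \geq w+1$. By \Cref{proposition: subdivision}, subdividing the edges of $G[R]$ gives a \textsf{Pendant} \textsf{$R$-CIST} of the same size in $\tilde{G}_R$. Subdivision does not change treewidth, so $tw(\tilde{G}_R) = tw(G) \leq w$. Here the subdivided trees remain pendant: their leaves are still exactly $R$, and the subdivision vertices become internal non-terminal nodes. Since $R$ is independent in $\tilde{G}_R$, \Cref{cor: connected r-dominating sets} with $q=0$ gives $k$ pairwise disjoint connected $R$-dominating sets $V_1,\dots,V_k$ with $V_i \cap R = \emptyset$ for all $i$. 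Concretely, $V_i = \mathrm{int}(T_i)$, and in a pendant tree no terminal is internal.

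Next, build the minor. Contract each $G[V_i]$, which is connected, to a single vertex $x_i$. The sets $V_i$ are disjoint from each other and from $R$, and every terminal has a neighbour in every $V_i$. Hence, after deleting all vertices outside $R \cup V_1 \cup \dots \cup V_k$, we obtain a graph containing the complete bipartite graph between $\{x_1,\dots,x_k\}$ and $R$. So $K_{k,|R|}$ is a minor of $\tilde{G}_R$.

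Finally, from the facts recalled before the statement, $tw(\tilde{G}_R) \geq \min\{k,|R|\} \geq w+1$, using $k \geq w+1$ and $|R| \geq w+1$. This contradicts $tw(\tilde{G}_R) \leq w$, so every \textsf{Pendant} \textsf{$R$-CIST} has size at most $w$.

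The only delicate step is making sure the terminals sit on one side of the $K_{k,|R|}$ and are untouched by the contractions. This requires $V_i \cap R = \emptyset$, which is exactly where pendancy is used. Independence of $R$ is needed only to apply the partition characterisation, and the subdivision to $\tilde{G}_R$ provides it without changing treewidth.
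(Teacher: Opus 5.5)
Your proposal is correct and is essentially the paper's own proof. It passes to $\tilde{G}_R$ via \Cref{proposition: subdivision}, takes the terminal-free disjoint connected $R$-dominating sets $\mathrm{int}(T_i)$ from \Cref{cor: connected r-dominating sets}, contracts them to a $K_{k,|R|}$ minor, and contradicts $tw(\tilde{G}_R)=tw(G)\le w$. Your remarks on why the subdivided trees stay pendant and why the terminals survive the contractions make explicit what the paper leaves implicit.

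One caveat, which the paper shares: the step $tw(\tilde{G}_R)=tw(G)$ fails for multigraphs when $w=1$, and the statement itself is false there. Take two terminals joined by $m\ge 2$ parallel edges. This graph has treewidth $1$, its parallel edges form a \textsf{Pendant} \textsf{$R$-CIST} of size $m$, and $\tilde{G}_R=K_{2,m}$ has treewidth $2$. So the argument is valid only when $w\ge 2$ or $G[R]$ has no parallel edges.
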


\begin{proof}
    Suppose, for the sake of contradiction, that $G$ has a \textsf{Pendant} \textsf{$R$-CIST} of size $p \geq w+1$. Consider the graph $\tilde{G}_R$, and note that $tw(\tilde{G}_R) = tw(G)$, since $\tilde{G}_R$ is a subdivision of $G$. By \Cref{proposition: subdivision}, $\tilde{G}_R$ also has a \textsf{Pendant} \textsf{$R$-CIST} of size $p \geq w+1$. By \Cref{cor: connected r-dominating sets}, there exist connected $R$-dominating sets $\{V_1, \dots, V_p\}$ in $\tilde{G}_R$ such that $V_i \cap R = \emptyset$ for all $i \in [p]$. By contracting each $G[V_i]$ to a single vertex, we find a $K_{p,r}$ minor in $\tilde{G}_R$, where $r = |R|$. This implies that $tw(G) \geq \min\{p,r\} \geq w+1$---this is a contradiction. 
\end{proof}

In \Cref{theorem: bdd tw pendant}, it is not possible to relax the number of terminals to less than $w+1$, and at the same time, also bound the size of a maximum \textsf{Pendant $R$-CIST}. For example, let $G$ be the complete bipartite graph $K_{r,t}$, and let the part containing the $r$ vertices be the set of terminals. If $r \leq w$, then $tw(G) \leq w$, but $G$ has a \textsf{Pendant} \textsf{$R$-CIST} of size $t$, where each tree is a star with a distinct internal node from the set of $t$ vertices. Since $t$ can be arbitrarily large, the size of a maximum \textsf{Pendant} \textsf{$R$-CIST} is not bounded.

\begin{theorem}\label{theorem: bounded treewidth}
If $G$ is a graph of treewidth at most $w$, then any \textsf{Non-pendant} \textsf{$R$-CIST} of $G$ has size at most $w+1$. 
\end{theorem}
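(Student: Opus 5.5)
The plan is to mirror the proof of \Cref{theorem: bdd tw pendant}, but now producing a clique minor instead of a complete bipartite minor. Suppose, for contradiction, that $G$ has a \textsf{Non-pendant} \textsf{$R$-CIST} of size $q \geq w+2$.

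\textbf{Reduction.} Pass to $\tilde{G}_R$, which has the same treewidth as $G$. By \Cref{proposition: subdivision}, the subdivided trees form a \textsf{Non-pendant} \textsf{$R$-CIST} of $\tilde{G}_R$ of the same size. Non-pendancy is preserved because a terminal that is internal in $T_i$ remains internal in its subdivision.

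\textbf{Dominating sets.} By \Cref{cor: connected r-dominating sets}, we obtain pairwise disjoint connected $R$-dominating sets $W_1,\dots,W_q$ in $\tilde{G}_R$, each containing at least one terminal. Pick a terminal $r_i \in W_i$ for each $i \in [q]$. For $j \neq i$, the set $W_j$ dominates $r_i$, and $r_i \notin W_j$ by disjointness. Hence $r_i$ has a neighbour in $W_j$, so there is an edge between $W_i$ and $W_j$.

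\textbf{Clique minor.} Contracting each connected set $\tilde{G}_R[W_i]$ to a single vertex therefore yields a $K_q$ minor. This gives $tw(G) = tw(\tilde{G}_R) \geq q-1 \geq w+1$, a contradiction. So $q \leq w+1$.

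The only delicate point is the adjacency step: it must be a genuine edge between $W_i$ and $W_j$. This needs $r_i \notin W_j$, which holds by disjointness of the sets, and domination of $r_i$ by $W_j$, which \Cref{cor: connected r-dominating sets} provides. The other ingredients are just the facts on treewidth of minors and subdivisions recalled before \Cref{theorem: bdd tw pendant}.

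Alternatively, one could argue directly with $\mathrm{int}(T_i)$ in $G$. Each $\mathrm{int}(T_i)$ is connected, and the two sets are disjoint by \Cref{theorem: disjoint internal nodes}. A terminal internal to $T_i$ lies in $T_j$ as a leaf whose neighbour in $T_j$ is in $\mathrm{int}(T_j)$, or else $T_j$ is a single edge, which cannot happen when $|R|\ge 3$ since $T_j$ is non-pendant. This also gives the clique minor, but the subdivision route above handles all cases uniformly, so I would present that one.
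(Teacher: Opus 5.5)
Your main argument is essentially the paper's proof. You pass to $\tilde{G}_R$, take the pairwise disjoint connected $R$-dominating sets $W_i$ from \Cref{cor: connected r-dominating sets} (each containing a terminal), and contract them to a $K_{w+2}$ minor; your explicit check that a terminal of $W_i$ has a neighbour in $W_j$ is the adjacency step the paper leaves implicit. Your alternative argument, working directly with the sets $\mathrm{int}(T_i)$ in $G$, is also correct and is actually the more robust route, because it does not need $tw(\tilde{G}_R)=tw(G)$, an identity that can fail for multigraphs of treewidth $1$ (subdividing two parallel edges between terminals produces a cycle).
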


\begin{proof}
Suppose, for contradiction, that $G$ has a \textsf{Non-pendant} \textsf{$R$-CIST} of size more than $w+1$. Then, $\tilde{G}_R$ also has a \textsf{Non-pendant} \textsf{$R$-CIST} of size more than $w+1$ (\Cref{proposition: subdivision}). By \Cref{cor: connected r-dominating sets}, there exist disjoint connected $R$-dominating sets $\{V_1, \dots, V_{w+2}\}$ in $\tilde{G}_R$ such that $V_i \cap R \neq \emptyset$. By contracting each $G[V_i]$ to a single vertex, we get the complete graph $K_{w+2}$ as a minor in $\tilde{G}_R$---this is a contradiction since $tw(\tilde{G}_R) = tw(G) \leq w$. 
\end{proof}

\begin{corollary}\label{cor: bounded tw bounded rcist}
  If $G$ is a graph of treewidth at most $w$ and $|R| \geq w+1$, then any \textsf{$R$-CIST} of $G$ has size at most $2w$.  
\end{corollary}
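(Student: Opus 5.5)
Split an arbitrary \textsf{$R$-CIST} $\mathcal{T}$ of $G$ into $\mathcal{T}_p$, the pendant $R$-Steiner trees, and $\mathcal{T}_n$, the non-pendant ones. Every $R$-Steiner tree is one of these two kinds, so $|\mathcal{T}| = |\mathcal{T}_p| + |\mathcal{T}_n|$. Any subset of an \textsf{$R$-CIST} is again an \textsf{$R$-CIST}, because the pairwise conditions of \Cref{def: independent steiner trees} (equivalently, those of \Cref{theorem: disjoint internal nodes}) are inherited by subsets. Hence $\mathcal{T}_p$ is a \textsf{Pendant} \textsf{$R$-CIST} and $\mathcal{T}_n$ is a \textsf{Non-pendant} \textsf{$R$-CIST} of $G$. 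Since $tw(G)\le w$ and $|R|\ge w+1$, \Cref{theorem: bdd tw pendant} gives $|\mathcal{T}_p|\le w$, and \Cref{theorem: bounded treewidth} gives $|\mathcal{T}_n|\le w+1$. Adding these only yields $2w+1$, so one unit must be saved.

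To save it, I will rule out the extremal case $|\mathcal{T}_n| = w+1$ with $\mathcal{T}_p\neq\emptyset$. Assume both hold and pass to $\tilde{G}_R$, which has the same treewidth and, by \Cref{proposition: subdivision}, still contains the \textsf{$R$-CIST}. By \Cref{cor: connected r-dominating sets} we obtain disjoint connected $R$-dominating sets $W_1,\dots,W_{w+1}$, each meeting $R$, together with one further set $V_1$ that is disjoint from them, connected, $R$-dominating, and contains no terminal. Contract each $W_i$ and $V_1$ to a single vertex. Each $W_i$ contains a terminal $r_i\in W_i$, and every other set dominates $r_i$, so the contracted vertices are pairwise adjacent. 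The $w+2$ sets therefore produce a $K_{w+2}$ minor, contradicting $tw(\tilde{G}_R)\le w$. It follows that if $|\mathcal{T}_n|=w+1$ then $\mathcal{T}_p=\emptyset$, and so $|\mathcal{T}|\le w+1\le 2w$ for $w\ge1$. Otherwise $|\mathcal{T}_n|\le w$, and $|\mathcal{T}|\le w+w=2w$.

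The step that needs the most care is the adjacency claim in the minor argument. For $W_i$ and $V_1$ to be adjacent we need a vertex of $V_1$ adjacent to some vertex of $W_i$; this holds because $V_1$ dominates $r_i\in W_i$ and $r_i\notin V_1$. For $W_i$ and $W_j$ the same reasoning applies using $r_i\in W_i\setminus W_j$. The remaining point is the degenerate case $w=0$, where $|\mathcal{T}|\le 1$ must be compared against the bound $0$. A graph of treewidth $0$ has no edges, so when $|R|\ge 2$ no $R$-Steiner tree exists and the bound $0$ holds trivially.
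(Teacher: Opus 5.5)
Your proof is essentially the paper's: the same split into pendant and non-pendant trees, the same bounds $w$ and $w+1$ from \Cref{theorem: bdd tw pendant,theorem: bounded treewidth}, and the same $K_{w+2}$ minor in $\tilde{G}_R$ obtained by contracting one terminal-free set $V_1$ together with $w+1$ terminal-containing sets $W_i$. You phrase the extremal step as ``$|\mathcal{T}_n|=w+1$ forces $\mathcal{T}_p=\emptyset$'' and treat $w=0$ separately, which is slightly cleaner than the paper: the paper's version needs a set $V_1$ that does not exist when $w=0$.

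There is one caveat, which you inherit from the paper rather than introduce. The claim that $\tilde{G}_R$ has the same treewidth as $G$ fails for multigraphs when $w=1$. You use this claim in your extremal step, and the paper uses it inside the proof of \Cref{theorem: bdd tw pendant}. For a concrete failure, let $G$ consist of two vertices $u,v$ joined by three parallel edges, and let $R=\{u,v\}$. Then $tw(G)=1$ and $|R|=2=w+1$, but $\tilde{G}_R\cong K_{2,3}$ has treewidth $2$. Moreover, the three single-edge trees form an \textsf{$R$-CIST} of size $3>2w$, so the statement itself fails in this case. In general, subdividing edges only gives $tw(\tilde{G}_R)\le\max\{tw(G),2\}$. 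Your argument, like the paper's, is therefore sound for $w\ge 2$, and for $w=1$ only when no two terminals are joined by parallel edges.
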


\begin{proof}
 From \Cref{theorem: bdd tw pendant,theorem: bounded treewidth}, it is clear that any \textsf{$R$-CIST} of $G$ has size at most $2w+1$. However, if $G$ were to have an \textsf{$R$-CIST} of size $2w+1$, then there must be exactly $w+1$ non-pendant $R$-Steiner trees, and $w$ pendant $R$-Steiner trees. By \Cref{proposition: subdivision}, the same holds true for $\tilde{G}_R$, and by \Cref{cor: connected r-dominating sets}, there exist disjoint connected $R$-dominating sets $\{V_1, W_1\dots, W_{w+1}\}$ such that $V_1 \cap R = \emptyset$ and $W_i \cap R \neq \emptyset$ for all $i \in [w+1]$. Contracting each of these into a single vertex gives us a $K_{w+2}$ minor in $\tilde{G}_R$, which is a contradiction since $tw(\tilde{G}_R) = tw(G) \leq w$. 
\end{proof}

\subsection{Characterisation by \textsf{CIST} in Directed Graphs}\label{subsec: cisa}

We now give a third characterisation that links \textsf{$R$-CIST} in a graph $G$ to \textsf{CIST} in a minor of $G$. The minor that we consider here will be a directed graph (digraph). Before proceeding further, we give the notation used for digraphs in this paper. We use the term \emph{arc} instead of edge for digraphs.  If a digraph contains an arc from vertex $u$ to vertex $v$, then the arc is written as $(u,v)$. (For undirected graphs, the edge is simply written as $uv$.) The set of all arcs in a digraph $G$ will be denoted by $A(G)$. For any vertex $v \in V(G)$, the \emph{in-degree} (number of in-coming arcs) and \emph{out-degree} (number of out-going arcs) of $v$ are denoted by $d_G^-(v)$ and $d_G^+(v)$, respectively. We use $\delta^-(G) := \min_{v} d^-_G(v)$ and $\delta^+(G) := \min_{v} d^+_G(v)$ to denote the \emph{minimum in-degree} and \emph{minimum out-degree} of $G$, respectively. The \emph{minimum semi-degree} of $G$ is $\delta^0(G) := \min\{\delta^-(G),\delta^+(G)\}$. 

\begin{definition}[Arborescence]\label{def: arbor}
    A tree $T$ in a digraph $G$ is called an \emph{arborescence} if, for all vertices $v \in T$, except for one vertex called the \emph{root}, $d^-_T(v) = 1$. (The root vertex has in-degree 0.) 
\end{definition}

Using \Cref{def: arbor}, one can define an analogue of \textsf{CIST} for digraphs, as shown below.

\begin{definition}[Completely Independent Spanning Arborescences (\textsf{CISA})]\label{def: cisa}
A set of spanning arborescences $\{T_1, \dots, T_k\}$ is \textsf{CISA} in a digraph $G$ if $\mathrm{int}(T_i) \cap \mathrm{int}(T_j) = \emptyset$ and $A(T_i) \cap A(T_j) = \emptyset$ for all distinct $i,j \in [k]$.
\end{definition}

\Cref{theorem: CISA fwd,thm: CISA rcist 2,theorem: CISA bwd} give the connection between \textsf{$R$-CIST} in $G$ and \textsf{CISA} in a minor of $G$ whose edges are all directed; the description of this minor is given below. 

\begin{definition}[Directed $R$-minor]\label{def: directed minor}
Let $\mathcal{P} := \{V_1, \dots, V_{|R|}\}$ be a partition of $V(G)$ such that $|V_i \cap R| = 1$ and $G[V_i]$ is connected for all $i \in [|R|]$. The \emph{directed $R$-minor of $G$} induced by $\mathcal{P}$ is the digraph with vertex set $\mathcal{P}$, and in which there is an arc $(V_i,V_j)$ if and only if there is an edge $ij \in E(G)$, where $i \in V_i$ and $j \in V_j \cap R$. (In particular, if two terminals are adjacent in $G$, then there is a directed 2-cycle in the directed $R$-minor.)
\end{definition}

\begin{theorem}\label{theorem: CISA fwd}
If $G$ has a \textsf{Pendant} \textsf{$R$-CIST} or a \textsf{Non-pendant $R$-CIST} of size $k \leq |R|$, then there exists a directed $R$-minor of $G$ that has a \textsf{CISA} of size $k$.
\end{theorem}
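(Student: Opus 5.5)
The plan is to build the partition $\mathcal{P}$ directly from the interiors of the trees, so that each $R$-Steiner tree $T_i$ "contracts" to a spanning arborescence $A_i$ of the directed $R$-minor. Throughout I use \Cref{theorem: disjoint internal nodes}: the sets $\mathrm{int}(T_1),\dots,\mathrm{int}(T_k)$ are pairwise disjoint, and each $G[\mathrm{int}(T_i)]$ is connected. I also use that every terminal not in $\mathrm{int}(T_i)$ is a leaf of $T_i$ adjacent to a vertex of $\mathrm{int}(T_i)$. The construction has two stages. First, choose disjoint connected \emph{cores}, each containing exactly one terminal, that together cover $\bigcup_i \mathrm{int}(T_i)$ and every terminal. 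Second, absorb the remaining vertices of $G$ one at a time into an adjacent part; this is possible since $G$ is connected. Absorption keeps parts connected and disjoint. It can only add arcs to the directed $R$-minor, never remove them, so the arborescences found in the core stage survive.

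\emph{Pendant case.} Each $\mathrm{int}(T_i)$ consists of non-terminals. Since $k \le |R|$, I pick distinct terminals $r_1,\dots,r_k$ and let the core of $r_i$ be $\{r_i\}\cup \mathrm{int}(T_i)$. This core is connected because $r_i$ is a leaf of $T_i$ attached to $\mathrm{int}(T_i)$. Every other terminal gets a singleton core. For each terminal $r\neq r_i$, its neighbour in $T_i$ lies in $\mathrm{int}(T_i)\subseteq V_{r_i}$, which gives the arc $(V_{r_i},V_r)$. Hence $A_i$ is an out-star rooted at $V_{r_i}$ that spans $\mathcal{P}$. The internal node of $A_i$ (when $|R|\ge 3$) is $V_{r_i}$, and these are distinct for distinct $i$. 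Every arc of $A_i$ leaves $V_{r_i}$, so the arc sets are disjoint.

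\emph{Non-pendant case.} Each $\mathrm{int}(T_i)$ contains at least one terminal, and it may contain several, so I must split it. Fix a terminal $r_i\in\mathrm{int}(T_i)$, root the tree $T_i[\mathrm{int}(T_i)]$ at $r_i$, and assign each internal non-terminal to its nearest terminal ancestor. This splits $\mathrm{int}(T_i)$ into connected pieces, each consisting of one terminal together with a subtree hanging below it. The key property is that every edge of $T_i$ between two different pieces (or from a piece to a terminal leaf of $T_i$) enters the lower part \emph{at its terminal}. This is exactly the form of edge required by \Cref{def: directed minor}. Orienting $T_i$ away from $r_i$ and contracting the pieces therefore yields arcs of the directed $R$-minor, and $A_i$ is an arborescence rooted at the piece of $r_i$. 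It spans $\mathcal{P}$ because every terminal lies in $T_i$, and every part contains exactly one terminal.

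The internal nodes of $A_i$ are exactly the parts whose terminal is in $\mathrm{int}(T_i)$. By \Cref{theorem: disjoint internal nodes}, these parts are distinct for distinct $i$. Every arc of $A_i$ leaves an internal node of $A_i$, so disjointness of internal nodes forces $A(A_i)\cap A(A_j)=\emptyset$. The step I expect to need the most care is this splitting of $\mathrm{int}(T_i)$: showing that inter-piece edges always land on a terminal, and that the absorption stage does not break this. Absorption only enlarges parts without moving terminals, so arcs are preserved.
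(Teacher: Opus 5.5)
Your proposal is correct and follows essentially the same route as the paper. Your pendant-case cores $\{r_i\}\cup\mathrm{int}(T_i)$ are the paper's contraction of $\mathrm{int}(T_i)$ into a distinct terminal. Your non-pendant pieces (a terminal together with the non-terminals having it as nearest terminal ancestor) are exactly the paper's maximal subtrees $T_i(v)$ with $V(T_i(v))\cap R=\{v\}$. Your absorption step matches contracting the components of $G-V(\mathcal{T})$. You also supply checks the paper leaves implicit: that each inter-piece edge enters its lower part at a terminal, so it is a legitimate arc of the directed $R$-minor, and that each arc of $A_i$ has its tail in a part whose terminal lies in $\mathrm{int}(T_i)$, which gives arc-disjointness.
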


\begin{proof}
Let $\mathcal{T} := \{T_1, \dots, T_k\}$ be either a \textsf{Pendant $R$-CIST} or a \textsf{Non-pendant $R$-CIST} of $G$. We contract different connected subgraphs of $G$ into the terminal vertices, and the directed $R$-minor of $G$ thus induced will have a \textsf{CISA} of size $k$. Let $V(\mathcal{T}) := \bigcup_{i\in [k]} V(T_i)$. At the outset, we contract each component of $G - V(\mathcal{T})$ into some vertex of $V(\mathcal{T})$. Therafter, we deal with the cases when $\mathcal{T}$ is a \textsf{Pendant} \textsf{$R$-CIST} and a \textsf{Non-pendant} \textsf{$R$-CIST} separately.  

If $\mathcal{T}$ is a \textsf{Pendant} \textsf{$R$-CIST}, then we contract each $\mathrm{int}(T_i)$ into a distinct terminal vertex $r_i$; this is feasible since $|\mathcal{T}| \leq |R|$. This gives us a directed $R$-minor with a \textsf{CISA} $\{T_1', \dots, T_k'\}$, where each tree $T_i'$ is a spanning star-arborescence rooted at $r_i$. 

If $\mathcal{T}$ is a \textsf{Non-pendant $R$-CIST}, we first root each tree $T_i$ at an internal terminal node, say $r_i$. For each terminal $v \in R$ and $T_i \in \mathcal{T}$, let $T_i(v)$ be the maximal sub-tree of $T_i$ rooted at $v$ such that $V(T_i(v)) \cap R = v$. Since $\mathrm{int}(T_i) \cap \mathrm{int}(T_j) = \emptyset$ for all distinct $i,j \in [k]$ (\Cref{theorem: disjoint internal nodes}), $T_i(v) \cap T_j(v) = v$ for all distinct $i,j \in [k]$. For each terminal $v \in R$ and tree $T_i \in \mathcal{T}$, we contract $V(T_i(v))$ into $v$. This gives us a \textsf{CISA} $\{T_1', \dots, T_k'\}$ in the corresponding directed $R$-minor, where each $T_i'$ is a spanning arborescence rooted at $r_i$.
\end{proof}

\begin{figure}
\centering

\begin{subfigure}[b]{0.48\textwidth}
    \centering
    \includegraphics[scale = 1, page = 1]{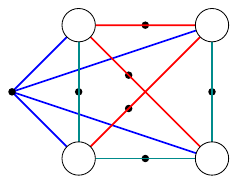}
\end{subfigure}
\hfill
\begin{subfigure}[b]{0.48\textwidth}
    \centering
    \includegraphics[scale = 1, page = 2]{Figures/K4_subdivision.pdf}
\end{subfigure}
\caption{On the left is a graph that is a subdivision of $K_5$. The terminal vertices are white disks, and the non-terminals are black dots. This graph has an \textsf{$R$-CIST} of size 3, consisting of one pendant and two non-pendant $R$-Steiner trees (trees are highlighted with different colours). Any directed $R$-minor of this graph is a digraph whose underlying simple graph is $K_4$ (see right). Moreover, this digraph has a single vertex incident with three pairs of parallel edges, with one edge from each pair directed away from the vertex. One can easily argue that there does not exist a \textsf{CISA} of size 3 in such a graph.}
\label{fig: k4 subdivision}
\end{figure}

In \Cref{theorem: CISA fwd}, the requirement that the sizes of the \textsf{Pendant $R$-CIST} and \textsf{Non-pendant $R$-CIST} be at most $|R|$ is natural since the maximum size of a $\textsf{CISA}$ in any directed $R$-minor of $G$ is at most $|R|$, unless $|R| = 2$. Although one cannot generalise \Cref{theorem: CISA fwd} to any $\textsf{$R$-CIST}$ of size $\leq |R|$ without further assumptions (see \Cref{fig: k4 subdivision}), the statement holds true for all \textsf{$R$-CIST} of size 2, as we show in \Cref{thm: CISA rcist 2}.

\begin{theorem}\label{thm: CISA rcist 2}
If a graph $G$ has an \textsf{$R$-CIST} of size 2, then there is a directed $R$-minor of $G$ that has a \textsf{CISA} of size 2.   
\end{theorem}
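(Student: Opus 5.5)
The plan is to split by the types of the two trees. Let $\{T_1,T_2\}$ be an \textsf{$R$-CIST} of $G$. If both trees are pendant, or both are non-pendant, then since $|R|\ge 2$ we have $k=2\le |R|$. \Cref{theorem: CISA fwd} then applies directly and gives a directed $R$-minor with a \textsf{CISA} of size 2. So the only new case is a mixed one: $T_1$ is a pendant $R$-Steiner tree and $T_2$ is a non-pendant $R$-Steiner tree.

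In that case $|R|\ge 3$. Indeed, with $|R|=2$ every $R$-Steiner tree is a path whose two ends are the two terminals, so it is pendant. By \Cref{theorem: disjoint internal nodes}, $\mathrm{int}(T_1)\cap\mathrm{int}(T_2)=\emptyset$. Also $\mathrm{int}(T_1)\cap R=\emptyset$, and every vertex of $T_2$ is either in $\mathrm{int}(T_2)$ or a terminal leaf. Hence $\mathrm{int}(T_1)$ is disjoint from $V(T_2)$.

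The construction combines the two constructions in the proof of \Cref{theorem: CISA fwd}.
\begin{enumerate}
\item Root $T_2$ at an internal terminal $r_2$. For every $v\in R$, contract the subtree $T_2(v)$ (the maximal subtree rooted at $v$ containing no other terminal) into $v$.
\item Since $L(T_2)\subseteq R$ and $T_2$ has at least two leaves, choose a terminal $r_1\ne r_2$ that is a leaf of $T_2$. Contract the connected set $\mathrm{int}(T_1)$ into $r_1$; this is valid because $r_1$ has a neighbour in $\mathrm{int}(T_1)$.
\item Contract each remaining component of $G-(V(T_1)\cup V(T_2))$ into an adjacent already-used vertex.
\end{enumerate}
The parts are connected and disjoint, and each contains exactly one terminal, so this yields a partition as in \Cref{def: directed minor}.

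Next I would check that the images form a \textsf{CISA}.
\begin{itemize}
\item $T_2$ becomes a spanning arborescence $T_2'$ rooted at $r_2$, exactly as in \Cref{theorem: CISA fwd}.
\item Every terminal $v\ne r_1$ is a leaf of the pendant tree $T_1$, attached to some vertex of $\mathrm{int}(T_1)$. That edge becomes an arc from the part of $r_1$ to the part of $v$. So $T_1'$ is a spanning star-arborescence rooted at $r_1$, with $\mathrm{int}(T_1')=\{r_1\}$.
\item Since $r_1$ is a leaf of $T_2$, the subtree $T_2(r_1)$ is just $\{r_1\}$, and $r_1$ has no out-arcs in $T_2'$. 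Hence $r_1\notin \mathrm{int}(T_2')$, and $\mathrm{int}(T_1')\cap\mathrm{int}(T_2')=\emptyset$.
\item The arcs of $T_1'$ and $T_2'$ come from distinct edges of $G$, because $T_1$ and $T_2$ are edge-disjoint. Moreover, every arc of $T_1'$ leaves the part of $r_1$, while no arc of $T_2'$ does. So the arc sets are disjoint even as ordered pairs.
\end{itemize}

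The main obstacle is the choice of $r_1$. Contracting $\mathrm{int}(T_1)$ into an arbitrary terminal could make $r_1$ internal in both arborescences. Choosing $r_1$ as a leaf of $T_2$ different from $r_2$ is the key step, and the argument needs to verify it carefully.
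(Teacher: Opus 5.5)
Your proposal is correct and follows the paper's proof. The pure cases reduce to \Cref{theorem: CISA fwd}. In the mixed case you use the same contractions: $\mathrm{int}(T_1)$ into a terminal leaf of the non-pendant tree, each $T_2(v)$ into $v$, and the leftover components of $G-(V(T_1)\cup V(T_2))$ absorbed into adjacent parts. Your additional checks make explicit what the paper leaves to the reader: $|R|\ge 3$, $\mathrm{int}(T_1)\cap V(T_2)=\emptyset$, the part of $r_1$ being a leaf of $T_2'$, and arc-disjointness even as ordered pairs.
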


\begin{proof}
If the \textsf{$R$-CIST} $\mathcal{T}$ is a \textsf{Pendant $R$-CIST} or a \textsf{Non-pendant $R$-CIST}, then we are already done by \Cref{theorem: CISA fwd}. So, we assume that $\mathcal{T} : = \{T_p,T_{np}\}$, where $T_p$ a pendant $R$-Steiner tree and $T_{np}$ is a non-pendant $R$-Steiner tree. As in the proof of \Cref{theorem: CISA fwd}, we contract each component of $G - V(\mathcal{T})$ into some vertex of $V(\mathcal{T})$. Then, we contract all the interior nodes of $T_p$ into a terminal vertex that is a leaf of $T_{np}$. Then, as in the proof of \Cref{theorem: CISA fwd}, we consider for each terminal $v \in R$, the maximal sub-tree $T_{np}(v)$, and contract $V(T_{np}(v))$ into $v$. This will give us a \textsf{CISA} of size 2.   
\end{proof}

In \Cref{theorem: CISA bwd}, we give the converse of \Cref{theorem: CISA fwd}, but without qualifying the \textsf{$R$-CIST} as pendant or non-pendant. 

\begin{theorem}\label{theorem: CISA bwd}
If there exists a directed $R$-minor of $G$ that has a \textsf{CISA} of size $k$, then $G$ has an \textsf{$R$-CIST} of size $k$.
\end{theorem}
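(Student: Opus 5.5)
Starting from a \textsf{CISA} $\{T_1',\dots,T_k'\}$ in the directed $R$-minor $D$ induced by $\mathcal{P} = \{V_1,\dots,V_{|R|}\}$, we will lift each arborescence $T_i'$ to a tree $T_i$ in $G$ and then verify the conditions of \Cref{theorem: disjoint internal nodes}. Write $r_j$ for the unique terminal in $V_j$. By \Cref{def: directed minor}, every arc $(V_a,V_b)$ of $D$ comes from an edge $xr_b \in E(G)$ with $x \in V_a$ and $r_b$ the terminal of $V_b$. Fix one such edge $e(\alpha)$ for each arc $\alpha$, distinct arcs receiving distinct edges; this is possible because parallel arcs correspond to distinct edges. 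For each non-leaf $V_a$ of $T_i'$, let $S_a^i$ be a minimal subtree of $G[V_a]$ containing $r_a$ and the tails (in $V_a$) of the chosen edges of the out-arcs of $V_a$ in $T_i'$; it exists since $G[V_a]$ is connected. Define $T_i$ as the union of the trees $S_a^i$ over non-leaf $V_a$ of $T_i'$ together with the chosen edges $e(\alpha)$, $\alpha \in A(T_i')$. A leaf $V_b$ of $T_i'$ contributes only $r_b$, reached by its unique in-arc edge. Since $T_i'$ is a spanning arborescence, every terminal is covered, and $T_i$ is a tree: each $V_b$ other than the root is entered exactly once, at $r_b$, and the pieces are disjoint trees glued along a tree pattern. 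Pruning non-terminal leaves if necessary, $T_i$ is an $R$-Steiner tree.

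Next we check disjointness. The trees are edge-disjoint: edges between parts are the distinct chosen edges of distinct arcs, because $A(T_i')\cap A(T_j')=\emptyset$. Edges inside $V_a$ are used only by trees in which $V_a$ is a non-leaf. If $V_a$ is internal in $T_i'$, then $\mathrm{int}(T_i')\cap\mathrm{int}(T_j')=\emptyset$ implies that $V_a$ is a leaf of every other $T_j'$, so only $T_i$ uses $G[V_a]$. For internal nodes, $\mathrm{int}(T_i)$ lies inside the union of the parts $V_a$ with $V_a \in \mathrm{int}(T_i')$, plus possibly the root part. In every other part, $T_i$ contains only the terminal $r_b$, with degree $1$. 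The parts carrying internal vertices of different trees are therefore disjoint, so $\mathrm{int}(T_i)\cap\mathrm{int}(T_j)=\emptyset$, and \Cref{theorem: disjoint internal nodes} gives that $\{T_1,\dots,T_k\}$ is an \textsf{$R$-CIST}.

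The main obstacle is the root, together with a root of out-degree $1$, which by \Cref{def: cisa} might count as a leaf of $T_i'$ yet also be the root of some $T_j'$. This could let two trees both use $G[V_a]$: for example, $V_a$ is a root of out-degree $1$ in $T_i'$ and an internal node of $T_j'$. First we will handle this by observing that when a root $V_a$ has out-degree $1$, $S_a^i$ can be taken as just the path from $r_a$ to the tail $x$. We then attempt to argue that either $x=r_a$ can be chosen, or the root is a leaf in the undirected sense, so that $\mathrm{int}$ of $T_i'$ excludes it and no other tree uses $V_a$ internally. The fallback is to re-root or re-choose edges so that $T_i$ meets $V_a$ only in $r_a$, possibly routing through $G[V_a]$ only when $V_a$ is internal. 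We will treat $|R|=2$ (directed 2-cycles, parallel edges) as a separate trivial case.
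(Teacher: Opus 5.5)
Your lifting is the paper's construction: a tree inside every part that an arborescence leaves through, plus one $G$-edge per arc. The paper uses a spanning tree $Y_v$ of $G[v]$ for $v\in\mathrm{int}(T_i)$ and then prunes. The gap is the one you flag in your last paragraph, and it cannot be closed. Under the paper's definition of $\mathrm{int}$ ($d_T(v)\ge 2$, so an out-degree-$1$ root is a leaf, as you note), the statement is false. Your escapes do not work. The choice $x=r_a$ is impossible whenever $R$ is independent. A part that is a leaf of $T_i'$ can still be needed in its interior by several trees.

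Concretely, let $G$ have vertices $r_1,r_2,r_3,x,y,z$ and edges $r_1x, xr_2, xr_3, r_2y, yr_3, r_3z, zr_2$. Take $R=\{r_1,r_2,r_3\}$, which is independent, and parts $V_1=\{r_1,x\}$, $V_2=\{r_2,y\}$, $V_3=\{r_3,z\}$. The directed $R$-minor has arcs $(V_1,V_2),(V_1,V_3),(V_2,V_3),(V_3,V_2)$. The directed paths $V_1V_2V_3$ and $V_1V_3V_2$ are arc-disjoint spanning arborescences with interiors $\{V_2\}$ and $\{V_3\}$, so they form a \textsf{CISA} of size $2$. But $d_G(r_1)=1$, so any two $R$-Steiner trees share the edge $r_1x$, and $G$ has no \textsf{$R$-CIST} of size $2$. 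Both lifts must enter $G[V_1]$ even though $V_1$ is interior to neither tree. The paper's proof breaks here too: it adds $Y_{V_1}$ to neither tree, so $r_1$ is never connected.

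There is a second gap you do not flag: the claim that distinct arcs can receive distinct edges. A single edge $r_ar_b$ between terminals realises both $(V_a,V_b)$ and $(V_b,V_a)$. Antiparallel arcs used by different arborescences then lift to the same edge, and this is not confined to $|R|=2$. Take $G=K_3$ with $R=V(G)$ and singleton parts. The out-stars rooted at $V_1$ and at $V_2$ are arc-disjoint, with interiors $\{V_1\}$ and $\{V_2\}$ under any reading of $\mathrm{int}$. Yet $K_3$ has only three edges, so it has no two edge-disjoint spanning trees. The paper's proof hides this by writing $E(T_i)\cap E(T_j)=\emptyset$ for arc sets.

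Your argument, and the paper's, goes through under two extra hypotheses. First, $R$ must be independent in $G$. Then every arc is realised by an edge $xr_b$ with $x\notin R$, which determines the arc, so your injective choice of edges exists. Second, the root of every arborescence must count as internal, i.e.\ $\mathrm{int}(T)$ is the set of vertices of positive out-degree. With both in place, any part that the lift of $T_i$ meets in more than its terminal has positive out-degree in $T_i$. It is therefore a non-root leaf of every other $T_j$, and your first two paragraphs become a complete proof.
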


\begin{proof}
Let $\mathcal{T} := \{T_1, \dots, T_k\}$ be a \textsf{CISA} in some directed $R$-minor $H$ of $G$. Since every vertex $v \in V(H)$ is a subset of $V(G)$ such that the induced subgraph $G[v]$ is connected, we can construct a spanning tree $Y_v$ in $G[v]$. Let $\phi: V(G) \mapsto V(H)$ be the function that defines the partition of $V(G)$ for which $H$ is the directed $R$-minor (\Cref{def: directed minor}). For each $T_i$, define, in graph $G$, the tree $T_i' := \bigcup_{v \in \mathrm{int}(T_i)} Y_v \cup \{uv: (\phi(u), \phi(v)) \in A(T_i)\}$. Then, prune $T_i'$ to ensure that all leaves are terminals. By this, note that $T_i'$ is an $R$-Steiner tree where $u \in \mathrm{int}(T_i')$ only if $\phi(u) \in \mathrm{int}(T_i)$ (since $L(T_i) \subseteq L(T_i')$), and $E(T_i') \subseteq E(T_i) \cup \bigcup_{v \in \mathrm{int}(T_i)}E(G[v])$. Since $\mathcal{T}$ is a \textsf{CISA}, we have $\mathrm{int}(T_i) \cap \mathrm{int}(T_j) = \emptyset$ and $E(T_i) \cap E(T_j) = \emptyset$ for all distinct $i,j \in [k]$. This implies that $\mathrm{int}(T_i') \cap \mathrm{int}(T_j') = \emptyset$ and $E(T_i') \cap E(T_j') = \emptyset$ for all distinct $i,j \in [k]$. Therefore, $\{T_1', \dots, T_k'\}$ is an \textsf{$R$-CIST} of $G$.
\end{proof}

From \Cref{thm: CISA rcist 2,theorem: CISA bwd}, we get the following corollary.

\begin{corollary}
    A graph $G$ has an \textsf{$R$-CIST} of size 2 if and only if there exists a directed $R$-minor of $G$ that has a \textsf{CISA} of size 2.
\end{corollary}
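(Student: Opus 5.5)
The corollary follows by combining the two directions already established. For the forward direction, assume $G$ has an \textsf{$R$-CIST} $\{T_1,T_2\}$ of size 2. \Cref{thm: CISA rcist 2} then gives a directed $R$-minor of $G$ with a \textsf{CISA} of size 2. That theorem already covers all three possible compositions of the pair: both pendant and both non-pendant via \Cref{theorem: CISA fwd}, and the mixed case via its own contraction argument.

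For the reverse direction, assume some directed $R$-minor $H$ of $G$ has a \textsf{CISA} of size 2. Applying \Cref{theorem: CISA bwd} with $k=2$ gives an \textsf{$R$-CIST} of size 2 in $G$.

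I do not expect any real obstacle, since all the work sits in the two cited theorems. One point deserves a check: in the both-pendant and both-non-pendant cases, \Cref{theorem: CISA fwd} requires $k \le |R|$. This holds because $|R| \ge 2$ by \Cref{def: steiner tree}, so $k = 2 \le |R|$. With that hypothesis confirmed, the two implications together give the equivalence.
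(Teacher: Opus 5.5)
Your proposal is correct and is exactly the paper's argument: the paper obtains the corollary by combining \Cref{thm: CISA rcist 2} for the forward direction with \Cref{theorem: CISA bwd} at $k=2$ for the converse. Your check that $2 \le |R|$ is valid but not needed here, since \Cref{thm: CISA rcist 2} is stated without that hypothesis and the check is already absorbed into its proof.
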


\subsubsection{\textsf{CISA} in General Digraphs}

Motivated by \Cref{theorem: CISA bwd,theorem: CISA fwd}, we take a detour to explore more about \textsf{CISA} in general digraphs. While the problems of spanning-tree packing and node/edge-independent spanning trees in undirected graphs have long since been extended to their directed variants \cite{edmonds1973edge,Huck1994}, the problem of \textsf{CIST} has not yet been extended to digraphs. There has also been work extending the concepts of internally disjoint and edge-disjoint Steiner trees to digraphs, where each Steiner tree is required to be an arborescence rooted at a terminal vertex \cite{Cheriyan_directed,directed_pendant}. 
To the best of our knowledge, our paper is the first to study completely independent spanning arborescences in digraphs. 

In \Cref{thm: CISA partition}, we give a theorem analogous to the \textsf{CIST}-partition theorem (\Cref{theorem: araki-cist-partition}), for which we will need the following definitions. We call a digraph $G$ \emph{root-connected} if there exists a vertex $v$, called the \emph{root}, such that for every vertex $w$ of $G$, there exists a directed $(v,w)$-path in $G$. If $V_1$ and $V_2$ are disjoint subsets of $V(G)$, we write $B(V_1,V_2,G)$ to denote the sub-digraph of $G$ consisting of all arcs with one end in $V_1$ and the other end in $V_2$. 

\begin{theorem}\label{thm: CISA partition}
    A digraph $G$ has a \textsf{CISA} of size $k$ if and only if there exists a partition $\{V_1, \dots, V_k\}$ of $V(G)$ such that:
    \begin{enumerate}
        \item $G[V_i]$ is root-connected for all $i \in [k]$; and
        \item $\delta^{-}(B(V_i,V_j,G)) \geq 1$ for all distinct $i,j \in [k]$
    \end{enumerate}
\end{theorem}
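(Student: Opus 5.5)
We prove the two directions separately, mirroring Araki's argument for \Cref{theorem: araki-cist-partition}. In both directions the partition classes will be the interiors of the arborescences, suitably enlarged by leaves.

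\textbf{Forward direction.} Let $\{T_1,\dots,T_k\}$ be a \textsf{CISA}. By \Cref{def: cisa}, the sets $\mathrm{int}(T_i)$ are pairwise disjoint, so we extend them to a partition $\{V_1,\dots,V_k\}$ of $V(G)$.

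\begin{itemize}
\item Each vertex lying in no interior is assigned to an arbitrary class.
\item If some $\mathrm{int}(T_i)$ is empty, then $T_i$ is a single arc or vertex, and $V_i$ is taken to be its root. This degenerate case has to be handled separately.
\end{itemize}

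For condition 1, observe that the root of $T_i$ lies in $\mathrm{int}(T_i)$ whenever $|V(G)|\ge 3$, since the root has out-degree at least $1$ and in-degree $0$. Deleting all leaves from an arborescence leaves an arborescence on $\mathrm{int}(T_i)$ with the same root. Any extra vertex $w$ put into $V_i$ has its $T_i$-parent in $\mathrm{int}(T_i)$, because the parent has an out-arc. Hence $T_i[V_i]$ is root-connected, and so $G[V_i]$ is root-connected.

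For condition 2, take $w\in V_j$ with $j\ne i$. Then $w\notin\mathrm{int}(T_i)$. If $w$ is not the root of $T_i$, its in-arc in $T_i$ comes from its parent, which lies in $\mathrm{int}(T_i)\subseteq V_i$. That arc belongs to $B(V_i,V_j,G)$, so $w$ has in-degree at least $1$ there. The root of $T_i$ is internal, so it is not in $V_j$ and causes no trouble.

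\textbf{Backward direction.} Given the partition, for each $i$ choose a spanning arborescence $A_i$ of $G[V_i]$ rooted at its root $r_i$. Then, for every $w\notin V_i$, add one arc $(u,w)\in A(G)$ with $u\in V_i$; such an arc exists by condition 2. The result $T_i$ is a spanning arborescence of $G$ rooted at $r_i$. Every vertex outside $V_i$ is a leaf of $T_i$, since no arcs leave it in $T_i$. So $\mathrm{int}(T_i)\subseteq V_i$, and the interiors are pairwise disjoint.

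For arc-disjointness, an arc of $T_i$ either lies inside $V_i$ or goes from $V_i$ into another class. Suppose the same arc $(u,w)$ lies in both $T_i$ and $T_j$. It would have to enter $w$, and its tail $u$ would have to lie in both $V_i$ and $V_j$, which is impossible. Hence $\{T_1,\dots,T_k\}$ is a \textsf{CISA}.

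\textbf{Main obstacle.} The arguments above are short. The real difficulty is the forward direction's treatment of degenerate trees with empty interior, which can occur only in tiny digraphs, together with a careful check that the leftover vertices can be distributed without breaking root-connectivity. Assigning each such vertex to the class containing its parent in some fixed $T_i$ resolves the latter point.
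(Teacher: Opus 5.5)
Your overall route is the paper's. You take $V_i\supseteq\mathrm{int}(T_i)$, distribute the leftover vertices, and get condition 2 from the $T_i$-parent of each vertex outside $V_i$. For the converse, you hang every vertex outside $V_i$ as a leaf below a spanning arborescence of $G[V_i]$; that backward direction is correct as written.

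The one step that does not hold is your claim that the root $r_i$ of $T_i$ lies in $\mathrm{int}(T_i)$ ``since the root has out-degree at least $1$ and in-degree $0$''. With the paper's definition of internal node ($d_T(v)\ge 2$, counting all incident arcs), this only gives $d_T(r_i)\ge 1$. A root of out-degree $1$ is then a leaf however large $G$ is, and nothing forces it to have an in-neighbour in another class.

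No cleverer argument can close this gap, because under that definition the forward direction is false. Let $V(G)=\{s,a,b\}$ with arcs $(s,a),(a,b),(s,b),(b,a)$, and let $T_1=\{(s,a),(a,b)\}$ and $T_2=\{(s,b),(b,a)\}$. Then $\mathrm{int}(T_1)=\{a\}$ and $\mathrm{int}(T_2)=\{b\}$ are disjoint and the arc sets are disjoint, so this is a \textsf{CISA} of size $2$. Yet $s$ has in-degree $0$ in $G$, so no partition into two classes can satisfy condition 2.

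The statement becomes true once you adopt the convention that, in an arborescence, a vertex is internal exactly when its out-degree is at least $1$ (the leaves are the sinks). In the example above, $s$ is then internal in both trees, so the pair is no longer a \textsf{CISA}. Under that convention:
\begin{itemize}
\item your root claim already holds once $|V(G)|\ge 2$;
\item a single-arc tree has nonempty interior (its tail);
\item the degenerate case you call the main obstacle disappears.
\end{itemize}

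Your write-up currently mixes the two readings. ``A single arc has empty interior'' uses total degree. The reasons you give for the root and for $T_i$-parents being internal (``has an out-arc'') use out-degree. State the out-degree convention at the outset and your proof is complete. It is then slightly more careful than the paper's, which asserts that every leaf $v$ of $T_j$ has an in-neighbour in $T_j$ and so silently makes the same assumption about roots.
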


\begin{proof}
We first prove the forward direction. Let $\{T_1, \dots, T_k\}$ be a \textsf{CISA} of $G$. Define $V_i := \mathrm{int}(T_i)$ for all $i \in [k]$. If there exists a vertex $v \notin \bigcup_{i \in [k]} V_i$, then $v$ must be a leaf of all the trees, and we assign $v$ arbitrarily to some set $V_i$. With this, we get a partition $\{V_1, \dots, V_k\}$. We now prove that this partition satisfies the two conditions. The first condition clearly holds since $G[V_i]$ has an arborescence that is a connected subtree of $T_i$. Now, consider any digraph $B(V_i,V_j,G)$. Let $v \in V_i$ be a vertex. Since $T_j$ is an arobrescence, and $v$ is a leaf of $T_j$, there is a vertex $w \in T_j$ such that $(w,v) \in A(G)$. Likewise, every vertex of $V_j$ has an in-neighbour from some vertex of $V_i$. Therefore, $\delta^{-}(B(V_i,V_j,G)) = 1$.

Now, we show the reverse direction. For each $V_i$, there exists an arborescence $T_i'$ in $G[V_i]$, since $G[V_i]$ is root-connected. For any vertex $v \in V_j \setminus V_i$, the minimum in-degree condition implies that there exists a vertex $w \in V_i$ such that $(w,v) \in E(G)$. Therefore, we can add $v$ as a leaf to $T_i'$. Having done this for all vertices $v \in V(G) \setminus V_i$, we get a spanning arborescence $T_i$ of $G$. Since $\mathrm{int}(T_i) \cap \mathrm{int}(T_j) = \emptyset$ and $E(T_i) \cap E(T_j) = \emptyset$ for all distinct $i,j \in [k]$, $\{T_1, \dots, T_k\}$ is a \textsf{CISA} in $G$.
\end{proof}

In the literature on \textsf{CIST}, it has been observed that many graph properties sufficient to guarantee Hamiltonicity are also sufficient to ensure the existence of \textsf{CIST}; see \cite{araki_dirac,fan_condition,Ore,deg_sum} for examples. It is therefore natural to ask whether a similar phenomenon holds for \textsf{CISA}. In \Cref{theorem: semidegree}, we show that this is at least true for the Ghouila-Houri theorem \cite{ghouilahouri}. This is an analogue of Dirac's theorem for undirected graphs, and is stated as follows: If $G$ is a digraph such that $\delta^0(G) \geq |V(G)|/2$, then $G$ has a directed Hamiltonian cycle (a cycle $C$ such that $V(C) = V(G)$ and $d_C^-(v) = d_C^+(v) = 1$).

\begin{theorem}\label{theorem: semidegree}
    If $G$ is a digraph such that $\delta^0(G) \geq |V(G)|/2$, then $G$ has a \textsf{CISA} of size 2.
\end{theorem}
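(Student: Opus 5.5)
The plan is to apply the partition characterisation \Cref{thm: CISA partition} with $k=2$. We need a partition $\{V_1,V_2\}$ of $V(G)$ such that each $G[V_i]$ is root-connected and every vertex of $V_1$ has an in-neighbour in $V_2$ (and vice versa). The Ghouila-Houri theorem handles root-connectivity, and a counting argument on in-degrees handles the cross condition, provided the two parts are balanced.

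First, by the Ghouila-Houri theorem, $\delta^0(G)\ge |V(G)|/2$ gives a directed Hamiltonian cycle $C = v_1 v_2 \cdots v_n v_1$, where $n=|V(G)|$. Cut $C$ into two consecutive directed paths: $V_1 := \{v_1,\dots,v_{\lfloor n/2\rfloor}\}$ and $V_2 := \{v_{\lfloor n/2\rfloor+1},\dots,v_n\}$. Each $G[V_i]$ contains the directed subpath of $C$ on $V_i$. So $G[V_i]$ is root-connected with root the first vertex of that subpath, which gives condition~1.

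Second, for condition~2, take $v \in V_i$. As in the Ghouila-Houri setting, I take $G$ to be a simple digraph with no loops or parallel arcs. Then the in-neighbours of $v$ are distinct vertices other than $v$, and there are at least $d^-_G(v)\ge n/2$ of them. At most $|V_i|-1 \le \lceil n/2\rceil - 1 < n/2$ of them lie in $V_i$. Hence $v$ has an in-neighbour in $V_{3-i}$. Therefore $\delta^-(B(V_1,V_2,G))\ge 1$, and \Cref{thm: CISA partition} yields a \textsf{CISA} of size 2.

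The step needing most care is exactly this inequality $\lceil n/2\rceil-1<n/2$, together with the simplicity assumption. With parallel arcs, in-degree no longer counts distinct in-neighbours, so I would state explicitly that the theorem, like Ghouila-Houri's, is for simple digraphs. Trivial small cases ($n\le 1$) are excluded since a partition into two nonempty parts is needed; for $n=2$ the two vertices form a directed 2-cycle and the argument still goes through.
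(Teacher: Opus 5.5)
Your proposal is correct and follows the paper's proof essentially step for step. The paper also takes a Ghouila-Houri Hamiltonian cycle, cuts it into consecutive blocks of sizes $\lfloor n/2\rfloor$ and $\lceil n/2\rceil$, gets root-connectivity from the subpaths, and applies \Cref{thm: CISA partition} with $k=2$. Two points in your write-up are details the paper leaves implicit. First, you give the explicit in-neighbour count $\lceil n/2\rceil-1<n/2$ inside $V_i$. Second, you note that the digraph must be simple, which is genuinely needed here since the paper otherwise allows parallel arcs. Without simplicity the statement fails: take arcs $a\leftrightarrow b$ and $b\leftrightarrow c$, each doubled; then $\delta^0\ge 3/2$, but $b$ is internal in every spanning arborescence.
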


\begin{proof}
By Ghouila-Houri theorem, we know that $G$ has a directed Hamiltonian cycle $C := (v_1,\dots,v_n)$, where $n := |V(G)|$. Let $\{V_1,V_2\}$ be a partition of $V(G)$ such that $V_1 := \{v_1, \dots, v_{\floor{n/2}}\}$ and $V_2 = \{v_{\floor{n/2}+1}, \dots, v_{n}\}$. Clearly, $G[V_1]$ and $G[V_2]$ are root-connected due to $v_1$ and $v_{\floor{n/2}+1}$, respectively. By the semi-degree condition $\delta^0(G) \geq n/2$, we have $\delta^-(B(V_1,V_2,G)) \geq 1$. Therefore, by \Cref{thm: CISA partition}, $G$ has a \textsf{CISA} of size 2. 
\end{proof}

\section{Vertex Connectivity and Completely Independent Steiner Trees}\label{sec: connectivity}

In this section, we explore the connection between vertex connectivity of a graph $G$ and the size of a maximum \textsf{$R$-CIST} of $G$. The \textit{vertex connectivity} of a graph $G$, denoted by $\kappa(G)$, is the size of a smallest set of vertices $S$ such that $G-S$ is either a disconnected graph, or is an isolated vertex. 
In \cite{Hager}, Hager showed that for any graph $G$ with $\kappa(G) \geq 2^{3p(r-1)}$, for any set of $r$ terminals in $G$, there exist a \textsf{Pendant $R$-CIST} of size $p$ in $G$. The main idea behind this result lies in a reduction to the $k$-linkage problem. A graph $G$ is \emph{$k$-linked} if it has at least $2k$ vertices, and for any sequence of $2k$ distinct vertices $(x_1, \dots, x_k,y_1,\dots,y_k)$, there exist $k$ vertex-disjoint paths with end vertices $x_i$ and $y_i$. The exponential bound in Hager’s result is not intrinsic to the technique itself, but reflects the state of knowledge at the time, when the dependence of linkage on connectivity was only known to be exponential. However, with improved bounds on connectivity for $k$-linkage, we can better the result obtained by Hager. In fact, we prove a more general result in \Cref{theorem: connectivity and rcsit}, that incorporates both pendant and non-pendant $R$-Steiner trees. Before this, we will need \Cref{def: linkage,lemma: linkage}.

\begin{definition}[$H$-linkage]\label{def: linkage}
 A graph $G$ is \emph{$H$-linked} if, for any injective function $f: V(H) \mapsto V(G)$, and any edge $e = uv \in E(H)$, there exists an $(f(u),f(v))$-path in $G$, say $P_e$, such that for any two distinct edges $e_1,e_2 \in E(H)$, the paths $P_{e_1}$ and $P_{e_2}$ are internally vertex-disjoint.
\end{definition}  

\begin{lemma}\label{lemma: linkage}
If $G$ and $H$ are two connected graphs such that $\kappa(G) \geq 10|E(H)|$, then $G$ is $H$-linked.
\end{lemma}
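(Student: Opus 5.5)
This lemma is essentially a known result in the linkage literature, so the plan is to derive it from known linkage theorems rather than prove it from scratch. The cleanest route goes through the result of Thomas and Wollan: every $2k$-connected graph $G$ with at least $5k|V(G)|$ edges is $k$-linked. As a consequence, every $10k$-connected graph is $k$-linked, because $10k$-connectivity forces minimum degree at least $10k$ and hence at least $5k|V(G)|$ edges. We also use the version that handles linkages with repeated endpoints, i.e.\ the linkage formulation in which terminal pairs may share vertices. Set $m := |E(H)|$. The goal is to realise an $H$-linkage as an ordinary linkage of $m$ pairs in which each pair corresponds to an edge of $H$.

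\textbf{Step 1 (reduce repeated endpoints to distinct endpoints).} In an $H$-linkage a vertex $f(u)$ is an endpoint of $d_H(u)$ paths, while ordinary $k$-linkage requires $2k$ distinct endpoints. To fix this, for each $u \in V(H)$ we choose $d_H(u)$ distinct neighbours of $f(u)$, one for each edge $e$ incident with $u$. These must avoid $f(V(H))$ and be distinct across all choices. There are $2m$ such neighbours needed in total, together with $|V(H)| \le m+1$ image vertices (since $H$ is connected). Because $\kappa(G) \ge 10m$, every vertex has degree at least $10m$, so a greedy choice works: at each step at most $3m+1$ vertices are already used.

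\textbf{Step 2 (link the chosen neighbours).} Let $S$ be the set of all chosen vertices, including $f(V(H))$. We delete $f(V(H))$ from $G$, leaving $G'$ with $\kappa(G') \ge 10m - (m+1) \ge 8m$ roughly. In $G'$ we ask for an ordinary linkage of the $m$ pairs $(x_e, y_e)$, where $x_e$ and $y_e$ are the neighbours assigned to edge $e$ at $f(u)$ and $f(v)$ respectively. Prepending the edges $f(u)x_e$ and appending $y_e f(v)$ then produces internally disjoint $(f(u),f(v))$-paths $P_e$, as required. This path also works when $f(u)f(v)$ is itself an edge of $G$.

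\textbf{Main obstacle.} The difficulty is the constant. Deleting $|V(H)|$ vertices drops the connectivity below $10m$, so the Thomas--Wollan bound $10k$ does not directly give $m$-linkedness of $G'$. To repair this, I would invoke the sharper form of Thomas--Wollan, namely that $2k$-connectivity together with average degree at least $10k$ suffices. Deleting at most $m+1$ vertices from a graph of minimum degree $\ge 10m$ leaves minimum degree $\ge 9m-1$. That is still short of $10m$, so this too is tight.

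The alternative I would try first avoids deletion altogether. Instead, we apply Thomas--Wollan directly in $G$ to the $m$ pairs $(x_e,y_e)$, with the image vertices $f(V(H))$ added as extra trivial pairs $(f(u),z_u)$ for a fresh neighbour $z_u$ of $f(u)$. This forces the linkage paths to avoid $f(V(H))$. The total number of pairs is then $m + |V(H)| \le 2m+1$, which needs connectivity about $20m$. So the exact constant $10$ will require either a careful citation of an existing $H$-linkage theorem for graphs with $|E(H)|$ edges or a tighter accounting. I expect the intended proof simply cites such a result, e.g.\ the known fact that $10|E(H)|$-connected graphs are $H$-linked, which follows from Thomas--Wollan via the standard subdivision argument sketched above.
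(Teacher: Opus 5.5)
Your proposal does not prove the lemma as stated. You correctly identify the real danger. If the $m$ linkage paths are found in $G$ itself, nothing prevents one of them from passing through a branch vertex $f(w)$, and that destroys internal disjointness. But both of your remedies lose in the constant. Deleting $f(V(H))$ leaves minimum degree only about $9m-1$. Adding trivial pairs $(f(u),z_u)$ uses $m+|V(H)|$ pairs and so needs roughly $20m$-connectivity. Your closing assertion, that the $10|E(H)|$ bound ``follows from Thomas--Wollan via the standard subdivision argument sketched above'', is therefore unsupported: the sketched argument is exactly the one you showed falls short. As written, your route gives the lemma only under a hypothesis like $\kappa(G)\ge 10m+|V(H)|$. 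For comparison, the paper's proof is essentially your Step~1 followed by applying $|E(H)|$-linkedness directly in $G$, with pairs taken from $W_i\setminus\{w_i\}$, and then joining each $w_i$ to $W_i\setminus\{w_i\}$. It never addresses linkage paths running through the $w_j$, so the issue you raise is genuine there as well.

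The missing idea is to make each branch vertex a terminal of the linkage, rather than protecting it separately.
\begin{itemize}
\item Assume $m\ge 1$ (otherwise there is nothing to prove), so every $u\in V(H)$ has $d_H(u)\ge 1$.
\item For each $u$, fix one edge $e_u$ incident with $u$, and use $f(u)$ itself as the endpoint at $u$ for $e_u$.
\item For the other $d_H(u)-1$ edges at $u$, choose distinct neighbours of $f(u)$ greedily, avoiding $f(V(H))$ and all previously chosen vertices. Fewer than $2m$ vertices are ever forbidden, while $f(u)$ has at least $10m$ neighbours, so this always succeeds.
\end{itemize}
This gives exactly $m$ pairs with $2m$ distinct endpoints. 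Hence $m$-linkedness of $G$ itself (from $\kappa(G)\ge 10m$ and Thomas--Wollan) yields pairwise vertex-disjoint paths $Q_e$.

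Each $f(w)$ is an endpoint of $Q_{e_w}$, so no other $Q_e$ contains $f(w)$. Now prepend the edge $f(u)x_e$ when $x_e\neq f(u)$, and append $y_ef(v)$ when $y_e\neq f(v)$. This produces $(f(u),f(v))$-paths $P_e$ whose interiors lie in the pairwise disjoint sets $V(Q_e)\setminus f(V(H))$, and every $P_{e'}$ lies inside $V(Q_{e'})\cup f(V(H))$. So the $P_e$ are internally vertex-disjoint. No deletion and no extra pairs are needed, so the constant $10$ survives, and the same modification repairs the paper's argument.
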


\begin{proof}
Consider any injective function $f: V(H) \mapsto V(G)$, and for each $v_i \in V(H)$, let $w_i := f(v_i)$. As $\delta(G) \geq \kappa(G) \geq 10|E(H)| \geq |V(H)| + 2|E(H)|$, there exist disjoint sets of vertices $\{W_1, \dots, W_{|V(H)|}\}$ such that $\{w_i\} \subseteq W_i \subseteq N_G[w_i]$ and $|W_i| = d_H(v_i)+1$. (Note that $\sum_i|W_i| = |V(H)| + 2|E(H)|$.) Since $\kappa(G) \geq 10|E(H)|$, the graph $G$ is $|E(H)|$-linked \cite{thomas_wollan}. Now, we construct a sequence $\{x_1, \dots, x_{|E(H)|}, \allowbreak y_1, \dots, y_{|E(H)|}\}$ of $2|E(H)|$ vertices of $G$ as follows: Consider any ordering $\{e_1, \dots, e_{|E(H)|}\}$ of the edges in $E(H)$, and for each edge $e_i := v_jv_k \in E(H)$, set $x_i$ to be a vertex from $W_j\setminus \{w_j,x_1, \dots, x_{i-1}, \allowbreak y_1, \dots, y_{i-1}\}$ and $y_i$ to be a vertex from $W_k \setminus \{w_k,x_1, \dots, x_{i-1},\allowbreak y_1, \dots, y_{i-1}\}$. (The sizes of the sets $\{W_1, \dots, W_{|V(H)|}\}$ make this feasible.) Since $G$ is $|E(H)|$-linked, there exist vertex-disjoint $(x_i,y_i)$-paths in $G$, for all $i \in [|E(H)|]$. Finally, we connect each $w_i$ to all vertices in $W_i \setminus \{w_i\}$. This gives us the required subdivision of $H$ in $G$.
\end{proof}

By \Cref{lemma: linkage}, if one can construct a graph $H$ with few edges so that $H$ has an \textsf{$R$-CIST} of size $k$, then any graph $G$ with $\kappa(G) \geq 10|E(H)|$ will also have an \textsf{$R$-CIST} of size $k$. We use this idea, and prove a slightly more general result in \Cref{theorem: connectivity and rcsit}.

\begin{theorem}\label{theorem: connectivity and rcsit}
Let $p,q,r$ be three positive integers where $q \leq r$. If $G$ is a graph with $\kappa(G) \geq 10\{(p+q)r - q\}$, then, for any choice of $r$ terminals, $G$ has an \textsf{$R$-CIST} consisting of $p$ pendant and $q$ non-pendant $R$-Steiner trees.
\end{theorem}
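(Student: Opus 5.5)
We follow the strategy suggested right after \Cref{lemma: linkage}. We build a small connected auxiliary graph $H$ with $|E(H)| = (p+q)r - q$ that contains a designated set $R_H$ of $r$ vertices and admits an \textsf{$R_H$-CIST} made of $p$ pendant and $q$ non-pendant Steiner trees. We then transfer this structure into $G$ through an $H$-linkage.

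The graph $H$ is as follows. Let $R_H = \{t_1,\dots,t_r\}$. For each $i \in [p]$, add a new vertex $c_i$ adjacent to all of $t_1,\dots,t_r$; this gives a star $S_i$ with $r$ edges and internal node $c_i \notin R_H$, so $S_i$ is a pendant $R_H$-Steiner tree. For each $j \in [q]$, add a star $S'_j$ centred at the terminal $t_j$ (possible since $q \le r$) whose leaves are the other $r-1$ terminals. Its edges are new parallel copies of the edges $t_jt_\ell$, each of which we subdivide once so that $H$ stays simple. The subdivision vertices have degree $2$, so this does not change the internal node sets beyond adding non-terminals private to $S'_j$.

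Before subdividing, the edge count is $pr + q(r-1) = (p+q)r - q$. Subdividing increases this number, which would break the target bound, so instead of subdividing $H$ we apply \Cref{lemma: linkage} to the multigraph directly. Since $H$-linkage only requires internally vertex-disjoint paths for distinct edges, parallel edges cause no problem: their images are internally disjoint paths. The proof of \Cref{lemma: linkage} only uses $|E(H)|$ and the degrees $d_H(v)$, so it carries over to multigraphs. If some tree $S'_j$ uses the edge between $t_j$ and $t_\ell$ while $S'_\ell$ uses the edge between $t_\ell$ and $t_j$, these are distinct parallel edges, exactly the situation allowed in \Cref{prop: simplification}.

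The internal nodes are $\mathrm{int}(S_i) = \{c_i\}$ and $\mathrm{int}(S'_j) = \{t_j\}$, which are pairwise distinct, and all edges are distinct. So by \Cref{theorem: disjoint internal nodes} the family $\{S_1,\dots,S_p,S'_1,\dots,S'_q\}$ is an \textsf{$R_H$-CIST} of $H$. Also, $H$ is connected.

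Now take terminals $R = \{r_1,\dots,r_r\}$ in $G$ and define an injective $f$ with $f(t_\ell) = r_\ell$ and $f(c_i)$ equal to distinct non-terminal vertices; these exist because $|V(G)| > \kappa(G) \ge 10|E(H)|$ is large. \Cref{lemma: linkage} gives $\kappa(G) \ge 10|E(H)|$, so $G$ contains a subdivision $H'$ of $H$ with branch vertices at these images. By \Cref{proposition: subdivision}, the subdivided trees form an \textsf{$R$-CIST} of $H' \subseteq G$, hence of $G$.

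The subdivided trees keep their types. Subdividing never turns a leaf into an internal node or changes which terminals are internal, so $S_i$ stays pendant and $S'_j$ stays non-pendant. One detail needs checking: the linkage paths must avoid the other terminals in their interiors. This holds because the paths are internally disjoint from each other and from all branch vertices; the construction in \Cref{lemma: linkage} uses disjoint sets $W_i$ together with vertex-disjoint linkage paths.

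The main obstacle is handling parallel edges in $H$ without inflating $|E(H)|$, in particular verifying that the proof of \Cref{lemma: linkage} really tolerates a multigraph $H$. If it does not, the fallback is to accept a weaker constant or to route the non-pendant stars differently.
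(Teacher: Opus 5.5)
Your proof is correct and is essentially the paper's: the paper builds the same multigraph $H$, with stars centred at $p$ new vertices and at $q$ terminals, $|E(H)| = pr + q(r-1)$, and parallel edges between pairs of terminal centres, and then applies \Cref{lemma: linkage} directly to it, so your detour through subdivision and your concern about multigraphs are unnecessary, while your explicit transfer via \Cref{proposition: subdivision} and your check that each tree keeps its pendant or non-pendant type fill in steps the paper leaves implicit. One caveat, shared with the paper, is that the argument needs $r \ge 3$: for $r = 2$ the ``star'' $S'_j$ is a single edge in which $t_j$ is a leaf, and indeed no non-pendant $R$-Steiner tree exists when $|R| = 2$.
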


\begin{proof}
We will build a graph $H$ with a set of terminals $R_H$, where $|R_H| = r$, and show that $H$ has an \textsf{$R_H$-CIST} consisting of $p$ pendant and $q$ non-pendant $R_H$-Steiner trees. Then, we will show that $G$ is $H$-linked, and this will establish the theorem. 

To build $H$, we start with two disjoint vertex sets $\{P_H, R_H\}$ such that $|P_H| = p$ and $|R_H| = r$. Then, from each vertex in $P_H$, we add an edge to every vertex in $R_H$. Next, we select a subset $Q_H \subseteq R_H$ such that $|Q_H| = q$ (feasible because $q \leq r$), and add an edge from each vertex of $Q_H$ to every other vertex of $R_H$. Let this multi-graph be $H$. Note that $|E(H)| = pr + q(r - 1) = (p+q)r - q$. Since $\kappa(G) \geq 10\{(p+q)r - q\}$, the graph $G$ is $H$-linked (\Cref{lemma: linkage}). Note that $H$ has an \textsf{$R_H$-CIST} consisting of $p$ pendant $R_H$-Steiner trees, each of which is a star with a distinct internal node in $P_H$, and $q$ non-pendant $R_H$-Steiner trees, each of which is a star with a distinct internal node in $Q_H$.
\end{proof}

\section{Algorithms and Hardness Results}
\label{sec: algorithms}

In this section, we focus on algorithms and hardness results for computing \textsf{$R$-CIST} in graphs. Let \textsf{R-CIST(G,r,k)} be the following decision problem: ``Given a graph $G$ with a set $R$ of $r$ terminals, does $G$ have an \textsf{$R$-CIST} of size $k$?''. In \Cref{thm: hardness}, we show that \textsf{R-CIST(G,r,k)} is NP-complete by reducing from \textsf{R-DST(G,r,k)}, the corresponding decision problem defined for \textsf{$R$-DST}.

\begin{theorem}\label{thm: hardness}
For any $r \geq 3$, \textsf{R-CIST(G,r,k)} is NP-complete. For any $k \geq 2$, \textsf{R-CIST(G,r,k)} is NP-complete.  
\end{theorem}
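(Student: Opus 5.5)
Membership in NP is immediate. A certificate is the list of $k$ trees, and by \Cref{theorem: disjoint internal nodes} it suffices to check that each $T_i$ is an $R$-Steiner tree and that the trees are pairwise edge-disjoint with pairwise disjoint interiors. All of this takes polynomial time, and we may assume $k \le |V(G)|$ since interiors are nonempty when $|R| \ge 3$.

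For hardness we reduce from \textsf{R-DST(G,r,k)}, which is known to be NP-complete both for every fixed $r \ge 3$ and for every fixed $k \ge 2$ (the internally disjoint / element-disjoint Steiner tree results cited in the introduction). Given an instance $(G,R,k)$ of \textsf{R-DST}, we construct $G'$ by subdividing every edge of $G$ once, with the terminal set unchanged. Then $R$ is independent in $G'$, and every terminal has only non-terminal neighbours.

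We claim that $G$ has an \textsf{$R$-DST} of size $k$ if and only if $G'$ has an \textsf{$R$-CIST} of size $k$.

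For the reverse direction, \Cref{proposition: subdivision} shows that an \textsf{$R$-CIST} of $G'$ is the subdivision of an \textsf{$R$-CIST} of $G$. Note that \Cref{proposition: subdivision} requires the trees in $G'$ to be subdivisions of trees in $G$. This holds after pruning, because a Steiner tree in $G'$ that uses a subdivision vertex must use both of its edges unless that vertex is a leaf, and non-terminal leaves are pruned. By \Cref{cor: dst}, an \textsf{$R$-CIST} of $G$ is also an \textsf{$R$-DST} of $G$.

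The forward direction is where the main obstacle lies. An \textsf{$R$-DST} $\{T_1,\dots,T_k\}$ of $G$ may have a terminal $z$ that is internal in two trees, which \Cref{theorem: disjoint internal nodes} forbids in an \textsf{$R$-CIST}. To fix this, I would strengthen the source problem rather than the reduction. Before subdividing, attach to each terminal $z$ of the original instance a gadget that replaces $z$ by a new pendant terminal $z'$ joined to the old vertex $z$, which now becomes a non-terminal. More precisely, I would use the standard reduction making all terminals have degree one in their trees: take $R$ to be the set of new vertices $z'$, and join each $z'$ to $z$ and to $k-1$ private copies of $z$, each copy adjacent to $N_G(z)$. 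In this new instance, each terminal has degree $1$ in every Steiner tree, up to a re-routing argument among the copies. Then every \textsf{$R$-DST} is pendant, and its trees have interiors disjoint from $R$ and from one another, so it is an \textsf{$R$-CIST} by \Cref{theorem: disjoint internal nodes}. Checking that this gadget preserves the answer in both directions, and that it keeps $r$ fixed when $r$ is fixed and $k$ fixed when $k$ is fixed, is the step I expect to require the most care. Alternatively, I would directly cite the known NP-completeness of the pendant version of \textsf{R-DST} for fixed $r \ge 3$ and for fixed $k \ge 2$, if it is available in \cite{Hager,cheriyan_talg}. With pendant trees, the equivalence with \textsf{$R$-CIST} is exactly \Cref{theorem: disjoint internal nodes}, and \Cref{Cor: ridst to rcsit} handles edge-disjointness.
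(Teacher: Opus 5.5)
\textbf{There is a genuine gap.} The reduction you actually state cannot work. The gadget that is supposed to carry the proof is left unverified, and as you specify it, it is false.

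By \Cref{proposition: subdivision}, subdividing every edge never changes whether an \textsf{$R$-CIST} of size $k$ exists. So your claim amounts to ``$G$ has an \textsf{$R$-DST} of size $k$ iff $G$ has an \textsf{$R$-CIST} of size $k$'', which fails for the graphs $H_i$ of \Cref{sec: prelims}. All the hardness therefore rests on the terminal gadget. Citing a pendant \textsf{$R$-DST} hardness result ``if available'' is not a proof.

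The gadget breaks when terminals are adjacent. Let $G$ be a triangle on terminals $z,w,u$ and let $k=2$. Then $G$ has no \textsf{$R$-DST} of size $2$, since two edge-disjoint spanning trees would need four edges. Now build your gadget graph, writing $z^{(1)}=z$, joining $z^{(2)}$ to $N_G(z)=\{w,u\}$, and doing the same for $w$ and $u$. The trees with edge sets $\{z'z^{(2)},z^{(2)}w,z^{(2)}u,ww',uu'\}$ and $\{z'z,zw^{(2)},zu^{(2)},w^{(2)}w',u^{(2)}u'\}$ are pendant, edge-disjoint and have disjoint interiors. So they form an \textsf{$R'$-CIST} of size $2$. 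Joining copies to copies as well only adds edges, so that reading fails too.

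Your assertion that every new terminal has degree $1$ in every Steiner tree is also false for a single tree, which may run $z^{(1)}z'z^{(2)}$. What does hold is a pigeonhole fact: in an \textsf{$R'$-DST} of size exactly $k$, each tree contains exactly one of the $k$ copies of $z$.

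\textbf{How to repair it.} The gadget works once $R$ is made independent \emph{first}. Subdivide the edges of $G[R]$; this preserves \textsf{$R$-DST}, because each subdivision vertex is a non-terminal lying in at most one tree.
\begin{itemize}
\item Forward direction: tree $T_i$ uses $z^{(i)}$ in place of $z$ and hangs $z'$ off it. This is directly a pendant \textsf{$R'$-CIST}, since the interiors are disjoint non-terminals, and no re-routing is needed.
\item Reverse direction: an \textsf{$R'$-CIST} is an \textsf{$R'$-DST} (\Cref{cor: dst}). Map $z'$ and all copies of $z$ to $z$. Every edge of $G$ now has a non-terminal end, which lies in at most one tree. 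So the images are edge-disjoint and meet only in $R$, and pruned spanning trees of them form an \textsf{$R$-DST}.
\end{itemize}
The parameters $r$ and $k$ are unchanged, and the blow-up is polynomial since w.l.o.g.\ $k\le|E(G)|$. Your final global subdivision step is then redundant.

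For comparison, the paper uses a gadget whose size does not depend on $k$. It subdivides the terminal-incident edges and makes each $N(v)$ a clique. Each tree then bypasses $v$ through its own private subdivision vertices and reattaches $v$ as a leaf. In the reverse direction, a clique edge $xy$ is replaced by $x v y$ and the subdivisions are smoothed.
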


\begin{proof}
We first show that the problem is in NP. Given a set of trees as a certificate, we can use \Cref{theorem: disjoint internal nodes} to verify if it is an $R$-CIST: We traverse each tree, making sure that it is an $R$-Steiner tree, and mark all edges and internal nodes. If no edge or internal node appears in more than one tree, then the set of trees is an $R$-CIST. 

To show NP-hardness, we reduce from the problem \textsf{R-DST(G,r,k)}: ``Given a graph $G$ with a set $R$ of $r$ terminals, does $G$ have an \textsf{$R$-DST} of size $k$?'' It is known that for any $r \geq 3$, \textsf{R-DST(G,r,k)} is NP-complete, and for any $k \geq 2$, \textsf{R-DST(G,r,k)} is NP-complete \cite{cheriyan,linkage_general,k_3_npc}. We use a simple reduction from $\textsf{R-DST(G,r,k)}$ to $\textsf{R-CIST(G',r,k)}$: Subdivide every edge of $G$ incident to a terminal vertex exactly once. Let this intermediate graph be $H$. (For the remainder of the proof, we will refer to vertices of $V(H) \setminus V(G)$ as `subdivision vertices'.) Next, for every terminal $v \in R$, add edges so that $N_H[v]$ becomes a clique. Let this graph by $G'$. We now show that there exists a set of trees $\{T_1, \dots, T_k\}$ that is an \textsf{$R$-DST} of $G$ if and only if there exists a set of trees $\{T_1', \dots, T_k'\}$ that is an \textsf{$R$-CIST} of $G'$. 

For the forward direction, we obtain $T_i'$ from $T_i$ as follows: Initially, subdivide each edge of $T_i$ incident to a terminal vertex so that we have a tree that is a subgraph of $G'$. Now, repeat the following iteratively for each terminal $v$: delete $v$, and connect the neighbours of $v$ in $T_i$ by a path using edges of the clique at $N_{G'}(v)$ (These edges are distinct for each tree $T_i$ since no subdivision vertex can belong to two distinct trees $T_i$ and $T_j$.). Finally, add back $v$ as a leaf, and this gives us the tree $T_i$'. Since all terminals are leaves in $\{T_1', \dots, T_k'\}$, we get a (pendant) \textsf{$R$-CIST} of size $k$. 

For the reverse direction, we show how to obtain $T_i$ from $T_i'$: For every edge $xy$ in $T_i'$, where $\{x,y\} \subseteq N_{G'}(v)$, for some terminal $v$, replace this edge by the path $(x,v,y)$. With this, we get a graph $U_i$ (possibly having cycles) that spans all terminals. Nonetheless, the graphs $U_i$ remain edge-disjoint, since no subdivision vertex can belong to two distinct trees $T_i'$ and $T_j'$. Next, we smooth all edges of $U_i$ to remove subdivision vertices. That is, for every path $(a,s,b)$, where $s \notin V(G)$, the vertex $s$ has degree 2 in $G'$; therefore, we can replace the path $(a,s,b)$ by the edge $ab$ in $G$. Finally, take any spanning tree of this graph, and prune it to obtain an $R$-Steiner tree $T_i$. Since the edges and non-terminal nodes of $\{T_1, \dots, T_k\}$ are disjoint, $\{T_1, \dots, T_k\}$ is an \textsf{$R$-DST} of $G$.
\end{proof}

Since the proof of \Cref{thm: hardness} reduces an \textsf{$R$-DST} of size $k$ to a \textsf{Pendant $R$-CIST} of size $k$, we get the following corollary: 

\begin{corollary}
For any $r \geq 3$, \textsf{Pendant R-CIST(G,r,k)} is NP-complete. For any $k \geq 2$, \textsf{Pendant R-CIST(G,r,k)} is NP-complete.
\end{corollary}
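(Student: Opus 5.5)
The corollary follows by re-reading the reduction in the proof of \Cref{thm: hardness} and noting that every tree it produces is pendant. I will prove membership in NP and NP-hardness separately.

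\emph{Membership in NP.} The certificate is a set of $k$ trees. We check in polynomial time that each tree is an $R$-Steiner tree with $L(T_i)=R$, that is, a pendant $R$-Steiner tree. By \Cref{theorem: disjoint internal nodes}, it then suffices to check that the trees are pairwise edge-disjoint and that their internal nodes are pairwise disjoint. Both checks are done by marking edges and internal nodes, as in the proof of \Cref{thm: hardness}.

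\emph{NP-hardness.} I use the same reduction from \textsf{R-DST(G,r,k)}: build $G'$ by subdividing every edge of $G$ incident to a terminal, and then making $N_H[v]$ a clique for each terminal $v$. This keeps $r$ and $k$ unchanged, so both parameter regimes transfer: $r\ge 3$ with $k$ arbitrary, and $k\ge 2$ with $r$ arbitrary. It remains to show that $G$ has an \textsf{$R$-DST} of size $k$ if and only if $G'$ has a \textsf{Pendant $R$-CIST} of size $k$.

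For the forward direction, the construction in the proof of \Cref{thm: hardness} deletes each terminal, reconnects its tree-neighbours through the clique on $N_{G'}(v)$, and re-attaches the terminal as a leaf. So every terminal is a leaf of every $T_i'$. Each $T_i'$ is also an $R$-Steiner tree after pruning non-terminal leaves, and pruning cannot make a terminal internal. Hence $L(T_i')=R$, and the trees form a \textsf{Pendant $R$-CIST}. The only thing to check here is that pruning leaves every terminal a leaf, which is immediate.

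For the reverse direction, a \textsf{Pendant $R$-CIST} is in particular an \textsf{$R$-CIST}, so the reverse argument of \Cref{thm: hardness} applies unchanged and yields an \textsf{$R$-DST} of size $k$ in $G$. The NP-hardness results for \textsf{R-DST} cited there then give NP-hardness of the pendant version in both regimes.

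I do not expect a real obstacle. The one point needing care is confirming that the forward construction never places a terminal in the interior of a tree. This holds because each terminal is re-added with degree one, and the clique edges joining its neighbours avoid $v$ itself.
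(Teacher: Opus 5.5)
Your proof is correct and follows the paper's route: the paper gets the corollary by observing that the forward direction of the reduction in \Cref{thm: hardness} always yields trees in which every terminal is a leaf, and that the reverse direction works for any \textsf{$R$-CIST}, pendant or not. Your NP-membership check (verifying $L(T_i)=R$ together with the disjointness conditions of \Cref{theorem: disjoint internal nodes}) and your remark that pruning cannot make a terminal internal only spell out details the paper leaves implicit.
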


In \cite{cheriyan_talg}, it was shown that the optimisation version of \textsf{R-DST(G,r,k)} is NP-hard to approximate within a factor of $\Omega(\log n)$, and in \cite{cheriyan}, it was shown to be APX-hard even for $r = 3$. Using the reductions in the proof of \Cref{thm: hardness}, we infer that the same hardness results are true for the optimisation versions of both \textsf{R-CIST(G,r,k)} and \textsf{Pendant R-CIST(G,r,k)}. 

\medskip
In \Cref{thm: polytime enumeration}, we give a polynomial time algorithm for \textsf{R-CIST(G,r,k)} when both $r$ and $k$ are constants. The main idea behind this is that we can use a polynomial time algorithm for deciding whether, given a constant-length sequence of distinct vertices $(x_1, \dots, x_q,y_1,\dots,y_q)$, there exist vertex-disjoint $(x_i,y_i)$-paths for all $i \in [q]$. The algorithm presented here is inspired from similar algorithms in the literature on \textsf{$R$-DST} \cite{linkage_general,linkage_planar}. 
It is worth mentioning that for $r=2$, one may use a simple max-flow algorithm for \textsf{R-CIST(G,r,k)}, since any \textsf{$R$-CIST} is a collection of vertex-disjoint paths between the two terminals.

\begin{figure}
\centering

\begin{subfigure}[t]{0.23\textwidth}
    \centering
    \includegraphics[scale = 0.6, page = 1]{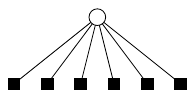}
\end{subfigure}
\hfill
\begin{subfigure}[t]{0.23\textwidth}
    \centering
    \includegraphics[scale = 0.6, page = 2]{Figures/Templates.pdf}
\end{subfigure}
\hfill
\begin{subfigure}[t]{0.23\textwidth}
    \centering
    \includegraphics[scale = 0.6, page = 5]{Figures/Templates.pdf}
\end{subfigure}
\hfill
\begin{subfigure}[t]{0.23\textwidth}
    \centering
    \includegraphics[scale = 0.6, page = 6]{Figures/Templates.pdf}
\end{subfigure}

\vspace{1em}

\hspace{0.23\textwidth}
\hfill
\begin{subfigure}[t]{0.23\textwidth}
    \centering
    \includegraphics[scale = 0.6, page = 3]{Figures/Templates.pdf}
\end{subfigure}
\hfill
\begin{subfigure}[t]{0.23\textwidth}
    \centering
    \includegraphics[scale = 0.6, page = 4]{Figures/Templates.pdf}
\end{subfigure}
\hfill
\begin{subfigure}[t]{0.23\textwidth}
    \centering
    \includegraphics[scale = 0.6, page = 7]{Figures/Templates.pdf}
\end{subfigure}

\caption{This figure shows the set of all non-isomorphic templates on 6 terminals that generate pendant $R$-Steiner trees.}
\label{fig: templates}
\end{figure}

\begin{theorem}\label{thm: polytime enumeration}
   If both $r$ and $k$ are fixed integers, then \textsf{R-CIST(G,r,k)} is decidable in polynomial time. 
\end{theorem}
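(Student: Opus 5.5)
The plan is to reduce \textsf{R-CIST(G,r,k)} to a constant number of instances of the disjoint paths problem. We can decide these in polynomial time for a constant number of pairs by the Robertson--Seymour algorithm. First, by \Cref{proposition: subdivision} we pass to $\tilde{G}_R$, so that $R$ is independent. This only changes the input size polynomially, and it lets us use \Cref{theorem: disjoint internal nodes} and \Cref{Cor: ridst to rcsit}: we need $k$ trees whose internal node sets are pairwise disjoint. Edge-disjointness then comes for free, since an edge shared by two trees would join two terminals. Every $R$-Steiner tree $T$ with leaves in $R$ is a subdivision of a ``template'': a tree $\tau$ whose vertex set consists of the $r$ terminals together with the branching non-terminal vertices (degree $\ge 3$) of $T$, and whose leaves are terminals. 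Such a $\tau$ has at most $r-2$ branching non-terminals and at most $2r-3$ edges. So the number of templates is bounded by a function of $r$, as illustrated in \Cref{fig: templates}.

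The algorithm guesses, for each $i\in[k]$, a template $\tau_i$, giving $f(r)^k=O(1)$ combinations. It also guesses the injective placement of the branching non-terminals of all $\tau_i$ into $V(G)\setminus R$, requiring distinct trees to use distinct vertices; there are $n^{O(kr)}$ choices. Each edge $xy$ of each $\tau_i$ must be realised by an $(x,y)$-path in $G$, and the paths must satisfy the following. Paths belonging to the same tree are internally disjoint and avoid the tree's other branch vertices. Paths belonging to different trees have disjoint interiors. No path passes through a terminal or through a branch vertex of another tree, since a terminal appearing in the interior would be an internal node, which we have already accounted for in the template. Combined, all realising paths must be internally vertex-disjoint, and internal vertices must avoid $R$ and all placed branch vertices. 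Conversely, any such family yields trees whose internal nodes are exactly the template-internal vertices plus the path interiors. These are pairwise disjoint across trees, so by \Cref{Cor: ridst to rcsit} we get an \textsf{$R$-CIST}.

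The main obstacle is that the endpoints of the required paths are not distinct. Terminals and branch vertices appear in many pairs, while the linkage problem needs distinct terminals. I would handle this as in the proof of \Cref{lemma: linkage}. For each path we additionally guess its first and last internal vertex, namely a neighbour of $x$ and a neighbour of $y$, all distinct. This is again $n^{O(kr)}$ guesses. We then delete $R$ and the placed branch vertices and ask for vertex-disjoint paths between the guessed neighbour pairs in the remaining graph. An edge of a template realised by a single edge $xy$ of $G$ is guessed separately; it is allowed only if it is non-parallel in usage, which is automatic after subdivision. There are $O(kr)$ pairs, a constant, so each linkage query runs in polynomial time. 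The total running time is $n^{O(kr)}\cdot\mathrm{poly}(n)$. Finally, I would check carefully that restricting to minimal trees loses no generality: pruning non-terminal leaves and suppressing degree-2 non-terminals only shrinks internal sets, which preserves the \textsf{$R$-IDST} property.
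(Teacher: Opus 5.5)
Your plan is the paper's own approach: a constant number of templates (trees on $R$ plus at most $r-2$ branching non-terminals), $n^{O(kr)}$ placements of the branch vertices, and one disjoint-paths call with $O(kr)$ pairs per candidate. The only difference is how you make shared endpoints distinct. The paper adds twins of each labelled vertex. You instead work in $\tilde{G}_R$, guess the first and last internal vertex of every realising path as in the proof of \Cref{lemma: linkage}, and delete $R$ together with the placed branch vertices. This explicitly keeps path interiors off all template vertices and makes edge-disjointness automatic.

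Two details must be repaired, and the first makes the algorithm wrong as stated. In the converse step you claim that the internal sets of the constructed trees are pairwise disjoint, but you enforced this only for the branching non-terminals. You must also reject every tuple $(\tau_1,\dots,\tau_k)$ in which some terminal has degree at least $2$ in two different templates; the paper imposes exactly this condition when labelling. Without this filter you get false positives. Let $R=\{a,t,b\}$, with two parallel edges $at$ and two parallel edges $tb$, and take the path $a\,t\,b$ as the template of both trees. The four realising paths through the four subdivision vertices of $\tilde{G}_R$ are internally disjoint and avoid $R$, so your algorithm accepts. Yet every $R$-Steiner tree has $t$ as an internal node, so there is no \textsf{$R$-CIST} of size $2$.

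Second, requiring all guessed first and last internal vertices to be distinct rules out realising paths with exactly one internal vertex. That is precisely how each edge of $G[R]$ appears in $\tilde{G}_R$. Handle this case like the single-edge case: guess a common neighbour $w$ of $x$ and $y$ that lies outside $R$, is not a placed branch vertex, and differs from all other guessed vertices, then delete $w$ before the linkage call. With these two additions the argument goes through.
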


\begin{proof}
The algorithm proceeds with the following steps:

\begin{enumerate}
\setlength{\itemsep}{10pt}

    \item \textbf{Generating Templates:} The first step involves generating trees whose vertices are of two kinds: \emph{black} and \emph{white}. (The black vertices are intended for terminals, and the white vertices for non-terminals.) All leaves of the tree must be black, and every white vertex must have degree at least 3. The total number of black vertices should exactly equal $r$. Furthermore, no two trees $T_i$ and $T_j$ must be isomorphic in the following sense: There is a bijection $f: V(T_1) \mapsto V(T_2)$ that maps black vertices of $T_1$ to black vertices of $T_2$, and $uv \in E(T_1)$ if and only if $f(u)f(v) \in E(T_2)$. We call each of these trees as a \emph{template}, and let them be denoted by $\{\tau_1, \dots, \tau_p\}$. (See \Cref{fig: templates} for an example. Note that the figure only shows templates for pendant $R$-Steiner trees.) Since any template has exactly $r$ vertices of degree at most 2, it has at most $2r$ vertices. This implies that the total number of non-isomorphic templates $p$ is some function of $r$, and is therefore a constant. Hence, one can generate $\{\tau_1, \dots, \tau_p\}$ in $O(1)$ time.
    
    \item \textbf{Generating Candidate Solutions:} Next, consider all possible distinct sequences of non-negative integers  $(k_1, \dots, k_p)$ such that $k_1 + \dots + k_p = k$. For each $i \in [p]$, we would like to find $k_i$ trees of type $\tau_i$ so that the entire set of $k$ trees is an \textsf{$R$-CIST}. For this, we consider all possible ways of labelling the $k$ trees with vertices of $G$ subject to the following rules: Every black vertex of a tree must be labelled by a distinct terminal, every white vertex by a distinct non-terminal, and any node that is internal to one tree should not be internal to any other tree. We call each such set of $k$ labelled trees as a \emph{candidate solution}. Since the total number of white vertices in all the $k$ trees is a constant, and no two white vertices can get labelled with the same vertex of $G$, the total number of candidate solutions is $n^{O(1)}$, where $n = |V(G)|$.
    
    \item \textbf{Running Disjoint-Paths Algorithm:} If we succeed in finding a candidate solution, then we construct an auxiliary graph $G'$ as follows: For each vertex $v \in V(G)$ in the candidate solution, add $\sum_{\tau_i} k_i \cdot d_{\tau_i}(v)$ \emph{twins} of $v$---these are vertices that have the same neighbourhood as $v$ in $G$. By this, we can construct a sequence $(x_1, \dots, x_q,y_1,\dots,y_q)$, where $q := \sum_{\tau_i}k_i|E(\tau_i)|$, such that for any edge $e_i \in \bigcup_{\tau_i}E(\tau_i)$ whose end vertices are labelled $a$ and $b$, there exist distinct vertices $a' =: x_i$ and $b' =: y_i$, where $a'$ and $b'$ are twins of $a$ and $b$ respectively. Then, we run a disjoint-paths algorithm on $G'$ to test whether there exist vertex-disjoint $(x_i,y_i)$-paths in $G'$. We repeat this process for all candidate solutions until we succeed. The time taken to run the disjoint-paths algorithm is $O(|V(G')|^2)$, since $q \in O(1)$ \cite{dpp_quadratic}. Since $k,p \in O(1)$, we have $|V(G')| \in O(n)$, and therefore, the disjoint-paths algorithm runs in $O(n^2)$ time. As the number of candidate solutions is $n^{O(1)}$, the total run-time for this algorithm is at most $n^{O(1)}$.\qedhere
\end{enumerate}
\end{proof}

In \Cref{cor: planar runtime,cor: non-pendant runtime}, we give some simple applications of \Cref{thm: polytime enumeration}.

\begin{corollary}\label{cor: planar runtime}
    If $G$ is a planar graph with $|R|$ terminals, where $3 \leq |R|$ and $|R| \in O(1)$, then one can compute a maximum \textsf{$R$-CIST} of $G$ in polynomial time. 
\end{corollary}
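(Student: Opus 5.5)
By \Cref{theorem: planar graphs}, a planar graph with $3 \leq |R|$ terminals has no \textsf{$R$-CIST} of size larger than $5$. So the maximum size $k^*$ of an \textsf{$R$-CIST} lies in $\{0,1,\dots,5\}$, and the plan is to find it by trying each of these constantly many values with \Cref{thm: polytime enumeration}.

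Concretely, for $k = 5,4,3,2,1$ in decreasing order, I will run the algorithm of \Cref{thm: polytime enumeration} on the instance \textsf{R-CIST(G,|R|,k)}. Since $|R| \in O(1)$ and $k \leq 5$, each call runs in polynomial time. The first value $k$ for which the answer is yes equals $k^*$. The value $k=1$ always succeeds because $G$ is connected: take a spanning tree and prune its non-terminal leaves to get an $R$-Steiner tree. There are at most five calls, so the total running time is polynomial.

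To \emph{compute} the trees, not just the size, I will extract them from the successful call. A yes answer there comes from a labelled candidate solution together with vertex-disjoint $(x_i,y_i)$-paths in the auxiliary graph $G'$. I will map each twin back to its original vertex and replace each template edge with the corresponding path. This yields, for each tree, a connected subgraph of $G$ spanning $R$. I will then take a spanning tree of each subgraph and prune non-terminal leaves. Pruning only removes edges and can only turn internal nodes into leaves or delete them, so edge-disjointness and disjointness of internal nodes survive. \Cref{theorem: disjoint internal nodes} then certifies that the result is an \textsf{$R$-CIST}.

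The one step needing care is confirming that the paths returned by the disjoint-paths subroutine really do assemble into trees with disjoint internal nodes and edges. This is exactly what the correctness argument of \Cref{thm: polytime enumeration} provides. If one prefers not to rely on that, the alternative is to keep only the decision procedure and use self-reducibility: delete edges one at a time while the answer for $k^*$ stays yes. This uses polynomially many calls and leaves a minimal subgraph in which the solution can be read off. I expect no real obstacle; the key point is simply that planarity bounds $k^*$ by a constant.
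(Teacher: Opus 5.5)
Your proposal is correct and is essentially the paper's proof. The paper uses \Cref{theorem: planar graphs} to cap the size of any \textsf{$R$-CIST} at $5$ once $|R| \geq 3$, then runs \Cref{thm: polytime enumeration} for each of the constantly many values of $k$; it tries $k \in \{2,\dots,5\}$ and, like your main extraction step, implicitly relies on that algorithm being constructive. The only soft spot is your optional self-reducibility fallback: an edge-minimal subgraph need not determine the partition into trees uniquely. So ``reading off'' the solution still requires a small search over how the components of that subgraph minus $R$, and its terminal--terminal edges, are assigned to trees; since $k$ and $|R|$ are constants this search is constant-size and harmless.
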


\begin{proof}
From \Cref{theorem: planar graphs}, we know that any \textsf{$R$-CIST} of $G$ has size at most 5. Therefore, we can apply \Cref{thm: polytime enumeration} for all values of $k$ in the range $\{2,\dots,5\}$ to obtain a maximum \textsf{$R$-CIST} of $G$.   
\end{proof}

\begin{corollary}\label{cor: non-pendant runtime}
If $G$ is a graph with $|R|$ terminals, where $3 \leq |R|$ and $|R| \in O(1)$, then one can compute a maximum \textsf{Non-pendant} \textsf{$R$-CIST} of $G$ in polynomial time.
\end{corollary}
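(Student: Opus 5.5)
By \Cref{obs: number of RCIST}, every \textsf{Non-pendant} \textsf{$R$-CIST} of $G$ has size at most $|R|$, which is a constant. The plan is therefore to run the algorithm of \Cref{thm: polytime enumeration} once for each $k\in\{1,\dots,|R|\}$ and return the largest $k$ that succeeds.

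The algorithm must be restricted to non-pendant trees. In Step~1 we keep only the templates in which some black vertex is internal, i.e.\ has degree at least $2$; call these the non-pendant templates. Every non-pendant $R$-Steiner tree, after suppressing degree-2 non-terminals, is a labelling of such a template. Conversely, every candidate solution that the disjoint-paths step realises yields trees whose images keep the labelled black internal vertex as an internal terminal, so each tree is non-pendant. The rest of the algorithm is unchanged.

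Correctness then follows exactly as in \Cref{thm: polytime enumeration}. Because the template family only shrinks, the number of candidate solutions remains $n^{O(1)}$ for each $k\le |R|$. Summing over $O(|R|)=O(1)$ values of $k$ gives polynomial total time, and a witnessing collection of trees can be recovered from the disjoint paths that were found.

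The main point to check is that filtering the templates loses no solutions. Take any non-pendant tree in an optimal \textsf{Non-pendant} \textsf{$R$-CIST} and contract it to its template by suppressing non-terminal vertices of degree $2$. This operation never changes the degree of a terminal, so an internal terminal stays internal in the template. Hence the template is one of the retained ones, and the enumeration still finds this solution. The trivial value $k=1$ can be handled directly, since it only requires that the terminals lie in one component of $G$ and that some spanning Steiner tree has an internal terminal; it can also simply be included in the loop.
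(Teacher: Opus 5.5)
Your proposal is correct and matches the paper's proof: you bound the size by $|R|$ via \Cref{obs: number of RCIST}, then rerun the template-enumeration algorithm of \Cref{thm: polytime enumeration} using only templates that have an internal black vertex. Your extra check, that suppressing degree-2 non-terminals leaves terminal degrees unchanged so the filtering discards no solution, is a detail the paper leaves implicit.
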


\begin{proof}
From \Cref{obs: number of RCIST}, a maximum \textsf{Non-pendant} \textsf{$R$-CIST} of $G$ has size at most $|R|$, and is therefore a constant. Then, we can run the algorithm in \Cref{thm: polytime enumeration}, but with a small change: we ensure that each template contains at least one interior black vertex. This will then guarantee that all solutions consist of only non-pendant $R$-Steiner trees.  
\end{proof}

For graphs of bounded treewidth, we can get a linear time algorithm to decide \textsf{R-CIST(G,r,k)} for constant values of $k$, since we can frame this question in the language of Monadic Second Order Logic (MSOL), and use Courcelle's theorem.  

\begin{theorem}[Courcelle's theorem \cite{courcelle1990monadic}]\label{thm: courcelle}
    Every graph property definable in the monadic second-order logic of graphs can be decided in linear time on graphs of bounded treewidth.
\end{theorem}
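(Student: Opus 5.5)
The plan is the standard automata-theoretic route, split into three stages. First, we make the input structured. Then we translate the MSOL sentence into a finite tree automaton. Finally, we run the automaton on the decomposition.

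\textbf{Stage 1: obtaining a nice decomposition.} Fix the width bound $w$ and the MSOL sentence $\varphi$, allowing quantification over vertex sets and edge sets. Given $G$ of treewidth at most $w$, we first compute a tree decomposition of width at most $w$ in linear time using Bodlaender's algorithm. We then convert it, still in linear time, into a \emph{nice} tree decomposition of width $w$ with $O(n)$ nodes. Its nodes are of four kinds: leaf, introduce-vertex, introduce-edge, forget and join. Each bag is given at most $w+1$ distinct slot colours. The decomposition then becomes a rooted tree over a finite alphabet $\Sigma_w$, where each label records the node type and which slots are involved. Every vertex and every edge of $G$ corresponds to a unique node, namely the node where it is introduced.

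\textbf{Stage 2: translating $\varphi$ into a tree sentence.} We translate $\varphi$ into an MSO sentence $\varphi^*$ over $\Sigma_w$-labelled trees such that $G\models\varphi$ if and only if the labelled tree satisfies $\varphi^*$.

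\begin{itemize}
\item A quantifier over a vertex set $X$ becomes $w+1$ quantifiers over sets $X_1,\dots,X_{w+1}$ of tree nodes. Here $X_s$ consists of the introduce-vertex nodes for slot $s$ whose vertex lies in $X$.
\item Edge sets are handled in the same way, using introduce-edge nodes.
\item Equality of vertices, membership, and incidence must be expressed in the tree. Two occurrences refer to the same vertex exactly when they are joined by a tree path along which that slot is never forgotten or re-introduced. This condition is MSO-definable on trees.
\end{itemize}

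By the Thatcher--Wright/Doner theorem, $\varphi^*$ is equivalent to a deterministic bottom-up tree automaton $\mathcal{A}_{\varphi,w}$. Its size depends only on $\varphi$ and $w$. Running it on the decomposition takes $O(n)$ time, which gives the linear bound.

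\textbf{Main obstacle.} I expect the hard step to be the correctness of the translation. One must check that the slot-based encoding faithfully represents vertex and edge incidences across forget and join nodes, since the same slot may later be reused for a different vertex. One must also verify that the "same vertex" relation really is MSO-definable. If this becomes unwieldy, I would switch to a Feferman--Vaught style composition argument. For $(w+1)$-boundaried graphs, define the rank-$q$ MSO type, where $q$ is the quantifier rank of $\varphi$. There are only finitely many such types, and the type of a glued graph is determined by the types of its parts and the gluing operation. A dynamic program computing types bottom-up over the nice decomposition then decides $\varphi$ in linear time. In that version the main work is proving the composition lemma by an Ehrenfeucht--Fraïssé game argument.
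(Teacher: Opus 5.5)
The paper does not prove this statement; it quotes it from Courcelle and uses it as a black box in \Cref{thm: msol}. Your outline is the standard modern proof, and both of your routes work at the level of a sketch: the translation to MSO on labelled trees followed by Thatcher--Wright/Doner, and the Feferman--Vaught type-composition dynamic program. Your Stage~1 also makes explicit something the citation glosses over. Courcelle's 1990 result gives recognizability, hence linear time once a decomposition is supplied; deciding from the bare graph in linear time additionally needs Bodlaender's algorithm. Your treatment of edge-set quantifiers via introduce-edge nodes is also necessary, since the paper's formulas quantify over edge sets.

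One step is false as written. In a nice tree decomposition a vertex does \emph{not} have a unique introduce node. If $v$ lies in the bag of a join node, it lies in both children's bags, and each child subtree introduces $v$ separately. So an arbitrary tuple $(X_1,\dots,X_{w+1})$ of sets of introduce nodes need not come from any vertex set: it may contain one introduction of $v$ but not another. In that case ``$v\in X$'' is not well defined in your translation.

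There are two ways to repair this.
\begin{itemize}
\item Relativise every set quantifier of $\varphi^*$ to sets of introduce-$s$ nodes that are unions of classes of your ``same vertex'' relation, and use that relation to translate vertex equality.
\item More cleanly, make the root bag empty and represent each vertex by its unique forget node. The vertex occupying slot $s$ at an introduce-edge node $y$ is then the one forgotten at the first forget-$s$ node above $y$. Incidence becomes directly MSO-definable via the ancestor relation, and no equivalence relation is needed.
\end{itemize}

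Relatedly, your ``slot colours'' must be one global colouring of $V(G)$ with $w+1$ colours that is injective on every bag. Such a colouring exists, e.g.\ by colouring greedily from the root. This guarantees that a vertex occupies the same slot in both children of a join; otherwise your path-based ``same vertex'' relation fails.

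With these repairs the argument goes through. To cover the version the paper actually applies, namely graphs with a distinguished terminal set $R$, you should also record membership in $R$ in the node labels.
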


MSOL is an extension of first-order logic in which one can quantify over vertices and edges (e.g. $\exists v \in V$, $\forall e \in E$), and over vertex-sets and edge-sets ($\exists X \subseteq V$, $\exists F \subseteq E$). The allowed operations include logical connectives ($\neg$, $\land$, $\lor$, $\to$), equality, and basic relations such as adjacency between vertices. However, MSOL does not allow quantification over arbitrary relations of higher arity (only unary set variables are permitted), nor does it allow arithmetic operations like addition or multiplication on indices, or counting arbitrary cardinalities (unless using special counting extensions like CMSOL). In \Cref{thm: msol}, we show that if $k$ is a constant, then \textsf{R-CIST(G,r,k)} can be expressed in MSOL, and hence can be decided in linear time for graphs of bounded treewidth.

\begin{theorem}\label{thm: msol}
   If $k$ is a fixed integer and $G$ is a graph of bounded treewidth, then \textsf{R-CIST(G,r,k)} is decidable in linear time.
\end{theorem}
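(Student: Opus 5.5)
We would express the property ``$G$ has an \textsf{$R$-CIST} of size $k$'' by an MSOL formula whose length depends only on $k$, and then invoke Courcelle's theorem (\Cref{thm: courcelle}). The terminal set $R$ is given as a unary predicate, i.e.\ a vertex label, which Courcelle's theorem allows. By \Cref{theorem: disjoint internal nodes}, it suffices to express the existence of edge sets $F_1,\dots,F_k \subseteq E$ and vertex sets $I_1,\dots,I_k \subseteq V$ with the following properties for each $i$. First, $G[F_i]$ is an $R$-Steiner tree. Second, $I_i$ is exactly its set of internal nodes. Third, $F_i \cap F_j = \emptyset$ and $I_i \cap I_j = \emptyset$ for $i \neq j$. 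The disjointness conditions are plainly first-order over the set variables, e.g.\ $\forall e\,\neg(e \in F_i \wedge e\in F_j)$.

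The key step is to write a fixed subformula $\mathrm{ST}(F,I)$ saying that $F$ induces an $R$-Steiner tree with internal node set $I$. We would build it from four standard pieces.
\begin{itemize}
\item Let $V(F)$ be the set of vertices incident to an edge of $F$. Membership is definable as $v \in V(F) \iff \exists e\in F\ \mathrm{inc}(v,e)$.
\item Connectivity of $(V(F),F)$: for every $X \subseteq V(F)$ that is nonempty and proper, some edge of $F$ has one end in $X$ and the other in $V(F)\setminus X$.
\item Acyclicity: there is no nonempty $C\subseteq F$ in which every vertex incident to $C$ is incident to exactly two edges of $C$ (``exactly two'' is first-order expressible). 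Parallel edges are handled automatically, since a pair of parallel edges forms such a $C$.
\item Internal nodes: $v \in I$ iff $v$ is incident to at least two distinct edges of $F$. We also require $R \subseteq V(F)$ and that every vertex of $V(F)\setminus I$ lies in $R$, so that all leaves are terminals.
\end{itemize}
Degenerate cases, such as $|R|\ge 2$ forcing $F\neq\emptyset$, are covered by $R\subseteq V(F)$.

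The final formula is $\exists F_1\cdots\exists F_k\,\exists I_1\cdots\exists I_k\ \bigwedge_i \mathrm{ST}(F_i,I_i)\wedge\bigwedge_{i<j}(\text{disjointness})$. Its size is $O(k^2)$, which is constant for fixed $k$, so Courcelle's theorem gives a linear-time decision procedure on graphs of bounded treewidth. The bound does not depend on $r$.

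The main obstacle is to verify that quantifying over edge sets is legitimate. This requires using MSO$_2$, i.e.\ the incidence-graph formulation, which Courcelle's theorem covers. We also need that the multigraph setting does not raise the treewidth; this holds because parallel edges do not affect tree decompositions. A second point to check is that we must not rely on counting: the acyclicity and connectivity formulations above avoid cardinality comparisons such as $|F|=|V(F)|-1$, which would need CMSOL. With these checks done, the argument is complete.
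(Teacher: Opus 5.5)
Your proposal is correct and takes essentially the paper's route: using \Cref{theorem: disjoint internal nodes}, you write the existence of $k$ edge-disjoint $R$-Steiner trees with pairwise disjoint internal nodes as a constant-length MSO$_2$ formula (with $R$ as a unary predicate) and then apply Courcelle's theorem. The only differences are cosmetic. You build the tree predicate yourself from connectivity and acyclicity, and you use explicit internal-node set variables $I_i$. The paper instead cites ready-made $\mathrm{Tree}$, $\mathrm{Inc}$ and $\mathrm{Deg}_1$ formulas and encodes interior-disjointness by requiring every vertex common to two trees to be a leaf of one of them.
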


\begin{proof}
Given an edge set $E_i \subseteq E(G)$, we assume that we have a formula $\text{Tree}(E_i)$ that is true if and only if $G[E_i]$ is a tree, $\text{Inc}(v,E_i)$ that is true if and only if $v$ is a vertex of $G[E_i]$, and $\text{Deg}_1(v,E_i)$ that is true if and only if $v$ has degree 1 in $G[E_i]$. The formulae for these can be found in \cite{bodlaender}.

\medskip
The formula below encodes that a set of edges $E_i$ induces an $R$-Steiner tree:
$$ \text{$R$-ST}(E_i) := \text{Tree}(E_i) \land (\forall v \in R)(\text{Inc}(v,E_i)) \land (\forall v \in V)(\text{Deg}_1(v,E_i) \to  v \in R)$$
The next formula encodes that a set of edges $\{E_1, \dots, E_k\}$ induce edge-disjoint $R$-Steiner trees:
$$ \text{$R$-EST}(E_{[k]}) := \bigwedge_{i \in [k]}\text{$R$-ST}(E_i) \land (\forall e \in E)\Big(\bigwedge_{i \neq j}(e \notin E_i) \lor (e \notin E_j)\Big) $$
The next formula encodes that a set of edges $\{E_1, \dots, E_k\}$ induces an \textsf{$R$-CIST}:
$$ \text{$R$-CIST}(E_{[k]}) := \text{$R$-EST}(E_{[k]}) \land \bigwedge_{i \neq j}(\mathrm{Inc}(v,E_i)\land\mathrm{Inc}(v,E_j))
\to\mathrm{Deg}_1(v,E_j)\lor\mathrm{Deg}_1(v,E_i)$$
Finally, $G$ has an \textsf{$R$-CIST} of size $k$ if and only if the following formula is true: $$\text{$R$-CIST}(G,k) := \exists(E_1, \dots, E_k)(\text{$R$-CIST}(E_{[k]}))$$

Since $k$ is a constant, all the formulae above have constant length. Therefore, by \Cref{thm: courcelle}, \textsf{R-CIST(G,r,k)} is decidable in linear time for constant $k$.
\end{proof}

In \Cref{cor: bdd tw np rcist runtime,cor: bdd tw rcist runtime}, we give some simple applications of \Cref{thm: msol}.

\begin{corollary}\label{cor: bdd tw np rcist runtime}
    If $G$ is a graph of bounded treewidth, then one can compute the maximum size of a \textsf{Non-pendant $R$-CIST} of $G$ in linear time. 
\end{corollary}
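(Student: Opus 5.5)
The plan is to combine the a priori size bound from \Cref{theorem: bounded treewidth} with a constant number of calls to a Courcelle-type decision procedure, as in \Cref{thm: msol}. Let $w$ be the treewidth bound of the graph class. By \Cref{theorem: bounded treewidth}, every \textsf{Non-pendant} \textsf{$R$-CIST} of $G$ has size at most $w+1$, which is a constant. So it suffices to decide, for each $k \in \{1, \dots, w+1\}$, whether $G$ has a \textsf{Non-pendant} \textsf{$R$-CIST} of size $k$, and then output the largest $k$ that succeeds (or $0$ if none does). This takes $w+1 = O(1)$ linear-time tests, hence linear time overall.

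For each fixed $k$, I will adapt the MSOL formula from the proof of \Cref{thm: msol} so that every tree is additionally non-pendant. Concretely, I replace $\text{$R$-ST}(E_i)$ by
$$\text{NP-$R$-ST}(E_i) := \text{$R$-ST}(E_i) \land (\exists v \in R)\big(\neg \text{Deg}_1(v,E_i)\big),$$
which says that some terminal has degree at least $2$ in $G[E_i]$. Since $\text{Inc}(v,E_i)$ already holds for every $v \in R$, this terminal is an internal node. I then form $\text{NP-$R$-CIST}(E_{[k]})$ exactly as $\text{$R$-CIST}(E_{[k]})$ is formed, using this predicate in place of $\text{$R$-ST}$, and quantify existentially over $E_1, \dots, E_k$. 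By \Cref{theorem: disjoint internal nodes}, the resulting formula holds exactly when $G$ has a \textsf{Non-pendant} \textsf{$R$-CIST} of size $k$. Its length depends only on $k \le w+1$, so \Cref{thm: courcelle} decides it in linear time.

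The main point needing care is the handling of the terminal set $R$, which is part of the input and not a graph property. I will treat $R$ as a unary vertex predicate (a labelled or coloured graph), which Courcelle's theorem allows. A second subtlety is that the constant hidden in ``linear time'' must not depend on $k$ in an unbounded way. This holds because $k$ ranges only up to $w+1$, and $w$ is fixed for the class.
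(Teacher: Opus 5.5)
Your proposal is correct and follows the paper's proof essentially verbatim. Both use the bound $w+1$ from \Cref{theorem: bounded treewidth}, plug the same predicate $\text{$R$-ST}(E_i) \land (\exists v \in R)(\neg \text{Deg}_1(v,E_i))$ into the MSOL formula of \Cref{thm: msol}, and make a constant number of Courcelle calls. Your small refinements, scanning $k$ from $1$ with output $0$ if nothing succeeds and treating $R$ as a unary vertex label, slightly tighten the paper's search over $\{2,\dots,tw(G)+1\}$; the zero case genuinely occurs, e.g.\ when $|R|=2$, where every $R$-Steiner tree is pendant.
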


\begin{proof}
We know from \Cref{theorem: bounded treewidth} that any \textsf{Non-pendant $R$-CIST} of $G$ has size at most $tw(G) + 1$. To decide whether a \textsf{Non-pendant $R$-CIST} has size $k$, we can use the same formulas as listed above in the proof of \Cref{thm: msol}, but with a small change. To ensure that we only search for non-pendant $R$-Steiner trees, instead of $\text{$R$-ST}(E_i)$, we use the formula 
$$\text{$npR$-ST}(E_i) := \text{$R$-ST}(E_i) \land (\exists v \in R)(\neg \text{Deg}_1(v,E_i))$$
To find the size of a maximum \textsf{Non-pendant $R$-CIST}, we simply search for the largest value of $k$ in the range $\{2, \dots, tw(G)+1\}$ for which the formula $\text{$R$-CIST}(G,k)$ is true.
\end{proof}

\begin{corollary}\label{cor: bdd tw rcist runtime}
    If a graph $G$ has treewidth $w \in O(1)$, then for any set of terminals $R$ where $|R| \geq w+1$, one can compute the maximum size of an \textsf{$R$-CIST} of $G$ in linear time.
\end{corollary}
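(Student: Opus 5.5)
The plan is to combine the size bound of \Cref{cor: bounded tw bounded rcist} with the MSOL-based decision procedure of \Cref{thm: msol}, exactly as in the proof of \Cref{cor: bdd tw np rcist runtime}. Since $tw(G) \leq w$ and $|R| \geq w+1$, \Cref{cor: bounded tw bounded rcist} tells us that every \textsf{$R$-CIST} of $G$ has size at most $2w$. As $w \in O(1)$, the maximum size of an \textsf{$R$-CIST} therefore lies in the constant-size range $\{1,\dots,2w\}$.

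Next, for each fixed $k \in \{1, \dots, 2w\}$, we evaluate the formula $\text{$R$-CIST}(G,k)$ from the proof of \Cref{thm: msol}. The value $k = 1$ is trivially feasible whenever $G$ is connected, since any spanning tree pruned to an $R$-Steiner tree works. Because $k \leq 2w$ is bounded by a constant, each formula has constant length. By \Cref{thm: courcelle}, each one can then be decided in linear time on $G$, whose treewidth is bounded by $w$. We output the largest $k$ for which the formula holds. Since there are only $2w = O(1)$ calls, the total running time is linear.

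The main subtlety is one point I would check carefully: the terminal set $R$ is part of the input, so the MSOL formulas mention $R$. This is handled in the standard way, either by treating $R$ as a free set variable or unary vertex label, which Courcelle's theorem permits for labelled graphs of bounded treewidth, or by quantifying it as in \Cref{thm: msol}. Beyond this, no new argument is needed. The corollary follows directly from the constant upper bound together with the constant number of linear-time MSOL checks.
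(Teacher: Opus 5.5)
Your proposal is correct and follows the paper's proof: you bound the maximum by $2w$ via \Cref{cor: bounded tw bounded rcist}, then decide the constant-length formula $\text{$R$-CIST}(G,k)$ for each $k$ in the resulting constant range with \Cref{thm: courcelle}. Your search range $\{1,\dots,2w\}$ is exactly what the cited $2w$ bound justifies, whereas the paper's text writes $\{2,\dots,tw(G)+1\}$, which that bound alone does not license; supplying $R$ to the MSOL formulas as a unary label or free set variable, your first option, is the right way to handle the terminals.
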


\begin{proof}
    From \Cref{cor: bounded tw bounded rcist}, we know that any \textsf{$R$-CIST} of $G$ has size at most $2w$. Therefore, we simply search for the largest value of $k$ in the range $\{2, \dots, tw(G)+1\}$ for which the formula $\text{$R$-CIST}(G,k)$ is true.
\end{proof}

\section{Avenues for Future Work}\label{sec: conclusion}

This paper surveys the landscape of completely independent Steiner trees in general graphs, covering various characterizations, algorithms, and hardness results. For future work, it would be interesting to investigate completely independent Steiner trees for special graph classes, classical interconnection networks (hypercubes and their variants, torus and mesh networks, line graphs, etc.), and data center networks (DCell, BCube, etc.).  Another natural direction is to study the existence of completely independent Steiner trees of small diameter, which is particularly relevant in interconnection networks for communication efficiency; see \cite{locally_twisted} for related work on completely independent spanning trees.

This paper also introduced the notion of completely independent spanning arborescences. Consequently, one could further explore structural properties and characterisations of such objects in directed graphs---a useful starting point would be the surveys on packing spanning arborescences in digraphs \cite{berczi2009packing,szigeti2017survey}.

\newpage
\bibliography{bib2doi}

@article{Hager,
  author = {Michael Hager},
  title = {Pendant tree-connectivity},
  journal = {J. Comb. Theory {B}},
  volume = {38},
  number = {2},
  pages = {179--189},
  year = {1985},
  url = {https://doi.org/10.1016/0095-8956(85)90083-8},
  doi = {10.1016/0095-8956(85)90083-8},
  timestamp = {Fri, 07 Jun 2024 01:00:00 +0200},
  biburl = {https://dblp.org/rec/journals/jct/Hager85.bib},
  bibsource = {dblp computer science bibliography, https://dblp.org},
  _bib2doi_selected = {dblp:/rec/journals/jct/Hager85.bib},
  _bib2doi_confirmed = {true},
}

@article{dpp_quadratic,
  title = {The disjoint paths problem in quadratic time},
  journal = {J. Comb. Theory {B}},
  volume = {102},
  number = {2},
  pages = {424--435},
  year = {2012},
  issn = {0095-8956},
  doi = {10.1016/j.jctb.2011.07.004},
  url = {https://doi.org/10.1016/j.jctb.2011.07.004},
  author = {Ken{-}ichi Kawarabayashi and Yusuke Kobayashi and Bruce A. Reed},
  keywords = {Disjoint paths, Quadratic time, Graph minor, Tree-width},
  abstract = {We consider the following well-known problem, which is called the disjoint paths problem. For a given graph G and a set of k pairs of terminals in G, the objective is to find k vertex-disjoint paths connecting given pairs of terminals or to conclude that such paths do not exist. We present an O(n2) time algorithm for this problem for fixed k. This improves the time complexity of the seminal result by Robertson and Seymour, who gave an O(n3) time algorithm for the disjoint paths problem for fixed k. Note that Perković and Reed (2000) announced in [24] (without proofs) that this problem can be solved in O(n2) time. Our algorithm implies that there is an O(n2) time algorithm for the k edge-disjoint paths problem, the minor containment problem, and the labeled minor containment problem. In fact, the time complexity of all the algorithms with the most expensive part depending on Robertson and Seymourʼs algorithm can be improved to O(n2), for example, the membership testing for minor-closed class of graphs.},
  timestamp = {Fri, 07 Jun 2024 01:00:00 +0200},
  biburl = {https://dblp.org/rec/journals/jct/KawarabayashiKR12.bib},
  bibsource = {dblp computer science bibliography, https://dblp.org},
  _bib2doi_old_doi = {https://doi.org/10.1016/j.jctb.2011.07.004},
  _bib2doi_selected = {dblp:/rec/journals/jct/KawarabayashiKR12.bib},
  _bib2doi_confirmed = {true},
  _bib2doi_finished = {true},
}

@article{cheriyan,
  author = {Ashkan Aazami and Joseph Cheriyan and Krishnam Raju Jampani},
  title = {Approximation Algorithms and Hardness Results for Packing Element-Disjoint {S}teiner Trees in Planar Graphs},
  journal = {Algorithmica},
  volume = {63},
  number = {1-2},
  pages = {425--456},
  year = {2012},
  url = {https://doi.org/10.1007/s00453-011-9540-3},
  doi = {10.1007/s00453-011-9540-3},
  timestamp = {Wed, 17 May 2017 01:00:00 +0200},
  biburl = {https://dblp.org/rec/journals/algorithmica/AazamiCJ12.bib},
  bibsource = {dblp computer science bibliography, https://dblp.org},
  _bib2doi_selected = {dblp:/rec/journals/algorithmica/AazamiCJ12.bib},
  _bib2doi_confirmed = {true},
}

@article{hwang1995steiner,
  title = {The {S}teiner tree problem, annals of discrete mathematics, volume 53},
  author = {Hwang, Frank K and Richards, Dana S and Winter, Pawel and Widmayer, P},
  journal = {ZOR-Methods and Models of Operations Research},
  volume = {41},
  number = {3},
  pages = {382},
  year = {1995},
  publisher = {Wurzburg, Physica-Verlag, 1972-1995.},
  _bib2doi_finished = {true},
}

@article{LU200355,
  title = {The full {S}teiner tree problem},
  journal = {Theor. Comput. Sci.},
  volume = {306},
  number = {1-3},
  pages = {55--67},
  year = {2003},
  issn = {0304-3975},
  doi = {10.1016/S0304-3975(03)00209-3},
  url = {https://doi.org/10.1016/S0304-3975(03)00209-3},
  author = {Chin Lung Lu and Chuan Yi Tang and Richard Chia{-}Tung Lee},
  keywords = {Full Steiner tree problem, Phylogenetic tree, Evolutionary tree, NP-complete, MAX SNP-hard, Approximation algorithm},
  abstract = {Motivated by the reconstruction of phylogenetic tree in biology, we study the full Steiner tree problem in this paper. Given a complete graph G=(V,E) with a length function on E and a proper subset R⊂V, the problem is to find a full Steiner tree of minimum length in G, which is a kind of Steiner tree with all the vertices of R as its leaves. In this paper, we show that this problem is NP-complete and MAX SNP-hard, even when the lengths of the edges are restricted to either 1 or 2. For the instances with lengths either 1 or 2, we give a 85-approximation algorithm to find an approximate solution for the problem.},
  timestamp = {Wed, 17 Feb 2021 00:00:00 +0100},
  biburl = {https://dblp.org/rec/journals/tcs/LuTL03.bib},
  bibsource = {dblp computer science bibliography, https://dblp.org},
  _bib2doi_old_doi = {https://doi.org/10.1016/S0304-3975(03)00209-3},
  _bib2doi_selected = {dblp:/rec/journals/tcs/LuTL03.bib},
  _bib2doi_confirmed = {true},
  _bib2doi_finished = {true},
}

@article{hasunuma,
  author = {Toru Hasunuma},
  title = {Completely independent spanning trees in the underlying graph of a line digraph},
  journal = {Discret. Math.},
  volume = {234},
  number = {1-3},
  pages = {149--157},
  year = {2001},
  url = {https://doi.org/10.1016/S0012-365X(00)00377-0},
  doi = {10.1016/S0012-365X(00)00377-0},
  timestamp = {Sun, 19 Jan 2025 00:00:00 +0100},
  biburl = {https://dblp.org/rec/journals/dm/Hasunuma01.bib},
  bibsource = {dblp computer science bibliography, https://dblp.org},
  _bib2doi_selected = {dblp:/rec/journals/dm/Hasunuma01.bib},
  _bib2doi_confirmed = {true},
}

@article{cheriyan_talg,
  author = {Joseph Cheriyan and Mohammad R. Salavatipour},
  title = {Packing element-disjoint {S}teiner trees},
  journal = {{ACM} Trans. Algorithms},
  volume = {3},
  number = {4},
  pages = {47},
  year = {2007},
  url = {https://doi.org/10.1145/1290672.1290684},
  doi = {10.1145/1290672.1290684},
  timestamp = {Tue, 06 Nov 2018 00:00:00 +0100},
  biburl = {https://dblp.org/rec/journals/talg/CheriyanS07.bib},
  bibsource = {dblp computer science bibliography, https://dblp.org},
  _bib2doi_selected = {dblp:/rec/journals/talg/CheriyanS07.bib},
  _bib2doi_confirmed = {true},
}

@book{tree_book,
  title = {Generalized Connectivity of Graphs},
  isbn = {9783319338286},
  issn = {2191-8201},
  url = {http://dx.doi.org/10.1007/978-3-319-33828-6},
  doi = {10.1007/978-3-319-33828-6},
  journal = {SpringerBriefs in Mathematics},
  publisher = {Springer International Publishing},
  author = {Li, Xueliang and Mao, Yaping},
  year = {2016},
  _bib2doi_finished = {true},
}

@article{araki_dirac,
  author = {Toru Araki},
  title = {Dirac's Condition for Completely Independent Spanning Trees},
  journal = {J. Graph Theory},
  volume = {77},
  number = {3},
  pages = {171--179},
  year = {2014},
  url = {https://doi.org/10.1002/jgt.21780},
  doi = {10.1002/jgt.21780},
  timestamp = {Fri, 02 Oct 2020 01:00:00 +0200},
  biburl = {https://dblp.org/rec/journals/jgt/Araki14.bib},
  bibsource = {dblp computer science bibliography, https://dblp.org},
  _bib2doi_selected = {dblp:/rec/journals/jgt/Araki14.bib},
  _bib2doi_confirmed = {true},
}

@article{Ore,
  author = {Genghua Fan and Yanmei Hong and Qinghai Liu},
  title = {Ore's condition for completely independent spanning trees},
  journal = {Discret. Appl. Math.},
  volume = {177},
  pages = {95--100},
  year = {2014},
  url = {https://doi.org/10.1016/j.dam.2014.06.002},
  doi = {10.1016/j.dam.2014.06.002},
  timestamp = {Thu, 11 Feb 2021 00:00:00 +0100},
  biburl = {https://dblp.org/rec/journals/dam/FanHL14.bib},
  bibsource = {dblp computer science bibliography, https://dblp.org},
  _bib2doi_selected = {dblp:/rec/journals/dam/FanHL14.bib},
  _bib2doi_confirmed = {true},
}

@article{deg_sum,
  title = {Degree condition for completely independent spanning trees},
  journal = {Inf. Process. Lett.},
  volume = {116},
  number = {10},
  pages = {644--648},
  year = {2016},
  issn = {0020-0190},
  doi = {10.1016/j.ipl.2016.05.004},
  url = {https://doi.org/10.1016/j.ipl.2016.05.004},
  author = {Xia Hong and Qinghai Liu},
  keywords = {Completely independent spanning tree, Combinatorial problems},
  abstract = {Let T1,T2,…,Tk be spanning trees of a graph G. For any two vertices u,v of G, if the paths from u to v in these k trees are pairwise openly disjoint, then we say that T1,T2,…,Tk are completely independent. Araki showed that a graph G on n≥7 vertices has two completely independent spanning trees if the minimum degree δ(G)≥n/2. In this paper, we give a generalization: a graph G on n≥4k−1 vertices has k completely independent spanning trees if the minimum degree δ(G)≥k−1kn. In fact, we prove a stronger result.},
  timestamp = {Thu, 19 Oct 2017 01:00:00 +0200},
  biburl = {https://dblp.org/rec/journals/ipl/HongL16.bib},
  bibsource = {dblp computer science bibliography, https://dblp.org},
  _bib2doi_old_doi = {https://doi.org/10.1016/j.ipl.2016.05.004},
  _bib2doi_selected = {dblp:/rec/journals/ipl/HongL16.bib},
  _bib2doi_confirmed = {true},
  _bib2doi_finished = {true},
}

@article{fan_condition,
  title = {Fan-type condition for two completely independent spanning trees},
  journal = {Theoretical Computer Science},
  volume = {1064},
  pages = {115718},
  year = {2026},
  issn = {0304-3975},
  doi = {https://doi.org/10.1016/j.tcs.2025.115718},
  url = {https://www.sciencedirect.com/science/article/pii/S0304397525006553},
  author = {Jie Ma and Junqing Cai},
  keywords = {Completely independent spanning trees, CIST-partition, Fan-type condition},
  abstract = {The spanning trees T1,T2,⋯,Tk of a graph G are called k completely independent spanning trees (CISTs) if for any two vertices u, v ∈ V(G), the paths connecting u and v in any two distinct trees are pairwise edge-disjoint and internally vertex-disjoint. CISTs have significant applications in fault-tolerant broadcasting for interconnection networks, substantially improving network reliability and redundancy. However, determining whether a connected graph contains two CISTs is known to be NP-complete. Araki [J. Graph Theory. 77 (2014) 171–179.] posed an open question regarding whether certain known sufficient conditions for hamiltonian cycles could guarantee the existence of two CISTs. In this paper, we provide an affirmative answer to this question by proving that every connected graph G of order n ≥ 7 with μ2(G) ≥ n contains two CISTs. Moreover, both the lower bounds on the order n and the degree condition μ2(G) are best possible.},
  _bib2doi_finished = {true},
}

@article{peterfalvi,
  title = {Two counterexamples on completely independent spanning trees},
  journal = {Discret. Math.},
  volume = {312},
  number = {4},
  pages = {808--810},
  year = {2012},
  issn = {0012-365X},
  doi = {10.1016/j.disc.2011.11.015},
  url = {https://doi.org/10.1016/j.disc.2011.11.015},
  author = {Ferenc P{\'{e}}terfalvi},
  keywords = {Independent spanning trees, Interconnection networks, Plane graphs, Tutte graph},
  abstract = {For each k≥2, we construct a k-connected graph which does not contain two completely independent spanning trees. This disproves a conjecture of Hasunuma. Furthermore, we also give an example of a 3-connected maximal plane graph not containing two completely independent spanning trees.},
  timestamp = {Sun, 19 Jan 2025 00:00:00 +0100},
  biburl = {https://dblp.org/rec/journals/dm/Peterfalvi12.bib},
  bibsource = {dblp computer science bibliography, https://dblp.org},
  _bib2doi_old_doi = {https://doi.org/10.1016/j.disc.2011.11.015},
  _bib2doi_selected = {dblp:/rec/journals/dm/Peterfalvi12.bib},
  _bib2doi_confirmed = {true},
  _bib2doi_finished = {true},
}

@article{thomas_wollan,
  title = {An improved linear edge bound for graph linkages},
  journal = {Eur. J. Comb.},
  volume = {26},
  number = {3-4},
  pages = {309--324},
  year = {2005},
  note = {Topological Graph Theory and Graph Minors, second issue},
  issn = {0195-6698},
  doi = {10.1016/j.ejc.2004.02.013},
  url = {https://doi.org/10.1016/j.ejc.2004.02.013},
  author = {Robin Thomas and Paul Wollan},
  abstract = {A graph is said to be k-linked if it has at least 2k vertices and for every sequence s1,…,sk,t1,…,tk of distinct vertices there exist disjoint paths P1,…,Pk such that the ends of Pi are si and ti. Bollobás and Thomason showed that if a simple graph G on n vertices is 2k-connected and G has at least 11kn edges, then G is k-linked. We give a relatively simple inductive proof of the stronger statement that 8kn edges and 2k-connectivity suffice, and then with more effort improve the edge bound to 5kn.},
  timestamp = {Fri, 12 Feb 2021 00:00:00 +0100},
  biburl = {https://dblp.org/rec/journals/ejc/ThomasW05.bib},
  bibsource = {dblp computer science bibliography, https://dblp.org},
  _bib2doi_old_doi = {https://doi.org/10.1016/j.ejc.2004.02.013},
  _bib2doi_selected = {dblp:/rec/journals/ejc/ThomasW05.bib},
  _bib2doi_confirmed = {true},
  _bib2doi_finished = {true},
}

@article{linkage_general,
  author = {Shasha Li and Xueliang Li},
  title = {Note on the hardness of generalized connectivity},
  journal = {J. Comb. Optim.},
  volume = {24},
  number = {3},
  pages = {389--396},
  year = {2012},
  url = {https://doi.org/10.1007/s10878-011-9399-x},
  doi = {10.1007/s10878-011-9399-x},
  timestamp = {Mon, 13 Nov 2017 00:00:00 +0100},
  biburl = {https://dblp.org/rec/journals/jco/LiL12.bib},
  bibsource = {dblp computer science bibliography, https://dblp.org},
  _bib2doi_selected = {dblp:/rec/journals/jco/LiL12.bib},
  _bib2doi_confirmed = {true},
}

@article{linkage_planar,
  author = {Shasha Li and Xueliang Li and Wenli Zhou},
  title = {Sharp bounds for the generalized connectivity $\kappa_3({G})$},
  journal = {Discret. Math.},
  volume = {310},
  number = {15-16},
  pages = {2147--2163},
  year = {2010},
  url = {https://doi.org/10.1016/j.disc.2010.04.011},
  doi = {10.1016/j.disc.2010.04.011},
  timestamp = {Fri, 12 Feb 2021 00:00:00 +0100},
  biburl = {https://dblp.org/rec/journals/dm/LiLZ10.bib},
  bibsource = {dblp computer science bibliography, https://dblp.org},
  _bib2doi_selected = {dblp:/rec/journals/dm/LiLZ10.bib},
  _bib2doi_confirmed = {true},
  _bib2doi_finished = {true},
}

@article{k_3_npc,
  author = {Lily Chen and Xueliang Li and Mengmeng Liu and Yaping Mao},
  title = {A solution to a conjecture on the generalized connectivity of graphs},
  journal = {J. Comb. Optim.},
  volume = {33},
  number = {1},
  pages = {275--282},
  year = {2017},
  url = {https://doi.org/10.1007/s10878-015-9955-x},
  doi = {10.1007/s10878-015-9955-x},
  timestamp = {Sun, 22 Oct 2023 01:00:00 +0200},
  biburl = {https://dblp.org/rec/journals/jco/ChenLLM17.bib},
  bibsource = {dblp computer science bibliography, https://dblp.org},
  _bib2doi_selected = {dblp:/rec/journals/jco/ChenLLM17.bib},
  _bib2doi_confirmed = {true},
}

@article{bodlaender,
  author = {Hans L. Bodlaender and Fedor V. Fomin and Petr A. Golovach and Yota Otachi and Erik Jan van Leeuwen},
  title = {Parameterized Complexity of the Spanning Tree Congestion Problem},
  journal = {Algorithmica},
  volume = {64},
  number = {1},
  pages = {85--111},
  year = {2012},
  url = {https://doi.org/10.1007/s00453-011-9565-7},
  doi = {10.1007/s00453-011-9565-7},
  timestamp = {Sun, 02 Jun 2019 01:00:00 +0200},
  biburl = {https://dblp.org/rec/journals/algorithmica/BodlaenderFGOL12.bib},
  bibsource = {dblp computer science bibliography, https://dblp.org},
  _bib2doi_selected = {dblp:/rec/journals/algorithmica/BodlaenderFGOL12.bib},
  _bib2doi_confirmed = {true},
}

@article{cisst,
  author = {Jun Yuan and Shan Liu and Shangwei Lin and Aixia Liu},
  title = {Completely independent {S}teiner trees and corresponding tree connectivity},
  journal = {International Journal of Parallel, Emergent and Distributed Systems},
  volume = {0},
  number = {0},
  pages = {1--14},
  year = {2025},
  publisher = {Taylor \& Francis},
  doi = {10.1080/17445760.2025.2609134},
  url = {https://doi.org/10.1080/17445760.2025.2609134},
  eprint = {https://doi.org/10.1080/17445760.2025.2609134},
  _bib2doi_finished = {true},
}

@article{ghouilahouri,
  title = {Une condition suffisante dexistence dun circuit hamiltonien},
  author = {Ghouila-Houri, Alain},
  journal = {Comptes Rendus Hebdomadaires Des Seances De L Academie Des Sciences},
  volume = {251},
  number = {4},
  pages = {495--497},
  year = {1960},
  publisher = {GAUTHIER-VILLARS/EDITIONS ELSEVIER 23 RUE LINOIS, 75015 PARIS, FRANCE},
  _bib2doi_finished = {true},
}

@article{courcelle1990monadic,
  title = {The monadic second-order logic of graphs. {I}. {R}ecognizable sets of finite graphs},
  author = {Courcelle, Bruno},
  journal = {Information and computation},
  volume = {85},
  number = {1},
  pages = {12--75},
  year = {1990},
  publisher = {Elsevier},
  doi = {10.1016/0890-5401(90)90043-H},
  timestamp = {Fri, 12 Feb 2021 00:00:00 +0100},
  biburl = {https://dblp.org/rec/journals/iandc/Courcelle90.bib},
  bibsource = {dblp computer science bibliography, https://dblp.org},
  _bib2doi_selected = {dblp:/rec/journals/iandc/Courcelle90.bib},
  _bib2doi_confirmed = {true},
}

@inproceedings{maximal_planar,
  author = {Toru Hasunuma},
  editor = {Ludek Kucera},
  title = {Completely Independent Spanning Trees in Maximal Planar Graphs},
  booktitle = {Graph-Theoretic Concepts in Computer Science, 28th International Workshop, {WG} 2002, Cesky Krumlov, Czech Republic, June 13-15, 2002, Revised Papers},
  series = {Lecture Notes in Computer Science},
  pages = {235--245},
  publisher = {Springer},
  year = {2002},
  url = {https://doi.org/10.1007/3-540-36379-3\_21},
  doi = {10.1007/3-540-36379-3\_21},
  timestamp = {Fri, 26 May 2017 01:00:00 +0200},
  biburl = {https://dblp.org/rec/conf/wg/Hasunuma02.bib},
  bibsource = {dblp computer science bibliography, https://dblp.org},
  _bib2doi_selected = {dblp:/rec/conf/wg/Hasunuma02.bib},
  _bib2doi_confirmed = {true},
}

@article{survey,
  author = {Baolei Cheng and Dajin Wang and Jianxi Fan},
  title = {Independent Spanning Trees in Networks: {A} Survey},
  journal = {{ACM} Comput. Surv.},
  volume = {55},
  number = {14s},
  pages = {335:1--335:29},
  year = {2023},
  url = {https://doi.org/10.1145/3591110},
  doi = {10.1145/3591110},
  timestamp = {Wed, 24 Jan 2024 00:00:00 +0100},
  biburl = {https://dblp.org/rec/journals/csur/ChengWF23.bib},
  bibsource = {dblp computer science bibliography, https://dblp.org},
  _bib2doi_selected = {dblp:/rec/journals/csur/ChengWF23.bib},
  _bib2doi_confirmed = {true},
}

@article{Tutte1961,
  title = {On the Problem of Decomposing a Graph into n Connected Factors},
  volume = {s1-36},
  issn = {0024-6107},
  url = {http://dx.doi.org/10.1112/jlms/s1-36.1.221},
  doi = {10.1112/jlms/s1-36.1.221},
  number = {1},
  journal = {Journal of the London Mathematical Society},
  publisher = {Wiley},
  author = {Tutte, W. T.},
  year = {1961},
  pages = {221–230},
  _bib2doi_finished = {true},
}

@article{NashWilliams1961,
  title = {Edge-Disjoint Spanning Trees of Finite Graphs},
  volume = {s1-36},
  issn = {0024-6107},
  url = {http://dx.doi.org/10.1112/jlms/s1-36.1.445},
  doi = {10.1112/jlms/s1-36.1.445},
  number = {1},
  journal = {Journal of the London Mathematical Society},
  publisher = {Wiley},
  author = {Nash-Williams, C. St.J. A.},
  year = {1961},
  pages = {445–450},
  _bib2doi_finished = {true},
}

@article{itai,
  title = {The Multi-Tree Approach to Reliability in Distributed Networks},
  journal = {Inf. Comput.},
  volume = {79},
  number = {1},
  pages = {43--59},
  year = {1988},
  issn = {0890-5401},
  doi = {10.1016/0890-5401(88)90016-8},
  url = {https://doi.org/10.1016/0890-5401(88)90016-8},
  author = {Alon Itai and Michael Rodeh},
  abstract = {Consider a network of asynchronous processors communicating by sending messages over unreliable lines. There are many advantages to restricting all communications to a spanning tree. To overcome the possible failure of k′ < k edges, we describe a communication protocol which uses k rooted spanning trees having the property that for every vertex ν the paths from ν to the root are edge-disjoint. An algorithm to find two such trees in a 2-edge connected graph is described that runs in time proportional in the number of edges in the graph. This algorithm has a distributed version which finds the two trees even when a single edge fails during their construction. The two trees then may be used to transform certain centralized algorithms to distributed, reliable, and efficient ones.},
  timestamp = {Fri, 12 Feb 2021 00:00:00 +0100},
  biburl = {https://dblp.org/rec/journals/iandc/ItaiR88.bib},
  bibsource = {dblp computer science bibliography, https://dblp.org},
  _bib2doi_old_doi = {https://doi.org/10.1016/0890-5401(88)90016-8},
  _bib2doi_selected = {dblp:/rec/journals/iandc/ItaiR88.bib},
  _bib2doi_confirmed = {true},
  _bib2doi_finished = {true},
}

@article{chartrand,
  author = {Zehavi, Avram and Itai, Alon},
  title = {Three tree-paths},
  journal = {Journal of Graph Theory},
  volume = {13},
  number = {2},
  pages = {175-188},
  doi = {https://doi.org/10.1002/jgt.3190130205},
  url = {https://onlinelibrary.wiley.com/doi/abs/10.1002/jgt.3190130205},
  eprint = {https://onlinelibrary.wiley.com/doi/pdf/10.1002/jgt.3190130205},
  abstract = {Abstract Itai and Rodeh [3] have proved that for any 2-connected graph G and any vertex s ∈ G there are two spanning trees such that the paths from any other vertex to s on the trees are disjoint. In this paper the result is generalized to 3-connected graphs.},
  year = {1989},
  timestamp = {Fri, 02 Oct 2020 01:00:00 +0200},
  biburl = {https://dblp.org/rec/journals/jgt/ZehaviI89.bib},
  bibsource = {dblp computer science bibliography, https://dblp.org},
  _bib2doi_selected = {dblp:/rec/journals/jgt/ZehaviI89.bib},
  _bib2doi_confirmed = {true},
  _bib2doi_finished = {true},
}

@article{general_edge,
  title = {On the generalized (edge-)connectivity of graphs},
  author = {Xueliang Li and Yaping Mao and Yuefang Sun},
  journal = {Australas. {J} Comb.},
  year = {2014},
  volume = {58},
  pages = {304--319},
  url = {http://ajc.maths.uq.edu.au/pdf/58/ajc\_v58\_p304.pdf},
  timestamp = {Wed, 11 Mar 2020 00:00:00 +0100},
  biburl = {https://dblp.org/rec/journals/ajc/LiMS14.bib},
  bibsource = {dblp computer science bibliography, https://dblp.org},
  _bib2doi_selected = {dblp:/rec/journals/ajc/LiMS14.bib},
  _bib2doi_confirmed = {true},
  _bib2doi_finished = {true},
}

@book{Steiner_book,
  title = {The {S}teiner Tree Problem},
  isbn = {9783322802910},
  issn = {0932-7134},
  url = {http://dx.doi.org/10.1007/978-3-322-80291-0},
  doi = {10.1007/978-3-322-80291-0},
  journal = {Advanced Lectures in Mathematics},
  publisher = {Vieweg+Teubner Verlag},
  author = {Pr\"{o}mel, Hans J\"{u}rgen and Steger, Angelika},
  year = {2002},
  _bib2doi_finished = {true},
}

@article{rainbow_trees,
  author = {Gary Chartrand and Futaba Okamoto and Ping Zhang},
  title = {Rainbow trees in graphs and generalized connectivity},
  journal = {Networks},
  volume = {55},
  number = {4},
  pages = {360--367},
  keywords = {rainbow tree, rainbow coloring, rainbow index, internally disjoint trees connecting a set of vertices, k-connectivity},
  doi = {10.1002/net.20339},
  url = {https://doi.org/10.1002/net.20339},
  eprint = {https://onlinelibrary.wiley.com/doi/pdf/10.1002/net.20339},
  abstract = {Abstract An edge-colored tree T is a rainbow tree if no two edges of T are assigned the same color. Let G be a nontrivial connected graph of order n and let k be an integer with 2 ≤ k ≤ n. A k-rainbow coloring of G is an edge coloring of G having the property that for every set S of k vertices of G, there exists a rainbow tree T in G such that S ⊆ V(T). The minimum number of colors needed in a k-rainbow coloring of G is the k-rainbow index of G. For every two integers k and n ≥ 3 with 3 ≤ k ≤ n, the k-rainbow index of a unicyclic graph of order n is determined. For a set S of vertices in a connected graph G of order n, a collection {T1,T2,…,Tℓ} of trees in G is said to be internally disjoint connecting S if these trees are pairwise edge-disjoint and V(Ti) ∩ V(Tj) = S for every pair i,j of distinct integers with 1 ≤ i,j ≤ ℓ. For an integer k with 2 ≤ k ≤ n, the k-connectivity κk(G) of G is the greatest positive integer ℓ for which G contains at least ℓ internally disjoint trees connecting S for every set S of k vertices of G. It is shown that κk(Kn)=n−⌈k/2⌉ for every pair k,n of integers with 2 ≤ k ≤ n. For a nontrivial connected graph G of order n and for integers k and ℓ with 2 ≤ k ≤ n and 1 ≤ ℓ ≤ κk(G), the (k,ℓ)-rainbow index rxk,ℓ(G) of G is the minimum number of colors needed in an edge coloring of G such that G contains at least ℓ internally disjoint rainbow trees connecting S for every set S of k vertices of G. The numbers rxk,ℓ(Kn) are determined for all possible values k and ℓ when n ≤ 6. It is also shown that for ℓ ϵ {1, 2}, rx3,ℓ(Kn) = 3 for all n ≥ 6. © 2009 Wiley Periodicals, Inc. NETWORKS, 2010},
  year = {2010},
  timestamp = {Sun, 28 May 2017 01:00:00 +0200},
  biburl = {https://dblp.org/rec/journals/networks/ChartrandOZ10.bib},
  bibsource = {dblp computer science bibliography, https://dblp.org},
  _bib2doi_old_doi = {https://doi.org/10.1002/net.20339},
  _bib2doi_selected = {dblp:/rec/journals/networks/ChartrandOZ10.bib},
  _bib2doi_confirmed = {true},
  _bib2doi_finished = {true},
}

@article{Cheriyan_directed,
  title = {Hardness and Approximation Results for Packing {S}teiner Trees},
  volume = {45},
  issn = {1432-0541},
  url = {https://doi.org/10.1007/s00453-005-1188-4},
  doi = {10.1007/s00453-005-1188-4},
  number = {1},
  journal = {Algorithmica},
  publisher = {Springer Science and Business Media LLC},
  author = {Joseph Cheriyan and Mohammad R. Salavatipour},
  year = {2006},
  month = {may},
  pages = {21--43},
  timestamp = {Wed, 17 May 2017 01:00:00 +0200},
  biburl = {https://dblp.org/rec/journals/algorithmica/CheriyanS06.bib},
  bibsource = {dblp computer science bibliography, https://dblp.org},
  _bib2doi_selected = {dblp:/rec/journals/algorithmica/CheriyanS06.bib},
  _bib2doi_confirmed = {true},
  _bib2doi_finished = {true},
}

@misc{directed_pendant,
  title = {Internally-disjoint Pendant {S}teiner Trees in Digraphs},
  author = {Shanshan Yu and Yuefang Sun},
  year = {2026},
  eprint = {2505.00298},
  archiveprefix = {arXiv},
  primaryclass = {math.CO},
  url = {https://arxiv.org/abs/2505.00298},
  doi = {10.48550/arXiv.2505.00298},
  _bib2doi_finished = {true},
}

@article{Huck1994,
  title = {Independent trees in graphs},
  volume = {10},
  issn = {1435-5914},
  url = {https://doi.org/10.1007/BF01202468},
  doi = {10.1007/BF01202468},
  number = {1},
  journal = {Graphs Comb.},
  publisher = {Springer Science and Business Media LLC},
  author = {Andreas Huck},
  year = {1994},
  month = {mar},
  pages = {29--45},
  timestamp = {Thu, 04 Jun 2020 01:00:00 +0200},
  biburl = {https://dblp.org/rec/journals/gc/Huck94.bib},
  bibsource = {dblp computer science bibliography, https://dblp.org},
  _bib2doi_old_doi = {10.1007/bf01202468},
  _bib2doi_selected = {dblp:/rec/journals/gc/Huck94.bib},
  _bib2doi_confirmed = {true},
  _bib2doi_finished = {true},
}

@article{kriesell,
  title = {Edge-disjoint trees containing some given vertices in a graph},
  journal = {J. Comb. Theory {B}},
  volume = {88},
  number = {1},
  pages = {53--65},
  year = {2003},
  issn = {0095-8956},
  doi = {10.1016/S0095-8956(02)00013-8},
  url = {https://doi.org/10.1016/S0095-8956(02)00013-8},
  author = {Matthias Kriesell},
  keywords = {Edge-connectivity, Spanning tree, Steiner tree, Connected factor, Line graph, Domination, Domatic number},
  abstract = {We show that for any two natural numbers k,ℓ there exist (smallest natural numbers fℓ(k)(gℓ(k)) such that for any fℓ(k)-edge-connected (gℓ(k)-edge-connected) vertex set A of a graph G with |A|⩽ℓ(|V(G)−A|⩽ℓ) there exists a system T of k edge-disjoint trees such that A⊆V(T) for each T∈T. We determine f3(k)=⌊8k+36⌋. Furthermore, we determine for all natural numbers ℓ,k the smallest number fℓ∗(k) such that every fℓ∗(k)-edge-connected graph on at most ℓ vertices contains a system of k edge-disjoint spanning trees, and give applications to line graphs.},
  timestamp = {Fri, 07 Jun 2024 01:00:00 +0200},
  biburl = {https://dblp.org/rec/journals/jct/Kriesell03.bib},
  bibsource = {dblp computer science bibliography, https://dblp.org},
  _bib2doi_old_doi = {https://doi.org/10.1016/S0095-8956(02)00013-8},
  _bib2doi_selected = {dblp:/rec/journals/jct/Kriesell03.bib},
  _bib2doi_confirmed = {true},
  _bib2doi_finished = {true},
}

@article{chartrand_general,
  author = {G. Chartrand and S. F. Kapoor and L. Lesniak and D. R. Lick},
  title = {Generalized connectivity in graphs},
  journal = {Bull. Bombay Math. Colloq.},
  volume = {2},
  year = {1984},
  pages = {1--6},
  mrnumber = {2107429},
  _bib2doi_finished = {true},
}

@inproceedings{edge_best,
  title = {Five edge-independent spanning trees},
  journal = {Procedia Computer Science},
  volume = {223},
  pages = {223--230},
  year = {2023},
  note = {XII Latin-American Algorithms, Graphs and Optimization Symposium (LAGOS 2023)},
  issn = {1877-0509},
  doi = {10.1016/j.procs.2023.08.232},
  url = {https://doi.org/10.1016/j.procs.2023.08.232},
  author = {Alonso Ali and Orlando Lee},
  keywords = {edge-independent spanning trees, edge-connectivity, graph decomposition},
  abstract = {Let G be a graph and let r be a fixed vertex of G. Two spanning trees T1 and T2 of G rooted at r are edge-independent if for every vertex v ϵ V(G), the paths from v to r in T1 and from v to r in T2 are edge-disjoint. Itai and Zehavi conjectured that for every k-edge-connected graph and any vertex r ϵ V(G) there are k edge-independent spanning trees rooted at r (Edge-Independent Spanning Trees Conjecture). Itai and Rodeh proved the case k = 2, Schlipf and Schmidt proved the case k = 3, and Hoyer and Thomas proved the case k = 4 of the conjecture. In this paper, we prove the case k = 5.},
  timestamp = {Sat, 08 Jun 2024 01:00:00 +0200},
  biburl = {https://dblp.org/rec/conf/lagos/AliL23.bib},
  bibsource = {dblp computer science bibliography, https://dblp.org},
  booktitle = {Proceedings of the {XII} Latin-American Algorithms, Graphs and Optimization Symposium, {LAGOS} 2023, Huatulco, Mexico, September 18-22, 2023},
  publisher = {Elsevier},
  editor = {Cristina G. Fernandes and Sergio Rajsbaum},
  series = {Procedia Computer Science},
  _bib2doi_old_doi = {https://doi.org/10.1016/j.procs.2023.08.232},
  _bib2doi_selected = {dblp:/rec/conf/lagos/AliL23.bib},
  _bib2doi_confirmed = {true},
  _bib2doi_finished = {true},
}

@article{node_best,
  author = {Curran, Sean and Lee, Orlando and Yu, Xingxing},
  title = {Finding Four Independent Trees},
  journal = {SIAM Journal on Computing},
  volume = {35},
  number = {5},
  pages = {1023-1058},
  year = {2006},
  doi = {10.1137/S0097539703436734},
  url = {https://doi.org/10.1137/S0097539703436734},
  eprint = {https://doi.org/10.1137/S0097539703436734},
  abstract = {Motivated by a multitree approach to the design of reliable communication protocols, Itai and Rodeh gave a linear time algorithm for finding two independent spanning trees in a 2-connected graph. Cheriyan and Maheshwari gave an \$O(|V|^2)\$ algorithm for finding three independent spanning trees in a 3-connected graph. In this paper we present an \$O(|V|^3)\$ algorithm for finding four independent spanning trees in a 4-connected graph. We make use of chain decompositions of 4-connected graphs.},
  timestamp = {Sat, 27 May 2017 01:00:00 +0200},
  biburl = {https://dblp.org/rec/journals/siamcomp/CurranLY06.bib},
  bibsource = {dblp computer science bibliography, https://dblp.org},
  _bib2doi_selected = {dblp:/rec/journals/siamcomp/CurranLY06.bib},
  _bib2doi_confirmed = {true},
}

@article{locally_twisted,
  title = {Improving the diameters of completely independent spanning trees in locally twisted cubes},
  journal = {Inf. Process. Lett.},
  volume = {141},
  pages = {22--24},
  year = {2019},
  issn = {0020-0190},
  doi = {10.1016/j.ipl.2018.09.006},
  url = {https://doi.org/10.1016/j.ipl.2018.09.006},
  author = {Kung{-}Jui Pai and Jou{-}Ming Chang},
  keywords = {Interconnection networks, Completely independent spanning trees, Locally twisted cubes, Diameter},
  abstract = {Let G be a graph and denote diam(G) the diameter of G, i.e., the greatest distance between any pair of vertices in G. A set of spanning trees of G are called completely independent spanning trees (CISTs for short) if for every pair of vertices x,y∈V(G), the paths joining x and y in any two trees have neither vertex nor edge in common, except x and y. Pai and Chang (2016) [12] recently proposed a unified approach to recursively construct two CISTs in several hypercube-variant networks, including locally twisted cubes. For every kind of n-dimensional variant cube, the diameters of two CISTs for their construction are 2n−1. In this note, we provide a new scheme to construct two CISTs T1 and T2 in locally twisted cubes LTQn, and thereby prove the following improvement: for i∈{1,2}, diam(Ti)=2n−2 if n=4; and diam(Ti)=2n−3 if n⩾5. In particular, the construction of CISTs for LTQ4 is optimal in the sense of diameter.},
  timestamp = {Fri, 02 Nov 2018 00:00:00 +0100},
  biburl = {https://dblp.org/rec/journals/ipl/PaiC19.bib},
  bibsource = {dblp computer science bibliography, https://dblp.org},
  _bib2doi_old_doi = {https://doi.org/10.1016/j.ipl.2018.09.006},
  _bib2doi_selected = {dblp:/rec/journals/ipl/PaiC19.bib},
  _bib2doi_confirmed = {true},
  _bib2doi_finished = {true},
}

@article{edmonds1973edge,
  title = {Edge-disjoint branchings},
  author = {Edmonds, Jack},
  journal = {Combinatorial algorithms},
  pages = {91--96},
  year = {1973},
  publisher = {Academic Press},
  _bib2doi_finished = {true},
}

@techreport{berczi2009packing,
  author = {B\'erczi, Krist\'of and Frank, Andr\'as},
  title = {Packing Arborescences},
  institution = {Egerv\'ary Research Group, E\"otv\"os University},
  year = {2009},
  number = {TR-2009-04},
  month = {May},
  url = {https://egres.elte.hu/tr/egres-09-04.pdf},
  _bib2doi_finished = {true},
}

@misc{szigeti2017survey,
  author = {Szigeti, Zolt\'{a}n},
  title = {Packing Arborescences: A Survey},
  year = {2017},
  month = {April 20},
  howpublished = {Slides, G-SCOP, Univ. Grenoble Alpes},
  url = {https://pagesperso.g-scop.grenoble-inp.fr/~szigetiz/slides/tokyo_1.pdf},
  _bib2doi_finished = {true},
}


\end{document}